\documentclass[11pt]{article}
\usepackage[letterpaper,top=2cm,bottom=2cm,left=3cm,right=3cm,marginparwidth=1.75cm]{geometry}
\usepackage[utf8]{inputenc}
\usepackage[T1]{fontenc}
\usepackage{amsmath}
\usepackage{amssymb}
\usepackage{amsthm}
\usepackage{amsfonts}
\usepackage{amsxtra}
\usepackage{amscd}
\usepackage{stmaryrd}
\usepackage{mathrsfs}
\usepackage{mathtools}
\usepackage[mathscr]{eucal}
\usepackage[all]{xy}
\usepackage{color}
\usepackage{authblk}
\usepackage{enumerate}
\usepackage{tikz}
\usepackage{tikz-cd}
\usepackage{indentfirst}
\usepackage{bm}
\usepackage{fancyhdr}
\usepackage{latexsym}
\usepackage{graphicx}
\usepackage[export]{adjustbox}
\usepackage[colorlinks=true, allcolors=blue]{hyperref}
\usepackage{cases}
\usepackage{float}
\usepackage{caption}
\usepackage{booktabs}

\usepackage{palatino}
\usepackage{authblk}

\newtheorem{theorem}{Theorem}[section]
\newtheorem{lemma}[theorem]{Lemma}
\newtheorem{corollary}[theorem]{Corollary}
\newtheorem{proposition}[theorem]{Proposition}

\theoremstyle{definition}
\newtheorem{definition}[theorem]{Definition}
\newtheorem{example}[theorem]{Example}

\theoremstyle{remark}
\newtheorem{remark}[theorem]{Remark}

\numberwithin{equation}{section}
\everymath{\displaystyle}

\title{On the KAK Decomposition and Equivalence Classes}

\author[1,2,3]{Dawei Ding}
\author[1]{Yu Liu}
\author[1]{Zi-Wen Liu}

\affil[1]{Yau Mathematical Sciences Center, Tsinghua University, Beijing 100084, China}
\affil[2]{Center for Mathematics and Interdisciplinary Sciences, Fudan University, Shanghai, 200433, China}
\affil[3]{Shanghai Institute for Mathematics and Interdisciplinary Sciences, Shanghai, 200433, China}
\date{\today}

\begin{document}

\maketitle

\begin{abstract}
The KAK decomposition is a fundamental tool in Lie theory and quantum computing. Despite its widespread use, the mathematical foundations remain incomplete, particularly regarding the precise conditions for the decomposition and the characterization of equivalence classes under multiplication by elements of $K$. Here, we present a mathematical theory of the KAK decomposition for connected compact semisimple Lie groups and derive the decomposition for \(\operatorname{SU}(4)\). In particular, we clarify the relationship between various definitions of a Cartan decomposition in the literature and give a complete proof of a general KAK decomposition theorem.
We then distinguish two distinct notions of KAK equivalence classes—double coset equivalence and projective equivalence—thereby addressing mathematical inconsistencies regarding KAK classification in the literature.
Specifically, for \(\operatorname{SU}(4)\), we show that local equivalence classes under multiplication by $\operatorname{SU}(2)\otimes \operatorname{SU}(2)$ are geometrically represented not by the usual ``Weyl chamber'' as claimed in the existing literature. Instead, the ``Weyl chamber'' is only recovered by the projective-local equivalence which disregards global phases.
We develop a systematic theory for determining equivalence and uniqueness for both notions of equivalence.
Our work establishes a rigorous Lie-theoretic foundation for the theory of quantum gates and circuits.
\end{abstract}

\tableofcontents

\section{Introduction}
The KAK decomposition is an important tool in Lie theory, offering a powerful method to decompose a semisimple or reductive Lie group into a product of a compact subgroup $K$, an Abelian subgroup $A$, and the same compact subgroup $K$~\cite{Helgason2001Differential, Knapp1996Lie}. In the context of quantum computing, this decomposition finds its most prominent application in the analysis of two-qubit gates. Specifically, the decomposition of \(\operatorname{SU}(4)\) into the form \(\operatorname{SU}(4) = KAK\), where \(K \cong \operatorname{SU}(2) \otimes \operatorname{SU}(2)\) represents local operations (two single-qubit gates) and \(A\) captures the non-local interactions, provides an indispensable mathematical framework for quantum circuit synthesis, compilation, and classification \cite{Bullock2004Canonical, peterson2020fixed}. Its broad utility has led to its incorporation into most modern quantum programming software packages \cite{pennylane, qiskit, cirq}.

Despite its widespread use, existing accounts of the KAK decomposition in the literature  (see e.g.,~\cite{Zhang:2003zz, dagli2008general, Di2008Cartan, Drury2008Constructive, Earp2005Constructive, Khaneja:2000stb, Perrier2024Solving, wierichsRecursiveCartanDecompositions2025, dalessandroQuantumSymmetriesCartan2007}) have lacked mathematical precision or rigor in several critical respects, mostly relying on heuristic arguments rather than a rigorous Lie-theoretic foundation. More specifically, there exist the following main gaps:
\begin{enumerate}
\item The definition of Cartan decomposition for a Lie algebra was incomplete. In particular, the definiteness conditions were not specified.
\item  The conditions under which the KAK decomposition exists for a given Lie group were not entirely accurate. For instance, the compact and non-compact cases were not distinguished.
\item The definitions and conclusions of the equivalence classes of two-qubit gates under the action of local one-qubit operations were not presented in a mathematically precise manner. In particular, the role of global phases was not considered.
\item The equivalence classes under multiplication by $K$ were not characterized, and their uniqueness was unproven. Specifically for $\operatorname{SU}(4)$, it was not shown that the ``Weyl chamber'' representation is unique.
\item In the mathematical literature, a Weyl chamber is an unbounded cone determined by the reflection hyperplanes of the Weyl group. In quantum computing, however, the compact region used to parametrize these equivalence classes is often also called the ``Weyl chamber'', an inconsistency in the terminology that can be misleading from a Lie-theoretic perspective.
\item The Weyl group acts on the Cartan subalgebra, which is a complex vector space.  It was not shown that the action of the Weyl group can be restricted to the real maximal Abelian vector space.
\end{enumerate}

In this work, we resolve these issues and establish a fully rigorous mathematical foundation for the KAK decomposition and equivalence.  Our main  contributions are summarized as follows.

First, building on the foundational work of Cartan \cite{Cartan1926Classe} and subsequent developments in Lie theory \cite{Knapp1996Lie, Helgason2001Differential}, we give a precise definition of the Cartan decomposition, demonstrate the equivalence of various definitions found in the mathematical literature, and explicitly contrast them with that of the quantum computing literature. We then present a complete and rigorous proof of the KAK decomposition for connected compact semisimple Lie groups (Theorem~\ref{theorem_:kak_theorem_compact_version}). This is a classic theorem when the Lie group is simply-connected, but a complete proof tailored to the needs of quantum information, that is,~a KAK decomposition theorem for connected compact semisimple Lie groups (see \cite{Khaneja:2000stb,dagli2008general,wierichsRecursiveCartanDecompositions2025}), is not found in the literature.

Second, we develop a general theory for determining equivalence classes in a simply-connected compact Lie group \(U\) under the left and right action of a subgroup \(K\). We introduce two distinct notions of equivalence that are relevant in different contexts.
\begin{itemize}
\item \textbf{Double coset equivalence (mathematical version):} This is a standard notion in group theory, where two elements are considered equivalent if they lie in the same double coset \(K u K\). We will show that this classification of equivalence classes is governed by the \(K\)-lattice and the affine Weyl group, and it corresponds to the concept of ``local equivalence''~\cite{Zhang:2003zz, wattsMetricStructureSpace2013}. The definition leads to the quotient space $\overline{Q_0}$, which we call the $T$-cell.

\item \textbf{Projective equivalence (quantum version):} In quantum information theory, two-qubit gates that differ by a global phase are physically indistinguishable. Thus we consider two gates equivalent if they are related by local operations together with an arbitrary global phase. This notion, which we call ``{projective-locally equivalent}'', is governed by the $p$-lattice and the projective affine Weyl group. This definition yields the familiar geometric representation of two-qubit gates known as the ``Weyl chamber,'' which we instead call the $P$-cell to be consistent with the terminology in the mathematical literature.
\end{itemize}

Third, we apply our general theory to \(\operatorname{SU}(4)\). Using the KAK decomposition theorem, we first obtain the explicit decomposition for \(\operatorname{SU}(4)\). We compute its Weyl group and the \(\Sigma\)-lattice (or \(K\)-lattice)  and the corresponding affine Weyl group $\Gamma_{\Sigma}$ (or $\Gamma_{K}$) explicitly, yielding a criterion for determining distinct double coset equivalence classes and presenting a different geometric representation from \cite{Zhang:2003zz}.
Additionally, we compute the $p$-lattice, which leads to the projective affine Weyl group $\Gamma_{p}$. By analyzing the action of the group $\Gamma_p$ on a real maximal Abelian subspace, we obtain a rigorous criterion for classifying projective equivalence classes. This provides a rigorous derivation of the familiar three-dimensional geometric representation where each point in the ``Weyl chamber'' corresponds to a unique projective equivalence class of two-qubit gates.

This paper is organized as follows.
In  Section~\ref{section_:proof_of_kak_theorem}, we present a rigorous proof of the general KAK decomposition theorem for connected compact semisimple Lie groups, clarifying the definitions of a Cartan decomposition.
In Section~\ref{section_:determing_equivalence_classes}, we develop a general theory for determining equivalence classes, distinguishing between the mathematical double coset version and the projective version that disregards global phases.
Section~\ref{section_:application_SU4} applies this theory to \(\operatorname{SU}(4)\), deriving the affine Weyl group, the projective affine Weyl group, and the resulting geometric representations of two-qubit gate equivalence classes with respect to the two different notions of equivalence.

\section{Proof of KAK decomposition theorem}
\label{section_:proof_of_kak_theorem}

Unless otherwise stated, $G$ is a real semisimple Lie group and $\mathfrak{g}_0$ a real semisimple Lie algebra.
First, we define the Cartan decomposition. There are two standard definitions in the literature.
\begin{definition}[\cite{Helgason2001Differential}]\label{definition_:Cartan decomposition-Helgason}
Let \(\mathfrak{g}_0\) be a semisimple Lie algebra over \(\mathbb{R}\), $\mathfrak{g}$ its complexification, \(\sigma\) the conjugation of \(\mathfrak{g}\) with respect to \(\mathfrak{g}_0\).  A vector space direct sum decomposition \(\mathfrak{g}_0=\mathfrak{k}_0 \oplus \mathfrak{p}_0\) of \(\mathfrak{g}_0\) into a Lie subalgebra \(\mathfrak{k}_0\) and a vector subspace \(\mathfrak{p}_0\) is called a \textit{Cartan decomposition} if there exists a compact real form \(\mathfrak{g}_k\) of $\mathfrak{g}$ such that
\[
\sigma \cdot \mathfrak{g}_k \subset \mathfrak{g}_k,
\quad
\mathfrak{k}_0=\mathfrak{g}_0 \cap \mathfrak{g}_k,
\quad
\mathfrak{p}_0=\mathfrak{g}_0 \cap\left(i \mathfrak{g}_k\right).
\]
\end{definition}

\begin{definition}[\cite{Knapp1996Lie}]\label{definition_:Cartan decomposition-Knapp96}
For any real semisimple Lie algebra $\mathfrak{g}_{0}$, a vector space direct sum decomposition\footnote{Note this is a direct sum decomposition for vector spaces, not in the sense of Lie algebras.}
\[
\mathfrak{g}_0=\mathfrak{k}_0 \oplus \mathfrak{p}_0
\]
is called a \textit{Cartan decomposition} if it satisfies
\begin{equation}\label{equation_:commutation_relations_in_Cartan_decomposition}
\left[\mathfrak{k}_0, \mathfrak{k}_0\right] \subset \mathfrak{k}_0, \quad\left[\mathfrak{k}_0, \mathfrak{p}_0\right] \subset \mathfrak{p}_0, \quad\left[\mathfrak{p}_0, \mathfrak{p}_0\right] \subset \mathfrak{k}_0,
\end{equation}
and the Killing form of $\mathfrak{g}_0$ satisfies
\[
B_{\mathfrak{g}_0} \text { is }
\left\{
\begin{array}{l}
\text {negative definite on } \mathfrak{k}_0 \\
\text {positive definite on } \mathfrak{p}_0 \\
\end{array}
\right. .
\]
\end{definition}

We can directly verify that the two definitions are equivalent (cf. Appendix~\ref{appendix_:Proof of the equivalence of two definitions-Cartan decomposition}). We especially note that the commutation relations \eqref{equation_:commutation_relations_in_Cartan_decomposition} are not sufficient: a Cartan decomposition must also satisfy negative and positive definiteness conditions. In previous literature, e.g.,~ \cite{Zhang:2003zz}, the definiteness condition was ignored. In fact, a true Cartan decomposition for the compact Lie algebra $\mathfrak{su}(4)$ satisfying the commutation relations \eqref{equation_:commutation_relations_in_Cartan_decomposition} can only be a trivial one: $\mathfrak{k}_{0} = \mathfrak{su}(4), \mathfrak{p}_{0} = 0$. However, for compact semisimple Lie groups, we prove that even without this definiteness condition, the desired KAK decomposition exists. 

A definition closely related to Cartan decomposition is the Cartan involution. 
This leads to the definition of an \textit{orthogonal symmetric Lie algebra}.
\begin{definition}[(Effective) orthogonal symmetric Lie algebra~\cite{Helgason2001Differential}]\label{definition_:orthogonal symmetric Lie algebra}
For any real Lie algebra $\mathfrak{g}$,
a pair \((\mathfrak{g}, \theta)\) is called an \textit{orthogonal symmetric Lie algebra} if it satisfies
\begin{enumerate}
\item
\(\theta\) is an \textit{involutive  automorphism} of $\mathfrak{g}$, i.e., $\theta^{2} = I, \theta\neq I$.
\item
\(\mathfrak{k}\), the set of fixed points of \(\theta\), is a compactly imbedded subalgebra of $\mathfrak{g}$.
\end{enumerate}
If in addition \(\mathfrak{k} \cap Z_{\mathfrak{g}}=\{0\}\), where \(Z_{\mathfrak{g}}\) denotes the center of $\mathfrak{g}$, then \((\mathfrak{g}, \theta)\) is called \textit{effective}. A pair \((G, K)\) is said to be \textit{associated with} \(\left(\mathfrak{g}_0, \theta\right)\) if \(G\) is a connected Lie group with Lie algebra \(\mathfrak{g}_0\) and \(K\) is a Lie subgroup of \(G\) with Lie algebra \(\mathfrak{k}_0\).
\end{definition}

\begin{example}[\protect{\cite[Example (a-b) in V.1]{Helgason2001Differential}}]\label{example_:orthogonal symmetric Lie algebra}
The following are two examples of orthogonal symmetric Lie algebras.
\begin{enumerate}
\item[a]
 Let \(\mathfrak{g}_0\) be a compact semisimple Lie algebra and \(\theta\) any involutive automorphism of \(\mathfrak{g}_0\). Then \((\mathfrak{g}_0, \theta)\) is an effective orthogonal symmetric Lie algebra (of the compact type).
\item[b]
Let \(\mathfrak{g}_0\) be a noncompact semisimple Lie algebra and let \(\mathfrak{g}_0=\mathfrak{k}_0\oplus\mathfrak{p}_0\) be any Cartan decomposition of \(\mathfrak{g}_0\). Let \(\theta\) denote an involutive automorphism of \(\mathfrak{g}_0\) given by \(\theta(T+X)=T-X (T \in \mathfrak{k}_0, X \in \mathfrak{p}_0)\). Then \((\mathfrak{g}_0, \theta)\) is an effective orthogonal symmetric Lie algebra (of the noncompact type).
\end{enumerate}
\end{example}

Now we give a duality relation relating the compact type and the noncompact type.

\begin{definition}\label{definition_:dual}
Let \((\mathfrak{g}_0, \theta)\) be an orthogonal symmetric Lie algebra. Suppose we have a  vector space direct sum decomposition \(\mathfrak{g}_0=\mathfrak{k}_{0} \oplus \mathfrak{p}_{0}\) that satisfies
\[
\left[\mathfrak{k}_0, \mathfrak{k}_0\right] \subset \mathfrak{k}_0, \quad\left[\mathfrak{k}_0, \mathfrak{p}_0\right] \subset \mathfrak{p}_0, \quad\left[\mathfrak{p}_0, \mathfrak{p}_0\right] \subset \mathfrak{k}_0.
\]
Let $\mathfrak{g}_0^*$ denote the subset \(\mathfrak{k}_0+i \mathfrak{p}_0\) of the complexification \(\mathfrak{g}_0^{\mathbb{C}}\) of \(\mathfrak{g}_0\). With the bracket operation inherited from \(\mathfrak{g}_0^{\mathbb{C}}\), $\mathfrak{g}_0^{*}$ is a Lie algebra over \(\mathbb{R}\). The mapping
\[\theta^{*}: T+i X \rightarrow T-i X \quad (T \in \mathfrak{k}_0, X \in \mathfrak{p}_0) \]
is an involutive automorphism of \(\mathfrak{g}_0^{*}\).
Then \((\mathfrak{g}_0^{*}, \theta^{*})\) is an orthogonal symmetric Lie algebra, called the \textit{dual} of \((\mathfrak{g}_0, \theta)\). In turn, \((\mathfrak{g}_0, \theta)\) is the \textit{dual} of \((\mathfrak{g}_0^{*}, \theta^{*})\).
\end{definition}

With the above definitions, we give the definition of Riemannian symmetric pair.

\begin{definition}[Riemannian symmetric pair~\cite{Helgason2001Differential}]\label{definition_:Riemannian_symmetric_pair}
The pair  \((G, K)\) is said to be a \textit{Riemannian symmetric pair}
if
\begin{enumerate}
\item
\(K\) is closed,
\item
\(\operatorname{Ad}_G(K)\) is compact,
\item
There exists an analytic involutive automorphism \(\tilde{\theta}\) of \(G\) such that
\[\left(G_{\tilde{\theta}}\right)_0 \subset K \subset G_{\tilde{\theta}} \]
where $G_{\tilde{\theta}}$ is the fixed points of $\tilde{\theta}$ in $G$ and $(G_{\tilde{\theta}})_{0}$ is the identity component of $G_{\tilde{\theta}}$.
\end{enumerate}
If only the first and third conditions are satisfied, then $(G,K)$ is said to be a \textit{symmetric pair}.
\end{definition}

As an example, we show that

\begin{proposition}[]\label{proposition_:example_SU_SO_Riemannian symmetric pair}
The pair ($G = \operatorname{SU}(n)$, $K = \operatorname{SO}(n)$) is a Riemannian symmetric pair for all $n \geq 2$.
\end{proposition}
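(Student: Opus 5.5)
We verify the three conditions of Definition~\ref{definition_:Riemannian_symmetric_pair} directly, using complex conjugation as the involution. Define $\tilde{\theta}\colon \operatorname{SU}(n)\to\operatorname{SU}(n)$ by $\tilde{\theta}(g)=\bar{g}$, the entrywise complex conjugate. Equivalently, since $g^{-1}=g^\dagger$, we have $\tilde\theta(g)=(g^{T})^{-1}$.

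First we check that $\tilde\theta$ is an analytic involutive automorphism.
\begin{enumerate}
\item It maps $\operatorname{SU}(n)$ into itself: $\bar g^\dagger \bar g=\overline{g^\dagger g}=I$ and $\det\bar g=\overline{\det g}=1$.
\item It is a homomorphism, since $\overline{gh}=\bar g\,\bar h$.
\item It is real-analytic, being the restriction of a real-linear map on $M_n(\mathbb{C})\cong\mathbb{R}^{2n^2}$.
\item It satisfies $\tilde\theta^2=\mathrm{id}$.
\item It is not the identity for $n\ge 2$; for instance $\operatorname{diag}(i,-i,1,\dots,1)\in\operatorname{SU}(n)$ is not fixed.
\end{enumerate}
Its fixed point set is $G_{\tilde\theta}=\{g\in\operatorname{SU}(n): \bar g=g\}=\operatorname{SU}(n)\cap M_n(\mathbb{R})$. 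A real unitary matrix is orthogonal, so $G_{\tilde\theta}=\{g\in O(n):\det g=1\}=\operatorname{SO}(n)$.

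Condition 3 requires $(G_{\tilde\theta})_0\subset K\subset G_{\tilde\theta}$. Since $G_{\tilde\theta}=\operatorname{SO}(n)=K$ exactly, the right inclusion is an equality and the left inclusion is trivial. We do not need connectedness of $\operatorname{SO}(n)$, though it is true and would give $(G_{\tilde\theta})_0=K$ as well.

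Condition 1 holds because $K=\operatorname{SO}(n)$ is the fixed-point set of a continuous map, hence closed in $\operatorname{SU}(n)$. One can also argue that $\operatorname{SO}(n)$ is compact, hence closed.

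Condition 2 asks that $\operatorname{Ad}_G(K)$ be compact. The map $\operatorname{Ad}\colon G\to \operatorname{GL}(\mathfrak{su}(n))$ is continuous and $K$ is compact, so the image is compact. Alternatively, $\operatorname{Ad}_G(G)$ is already compact since $G$ is compact.

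The only point needing some care is the identification $G_{\tilde\theta}=\operatorname{SO}(n)$. This holds because a matrix that is both real and unitary is orthogonal, and the determinant condition carries over from $\operatorname{SU}(n)$. Beyond that, the proof is a direct check of the definitions.
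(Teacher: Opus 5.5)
Your proof is correct and follows the paper's argument. Both use complex conjugation as $\tilde\theta$, identify its fixed-point set with $\operatorname{SU}(n)\cap M_n(\mathbb{R})=\operatorname{SO}(n)$, and get compactness of $\operatorname{Ad}_G(K)$ as a continuous image of a compact set. Your extra checks that $\tilde\theta$ is a nontrivial homomorphism, and your remark that $(G_{\tilde\theta})_0\subset G_{\tilde\theta}=K$ holds without using connectedness of $\operatorname{SO}(n)$, are small valid refinements of the paper's version.
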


\begin{proof}[Proof]
Check condition 1:  $\operatorname{SO}(n)$ is a compact Lie group defined by polynomial equations ($A^T A = I$, $\det(A)=1$) and is therefore a closed subset of $\operatorname{SU}(n)$.

Check condition 2: The adjoint representation $\operatorname{Ad}: G \rightarrow GL(\mathfrak{g})$ is a continuous map. Since $K = \operatorname{SO}(n)$ is compact, its image $\operatorname{Ad}_G(K)$ is the continuous image of a compact set and is therefore compact.

Check condition 3:
Let $\tilde{\theta}$ be the automorphism of $\operatorname{SU}(n)$ defined by complex conjugation
$$
\tilde{\theta}(g) = \bar{g}.
$$
For any $g \in \operatorname{SU}(n)$, $\bar{g}$ is also in $\operatorname{SU}(n)$. This map is analytic and involutive since  $\tilde{\theta}^2(g) = \bar{\bar{g}} = g$.

For all $g \in \operatorname{SU}(n)$ such that $\tilde{\theta}(g) = g$, i.e., $g = \bar{g}$, this means $g$ is a real matrix. The intersection of $\operatorname{SU}(n)$ (complex unitary matrices) with the real matrices is exactly the orthogonal group $\operatorname{SO}(n)$. So the fixed point set is
$$
G_{\tilde{\theta}} = \{ g \in \operatorname{SU}(n) \mid \bar{g} = g \} = \operatorname{SO}(n).
$$

Since $G_{\tilde{\theta}} = \operatorname{SO}(n)$ is connected for all $n\ge 2$, the identity component is the group itself. Therefore, $(G_{\tilde{\theta}})_0 = \operatorname{SO}(n)$.
Hence we have
\[(G_{\tilde{\theta}})_0 = \operatorname{SO}(n) \subset \operatorname{SO}(n) = K, \]
and
\[K = \operatorname{SO}(n) \subset \operatorname{SO}(n) = G_{\tilde{\theta}}.\]

In sum, the pair ($G = \operatorname{SU}(n)$, $K = \operatorname{SO}(n)$) is indeed a Riemannian symmetric pair for all $n \geq 2$. The proposition follows.
\qedhere
\end{proof}

In the following, let $\mathfrak{g}$ denote a real semisimple Lie algebra for simplicity. We first give a proof of the following theorem.
\begin{theorem}[{KAK decomposition theorem}]
\label{theorem_:KAK_Decomposition_note_version-refined}
Let \(G\) be a connected noncompact semisimple Lie group, \(\mathfrak{g}\)  its Lie algebra.
Let
\[\mathfrak{g}=\mathfrak{k} \oplus \mathfrak{p} \]
be a Cartan decomposition
and let \(\mathfrak{a}\) be an maximal Abelian subspace of \(\mathfrak{p}\).
Then
\[
G=K_{1} A K_{1},
\]
where \(A:=\exp \mathfrak{a}\) and \(K_{1}\) is the connected Lie subgroup of \(G\) corresponding to the Lie algebra \(\mathfrak{k} \subset \mathfrak{g}\).

Moreover, let
\[
\mathfrak{u}:=\mathfrak{k} \oplus i \mathfrak{p}
\]
be the dual real Lie algebra of $\mathfrak{g}$.
Let \(U\) be the connected and simply-connected Lie group with Lie algebra $\mathfrak{u}$. Then
\[
U=K_{2} A^{\prime} K_{2},
\]
where \(A^{\prime}:=\exp i \mathfrak{a}\) and $K_{2}$ is the connected Lie subgroup of \(U\) corresponding to the Lie algebra \(\mathfrak{k} \subset \mathfrak{u}\).
\end{theorem}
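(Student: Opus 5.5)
The plan is to follow the classical route of Helgason. The noncompact statement comes from the global Cartan decomposition $G = K_1 \exp\mathfrak{p}$ together with the conjugacy of maximal Abelian subspaces of $\mathfrak{p}$ under $\operatorname{Ad}(K_1)$. The compact statement comes from a separate compactness argument in $U$, which realizes the same mechanism through a minimization on the compact group.

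\emph{Noncompact case.} First, I will use the Cartan decomposition to define the involution $\theta = \mathrm{id}_{\mathfrak{k}}\oplus(-\mathrm{id}_{\mathfrak{p}})$. The form $B_\theta(X,Y) = -B(X,\theta Y)$ is then positive definite by the definiteness conditions of Definition~\ref{definition_:Cartan decomposition-Knapp96}. With respect to $B_\theta$, $\operatorname{ad}X$ is symmetric for $X\in\mathfrak{p}$ and skew for $X\in\mathfrak{k}$.

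Next, I will establish the global Cartan decomposition: the map $K_1\times\mathfrak{p}\to G$, $(k,X)\mapsto k\exp X$, is a diffeomorphism onto $G$. I would prove this first for $\operatorname{Ad}(G)$ using the polar decomposition of $GL(\mathfrak{g})$, where symmetric positive definite matrices are exponentials of symmetric ones. I would then lift to $G$ via covering arguments, since $\ker\operatorname{Ad} = Z(G)$ is discrete.

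Then I will prove $\mathfrak{p} = \bigcup_{k\in K_1}\operatorname{Ad}(k)\mathfrak{a}$. Fix a regular $H\in\mathfrak{a}$, so that its centralizer in $\mathfrak{p}$ is exactly $\mathfrak{a}$. Given $X\in\mathfrak{p}$, minimize or maximize $k\mapsto B(\operatorname{Ad}(k)X,H)$ over the compact group $\operatorname{Ad}(K_1)$; compactness follows from $\operatorname{Ad}(K_1)$ lying in the orthogonal group of $B_\theta$ and being closed. At a critical point $k_0$, for all $T\in\mathfrak{k}$ we get
\[ 0 = B([T,\operatorname{Ad}(k_0)X],H) = B(T,[\operatorname{Ad}(k_0)X,H]). \]
Since $[\mathfrak{p},\mathfrak{p}]\subset\mathfrak{k}$ and $B$ is definite on $\mathfrak{k}$, this gives $[\operatorname{Ad}(k_0)X,H]=0$, hence $\operatorname{Ad}(k_0)X\in\mathfrak{a}$.

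Combining the last two steps: for $g = k\exp X$, write $X = \operatorname{Ad}(k')Y$ with $Y\in\mathfrak{a}$. Then
\[ g = k k'\exp(Y)\,k'^{-1}\in K_1AK_1. \]

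\emph{Compact case.} The form $-B_{\mathfrak{u}}$ is positive definite on $\mathfrak{u}$, since $\mathfrak{u}$ is compact. The conjugacy argument above works verbatim with $i\mathfrak{p}$ and $i\mathfrak{a}$ in place of $\mathfrak{p}$ and $\mathfrak{a}$; it uses only the definiteness of $B_\mathfrak{u}$ on $\mathfrak{k}$ and $i\mathfrak{p}$. However, $U\neq K_2\exp(i\mathfrak{p})$ does not follow from a diffeomorphism argument. Instead, I will use that $U$ is simply connected, so the involution $\theta^*$ lifts to an automorphism $\tilde\theta$ of $U$; $K_2$ is the identity component of its fixed set, and it is closed, hence compact.

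Consider $M = U/K_2$ with the $U$-invariant Riemannian metric induced by $-B_\mathfrak{u}$ on $i\mathfrak{p}$. $M$ is compact, hence complete, and its geodesics through $o = eK_2$ are $t\mapsto\exp(tX)K_2$ for $X\in i\mathfrak{p}$. This is the standard symmetric-space fact, verified via the symmetry $s_o$ induced by $\tilde\theta$. By Hopf--Rinow, every $uK_2$ equals $\exp(X)K_2$ for some $X\in i\mathfrak{p}$, so $U = \exp(i\mathfrak{p})K_2$. The conjugacy step then gives $\exp(i\mathfrak{p}) = \bigcup_k k\exp(i\mathfrak{a})k^{-1}$, hence $U = K_2A'K_2$.

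The main obstacle is the compact case: justifying that $K_2$ is closed and that $\exp$ restricted to $i\mathfrak{p}$ surjects onto $U/K_2$. This step uses simple connectivity to lift $\theta^*$, together with the identification of geodesics in the naturally reductive homogeneous space. I expect to cite Helgason for the geodesic description rather than reprove it.
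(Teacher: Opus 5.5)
Your proof is correct, but it takes a different route from the paper's.

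The paper argues purely by citation. It forms the orthogonal symmetric Lie algebra $(\mathfrak{g},\theta)$ and its dual $(\mathfrak{u},\theta^*)$. It then asserts that $(G,K_1)$ is a Riemannian symmetric pair, and uses Helgason's Prop.~3.6 of Ch.~IV (with $U$ simply connected and $K_2$ connected) to get the same for $(U,K_2)$. Finally it quotes Helgason's Theorem~6.7 of Ch.~V for both decompositions, and identifies the analytic subgroups of $\mathfrak{a}$ and $i\mathfrak{a}$ with $\exp\mathfrak{a}$ and $\exp i\mathfrak{a}$.

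You instead reprove the mechanism inside that theorem. You get the conjugacy $\mathfrak{p}=\bigcup_{k\in K_1}\operatorname{Ad}(k)\mathfrak{a}$ by extremizing $k\mapsto B(\operatorname{Ad}(k)X,H)$ at a regular $H$. You combine it with $G=K_1\exp\mathfrak{p}$ and $U=\exp(i\mathfrak{p})K_2$. The first comes from the global Cartan decomposition; the second comes from lifting $\theta^*$ and applying Hopf--Rinow on $U/K_2$.

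The paper's version buys brevity. Yours is self-contained and supplies what the paper only asserts. That $(G,K_1)$ is a Riemannian symmetric pair does not follow from noncompactness of $G$, as the paper's wording suggests. It follows from the global Cartan decomposition ($K_1$ closed, $\operatorname{Ad}_G(K_1)$ compact, $\theta$ integrating to $G$), which is essentially your second step. Your route also shows exactly where simple connectivity of $U$ enters: it lets you lift $\theta^*$, which makes $K_2$ closed.

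Small points to tidy:
\begin{itemize}
\item Compactness of $\operatorname{Ad}(K_1)$ is a by-product of your polar-decomposition step, since $\operatorname{Ad}(K_1)=\operatorname{Ad}(G)\cap O(B_\theta)$; say so explicitly.
\item ``$M$ is compact'' tacitly uses Weyl's theorem that the simply connected group $U$ is compact. Completeness of $M$ also follows just from homogeneity.
\item ``$U\neq K_2\exp(i\mathfrak{p})$'' in your compact-case paragraph should read ``$=$''.
\end{itemize}
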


\begin{proof}[Proof of Theorem \ref{theorem_:KAK_Decomposition_note_version-refined}]
Since \(G\) is a semisimple Lie group, \(\mathfrak{g}\) is a semisimple Lie algebra.

From Example~\ref{example_:orthogonal symmetric Lie algebra} , given a semisimple Lie algebra \(\mathfrak{g}\) with a Cartan decomposition
\[
\mathfrak{g}=\mathfrak{k} \oplus \mathfrak{p},
\]
and \(\theta\) be the corresponding Cartan involution,  then $(\mathfrak{g}, \theta)$ is an effective orthogonal symmetric Lie algebra. Let
\[\mathfrak{u} = \mathfrak{k} \oplus i \mathfrak{p} \]
and $\theta^{*}$ be the corresponding Cartan involution,
then \((\mathfrak{u}, \theta^{*})\) is the dual orthogonal symmetric Lie algebra.

Since \(G\) is connected and \(\mathfrak{k}\) is stabilized by \(\theta,(G, K)\) is a pair associated with \((\mathfrak{g}, \theta)\).
Furthermore, \(G\) is noncompact by assumption, so \((G, K)\) is a Riemannian symmetric pair. By  \cite[Prop. 3.6,  Ch.IV.3]{Helgason2001Differential}, since \((U, K)\) is a pair associated with the orthogonal symmetric Lie algebra \((\mathfrak{u}, \theta)\), \(U\) is simply-connected and \(K\) is connected, so \((U, K)\) is a Riemannian symmetric pair.

Therefore, by \cite[Theorem 6.7, Ch.V.6]{Helgason2001Differential},
we have
\[
G=K_1 A K_1, \quad U=K_2 A^{\prime} K_2,
\]
where \(A, A^{\prime}\) are the connected Lie subgroups of \(G, U\) associated with \(\mathfrak{a} \subset \mathfrak{g}, i \mathfrak{a} \subset \mathfrak{u}\). Since $\mathfrak a, i\mathfrak a$ are Abelian, by uniqueness $A = \exp \mathfrak{a}$ and $A^{'} = \exp i \mathfrak{a} $.
The conclusion follows.
\qedhere
\end{proof}

Note that from Theorem~\ref{theorem_:KAK_Decomposition_note_version-refined}, we have a direct corollary for simply-connected compact semisimple Lie groups.

\begin{corollary}[]\label{corollary_:kak_decompostion_simply_connected_compact}
Let \(U\) be a connected and simply-connected compact semisimple Lie group, $\mathfrak{u}$ its Lie algebra. If there exists a vector space decomposition
\[
\mathfrak{u} =\mathfrak{k} \oplus \mathfrak{p}
\]
satisfying the commutation relation, i.e.,
\[
[\mathfrak{k}, \mathfrak{k}] \subset \mathfrak{k}, \quad[\mathfrak{p}, \mathfrak{k}] \subset \mathfrak{p}, \quad[\mathfrak{p}, \mathfrak{p}] \subset \mathfrak{k}.
\]
Let $\mathfrak{a}$ be a maximal Abelian subspace of $\mathfrak{p}$. Then
\[
U=K A K,
\]
where \(A:=\exp  \mathfrak{a}\) and $K$ is the connected Lie subgroup of \(U\) corresponding to the Lie algebra \(\mathfrak{k}\).
\end{corollary}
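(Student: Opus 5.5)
The plan is to reduce the corollary to the second half of Theorem~\ref{theorem_:KAK_Decomposition_note_version-refined} by dualizing. Starting from the given decomposition $\mathfrak{u}=\mathfrak{k}\oplus\mathfrak{p}$, form the real Lie algebra
\[
\mathfrak{g}:=\mathfrak{k}\oplus i\mathfrak{p}\subset \mathfrak{u}^{\mathbb{C}},
\]
as in Definition~\ref{definition_:dual}. The commutation relations make $\mathfrak{g}$ closed under the bracket, since $[i\mathfrak{p},i\mathfrak{p}]=-[\mathfrak{p},\mathfrak{p}]\subset\mathfrak{k}$. Moreover, $\theta(T+X)=T-X$ is an involutive automorphism of $\mathfrak{u}$, so $(\mathfrak{u},\theta)$ is an orthogonal symmetric Lie algebra of compact type by Example~\ref{example_:orthogonal symmetric Lie algebra}(a). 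Its dual is $(\mathfrak{g},\theta^*)$. Taking $i\mathfrak{p}$ as the ``$\mathfrak{p}$'' of $\mathfrak{g}$, the dual of $\mathfrak{g}$ is $\mathfrak{k}\oplus i(i\mathfrak{p})=\mathfrak{k}\oplus\mathfrak{p}=\mathfrak{u}$. The maximal Abelian subspace $\mathfrak{a}\subset\mathfrak{p}$ corresponds to the maximal Abelian subspace $i\mathfrak{a}\subset i\mathfrak{p}$, because $[iX,iY]=-[X,Y]$. Applying the theorem with this choice then gives $A'=\exp i(i\mathfrak{a})=\exp\mathfrak{a}$ inside $U$, which is exactly the claimed $U=KAK$.

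The key steps, in order, are as follows. First, check that $\mathfrak{g}$ is semisimple. Its complexification equals $\mathfrak{u}^{\mathbb{C}}$, which is semisimple. Second, check that $\mathfrak{g}=\mathfrak{k}\oplus i\mathfrak{p}$ is a genuine Cartan decomposition in the sense of Definition~\ref{definition_:Cartan decomposition-Knapp96}. Both algebras have the same complexification, so their Killing forms agree on $\mathfrak{g}\subset\mathfrak{u}^{\mathbb{C}}$. Since $\mathfrak{u}$ is compact semisimple, $B$ is negative definite on $\mathfrak{u}$, hence on $\mathfrak{k}$. For $X\in\mathfrak{p}$ we get $B(iX,iX)=-B(X,X)>0$, so $B$ is positive definite on $i\mathfrak{p}$. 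Third, take $G$ to be any connected Lie group with Lie algebra $\mathfrak{g}$, for instance the adjoint group. Finally, invoke Theorem~\ref{theorem_:KAK_Decomposition_note_version-refined}. By Lie's theorem, $U$ is the unique connected simply-connected group with Lie algebra $\mathfrak{u}$, and $K_2=K$.

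The main obstacle is the noncompactness hypothesis on $G$ in the theorem. This hypothesis fails exactly when $\mathfrak{p}=0$, in which case $\mathfrak{g}=\mathfrak{k}=\mathfrak{u}$ is compact. This degenerate case has to be handled separately, and it is trivial: then $\mathfrak{a}=0$, $A=\{e\}$, and $K=U$ because $U$ is connected, so $U=KAK$ holds. When $\mathfrak{p}\neq0$, the Killing form of $\mathfrak{g}$ is positive definite on $i\mathfrak{p}\neq0$, so $\mathfrak{g}$ is not compact and no connected group $G$ with Lie algebra $\mathfrak{g}$ is compact. A subtler version of the same issue arises if $\mathfrak{u}$ has simple ideals on which $\mathfrak{p}$ vanishes, so that $\mathfrak{g}$ has compact factors. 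I would handle this by splitting $\mathfrak{u}$ into $\theta$-stable ideals and $U$ into the corresponding product. On each factor, either the trivial argument or the theorem applies, and the product decompositions combine into $U=KAK$. Alternatively, one can check that the cited Helgason results in the theorem's proof only need $(U,K)$ to be associated with $(\mathfrak{u},\theta)$ with $U$ simply connected.
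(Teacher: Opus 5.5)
Your proposal is correct and follows the paper's proof. You dualize to $\mathfrak{g}=\mathfrak{k}\oplus i\mathfrak{p}$, use negative definiteness of the Killing form of $\mathfrak{u}$ to see that this is a genuine Cartan decomposition, take a connected $G$ with Lie algebra $\mathfrak{g}$, and read off $U=KAK$ from the dual half of Theorem~\ref{theorem_:KAK_Decomposition_note_version-refined}. Your separate treatment of $\mathfrak{p}=0$, where no noncompact $G$ exists, covers a case the paper passes over; the splitting into $\theta$-stable ideals is harmless but not needed, since the theorem as stated only requires $G$ noncompact, which already holds once $\mathfrak{p}\neq 0$.
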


\begin{proof}[Proof]
Given the compact semisimple Lie algebra $\mathfrak{u}$ with the decomposition $\mathfrak{u} = \mathfrak{k} \oplus \mathfrak{p}$ satisfying the commutation relations,
we can define its dual noncompact real form $\mathfrak{g}$ as
$$
\mathfrak{g} := \mathfrak{k} \oplus i\mathfrak{p}.
$$
Since $U$ is compact and semisimple, the Killing form of $\mathfrak u$ is negative definite. Therefore, the Killing form of $\mathfrak{g}$ is negative definite on $\mathfrak{k}$ and positive definite on  $i\mathfrak{p}$. Hence this is a Cartan decomposition of $\mathfrak{g}$. Now let  \(G\) be a connected noncompact semisimple Lie group associated to the Lie algebra \(\mathfrak{g}\).
Then we can apply Theorem~\ref{theorem_:KAK_Decomposition_note_version-refined} to both $G$ and $U$ to get the desired KAK decomposition for $U$.
\qedhere
\end{proof}

Moreover,  we can prove a theorem for compact semisimple Lie groups.
\begin{theorem}[]\label{theorem_:kak_theorem_compact_version}
Let \(U\) be a connected compact semisimple Lie group, $\mathfrak{u}$ its Lie algebra. Suppose there exists a vector space decomposition
\[
\mathfrak{u}:=\mathfrak{k} \oplus \mathfrak{p}
\]
satisfying the commutation relations
\begin{equation}\label{equation_:commutation relations_compact}
[\mathfrak{k}, \mathfrak{k}] \subset \mathfrak{k}, \quad[\mathfrak{p}, \mathfrak{k}] \subset \mathfrak{p}, \quad[\mathfrak{p}, \mathfrak{p}] \subset \mathfrak{k}.
\end{equation}
Let $\mathfrak{a}$ be a maximal Abelian subspace of $\mathfrak{p}$. Then
\[
U=K A K,
\]
where \(A:=\exp  \mathfrak{a}\) and $K$ is the connected Lie subgroup of \(U\) corresponding to the Lie algebra \(\mathfrak{k}\).
\end{theorem}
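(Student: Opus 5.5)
The plan is to reduce to Corollary~\ref{corollary_:kak_decompostion_simply_connected_compact} by passing to the universal cover. Since $U$ is connected, compact and semisimple, its universal covering group $\widetilde U$ is compact as well (Weyl's theorem: a compact semisimple Lie algebra has finite fundamental group). Let $\pi:\widetilde U\to U$ be the covering homomorphism. Then $\widetilde U$ is connected, simply-connected, compact and semisimple, and $d\pi$ identifies its Lie algebra with $\mathfrak u$. The decomposition $\mathfrak u=\mathfrak k\oplus\mathfrak p$ with the commutation relations \eqref{equation_:commutation relations_compact} therefore transports verbatim, and $\mathfrak a$ is still a maximal Abelian subspace of $\mathfrak p$. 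Corollary~\ref{corollary_:kak_decompostion_simply_connected_compact} gives $\widetilde U=\widetilde K\,\widetilde A\,\widetilde K$, where $\widetilde A=\exp_{\widetilde U}\mathfrak a$ and $\widetilde K$ is the connected Lie subgroup of $\widetilde U$ with Lie algebra $\mathfrak k$.

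Next I push this forward along $\pi$. Since $\pi$ is surjective (covering maps of connected groups are onto) and a homomorphism,
\[
U=\pi(\widetilde U)=\pi(\widetilde K)\,\pi(\widetilde A)\,\pi(\widetilde K).
\]
By naturality of the exponential map, $\pi\circ\exp_{\widetilde U}=\exp_U\circ d\pi$, so $\pi(\widetilde A)=\exp_U\mathfrak a=A$. For $K$, note that $\pi(\widetilde K)$ is a connected subgroup of $U$ generated by $\exp_U\mathfrak k$, because $\widetilde K$ is generated by $\exp_{\widetilde U}\mathfrak k$. The connected Lie subgroup with a given Lie algebra is unique, and it is exactly the subgroup generated by the exponentials of that algebra. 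Hence $\pi(\widetilde K)=K$, and so $U=KAK$.

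The main obstacle is the first step: we need the universal cover to be compact, so that the Corollary, which is stated for compact simply-connected groups, applies. I would cite the standard result that $\pi_1$ of a compact semisimple group is finite, which also follows from the Killing form on $\widetilde U$ being negative definite together with Weyl's theorem. A subsidiary point is whether Corollary~\ref{corollary_:kak_decompostion_simply_connected_compact} requires any definiteness on $\mathfrak k,\mathfrak p$. It does not: its proof builds a Cartan decomposition of the dual $\mathfrak k\oplus i\mathfrak p$ from the negativity of the Killing form on $\mathfrak u$ alone. One degenerate case needs checking. If $\mathfrak p=0$, then $\mathfrak a=0$ and $K=U$, so the statement is trivial; this case should be set aside, because the dual algebra is then compact and Theorem~\ref{theorem_:KAK_Decomposition_note_version-refined} needs a noncompact group.
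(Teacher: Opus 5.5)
Your proposal is correct and follows the same route as the paper. Both pass to the universal cover $\widetilde U$, apply Corollary~\ref{corollary_:kak_decompostion_simply_connected_compact} there, and push $\widetilde U=\widetilde K\widetilde A\widetilde K$ down along the covering homomorphism using naturality of the exponential map.

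Your extra steps fill in points the paper only asserts: Weyl's theorem for compactness of $\widetilde U$, identifying $\pi(\widetilde K)$ as the subgroup generated by $\exp_U\mathfrak k$, and setting aside the trivial case $\mathfrak p=0$, which the paper's reliance on the noncompact Theorem~\ref{theorem_:KAK_Decomposition_note_version-refined} silently excludes.
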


\begin{proof}[Proof]
Let $\widetilde{U}$  be the universal cover of the connected compact semisimple Lie group $U$ and $\pi: \widetilde{U} \rightarrow U$ be the covering map. By the definition, the Lie algebra $\mathfrak{u}$ of $U$ is the same as that of $\widetilde{U}$, so the decomposition $\mathfrak{u} = \mathfrak{k} \oplus \mathfrak{p}$ lifts to the Lie algebra of $\widetilde{U}$ in the same form.

Since $U$ is compact and semisimple, then its universal cover $\widetilde{U}$ is also compact and semisimple. Applying Corollary~\ref{corollary_:kak_decompostion_simply_connected_compact} to $\widetilde{U} $, we get
$$
\widetilde{U} = \widetilde{K} \widetilde{A} \widetilde{K},
$$
where $\widetilde{A} = \exp(\mathfrak{a})$, and  $\widetilde{K}$ is the connected subgroup of $\widetilde U$ corresponding to $\mathfrak{k} \subset \mathfrak u$.

Since the derivative $\mathrm{d}\pi$ of the covering map $\pi$  preserves the Lie algebra structure, i.e., $\mathrm{d}\pi = \operatorname{id}: \mathfrak{u} \rightarrow\mathfrak{u}$,
and by Proposition~\ref{proposition_:natural property of the exponential map} the exponential map is natural with the Lie group and Lie algebra homomorphisms,
so $\pi$ maps $\widetilde{K}$ to $K$ (the connected subgroup of $U$ with Lie algebra $\mathfrak{k}$), and maps $\widetilde{A}$ to $A = \exp(\mathfrak{a})$ in $U$.
Note that the universal covering map $\pi: \widetilde{U} \to U$ is a Lie group homomorphism, thus the decomposition descends to $U$, which gives
$$
U = \pi(\widetilde{U}) = \pi(\widetilde{K} \widetilde{A} \widetilde{K}) = K A K.
$$
\qedhere
\end{proof}

\section{General theory of determining equivalence classes}
\label{section_:determing_equivalence_classes}

This section provides a general theory for determining equivalence classes. In the mathematical literature,  equivalence is typically understood via double cosets, i.e., given a simply-connected compact Lie group $U$ and a decomposition $U = KAK$, one aims to classify the double coset space $K \backslash U / K$. In quantum information theory, however, two gates are considered equivalent if they differ only by a global phase, which corresponds to a center of the Lie group. This leads to the projective equivalence, double cosets modulo the center, i.e., $(K \backslash U / K) \,/\, Z(U)$, where $Z(U)$ is the center of $U$. Equivalently, one works with the projective group $U / Z(U)$. Therefore, while the double coset version captures the classical Lie-theoretic classification, the projective version incorporates additional symmetry reduction by global phases.
All necessary definitions and notations are summarized in the table below.
\begin{center}
\begin{tabular}{ccc}
\toprule
$(U,K)$ & Riemannian symmetric pair with $U$ compact
\\
$(\mathfrak{u}, \theta)$ & the orthogonal symmetric Lie algebra
\\
$\mathfrak{k}$ & $\theta$'s $+1$-eigenspace in $\mathfrak{u}$
\\
$\mathfrak{p}$ & $\theta$'s $-1$-eigenspace in $\mathfrak{u}$
\\
$\mathfrak{h}$ & an arbitrary maximal Abelian subspace of $\mathfrak{p}$
\\
$\mathfrak{u}^{\mathbb{C}}$ & the complexification of $\mathfrak{u}$
\\
$\mathfrak{h}^{\mathbb{C}}$ &  the complexification of $\mathfrak{h}$
\\
$\alpha$ & the root of $\mathfrak{u}^{\mathbb{C}}$ with respect to $\mathfrak{h}^{\mathbb{C}}$
\\
$\Delta$ & the set of nonzero roots
\\
$\Sigma$ & the set of roots not vanishing identically on $\mathfrak{h}^{\mathbb{C}}$
\\
$W(\Sigma)$ & the group generated by all the reflections
\\
$N_{K}(\mathfrak{h})$ & normalizer of $\mathfrak{h}$ in $K$
\\
$C_{K}(\mathfrak{h}) $ & centralizer of $\mathfrak{h}$ in $K$
\\
$W(U,K)$ & the Weyl group
\\
$D(U, K)$ & the diagram of the pair $(U,K)$
\\
$\mathfrak{h}_r$ & the complement of the diagram
\\
$\mathfrak{a}_e$ & unit lattice
\\
$\mathfrak{a}_K$ & $K$-lattice
\\
$\mathfrak{a}_\Sigma$ & $\Sigma$-lattice
\\
$\mathfrak{a}_p$ & $p$-lattice
\\
$\Gamma_{\Sigma}$ & affine Weyl group
\\
$\Gamma_{p}$ & projective affine Weyl group
\\
\bottomrule
\end{tabular}
\end{center}

We then develop a general method to determine both the double coset equivalence classes and the projective counterpart for simply-connected compact Lie groups.

Let $U$ be a simply-connected compact Lie group, $\mathfrak{u}$ its Lie algebra. Let $(U,K)$ be a Riemannian symmetric pair associated with an orthogonal symmetric Lie algebra $(\mathfrak{u}, \theta)$, where $\mathfrak{u}$ has a vector space direct sum  decomposition $\mathfrak{u} = \mathfrak{k} \oplus \mathfrak{p}$~{We require the Riemannian symmetric pair \( (U,K) \) to be of compact type with \( K \neq U \).  Then the dual \(\mathfrak{k} + i \mathfrak{p} \) automatically satisfies \( \mathfrak{p} \neq \{0\} \), giving a nontrivial Cartan decomposition of noncompact type.}.
Let $\mathfrak{h}$ denote an arbitrary maximal Abelian subspace of \(\mathfrak{p}\).
Let $\mathfrak{u}^{\mathbb{C}}, \mathfrak{h}^{\mathbb{C}}$ be complexification of $\mathfrak{u}, \mathfrak{h}$ respectively.
For \(X \in \mathfrak{h}^{\mathbb{C}}\), let \(W \in \mathfrak{u}^{\mathbb{C}}\) be an eigenvector of \(\operatorname{ad}_X\) and \(\alpha(X)\) the corresponding eigenvalue, i.e.,
\[
[X, W]=\alpha(X) W.
\]
The linear functional \(\alpha\) is called a \textit{root} of \(\mathfrak{u}^{\mathbb{C}}\) with respect to \(\mathfrak{h}^{\mathbb{C}}\).
Let \(\Delta\) denote the set of nonzero roots, and \(\Sigma\) denote the set of roots in \(\Delta\) which do not vanish identically on \(\mathfrak{h}^{\mathbb{C}}\). For each  \(\alpha \in \Sigma\), let \(s_\alpha\) denote the \textit{reflection} with respect to the hyperplane \(\alpha(X)=0\) in \(\mathfrak h\). The group  $W(\Sigma)$ is  generated by all the reflections \(s_\alpha\)  given by roots in $\Sigma$. Each \(\alpha \in \Sigma\)  defines a hyperplane \(\alpha(X)=0\) in the vector space \(\mathfrak{h}\). These hyperplanes divide the space $\mathfrak{h}$ into finitely many connected components, called the \textit{Weyl chambers}. Note that these Weyl chambers are infinitely large. Therefore, to avoid misunderstanding, we use the standard terminology here which is different from that  in~\cite{Zhang:2003zz}.

Let $C_{K}(\mathfrak{h}) $ and  $ N_{K}(\mathfrak{h}) $ denote the \textit{centralizer} and \textit{normalizer} of $\mathfrak{h}$ in \(K\), respectively. In other words,
\[
\begin{aligned}
C_{K}(\mathfrak{h}) & :=  \left\{k \in K \mid \operatorname{Ad}_k(X)=X \text { for each } X \in \mathfrak{h}\right\},  \\
N_{K}(\mathfrak{h}) & :=  \left\{k \in K \mid \operatorname{Ad}_k(\mathfrak{h}) \subset \mathfrak{h}\right\}.
\end{aligned}
\]
We define the \textit{Weyl group} to be the quotient
\begin{equation}\label{equation_:Weyl_group_quotient_version}
W(U,K) : =  N_{K}(\mathfrak{h}) / C_{K}(\mathfrak{h}).
\end{equation}
Since all the maximal Abelian subspaces in $\mathfrak{p}$ are conjugate up to elements in $K$ as a direct corollary of conjugacy of maximal tori in Theorem~\ref{theorem_:Maximal Torus Theorem}, the Weyl group is independent of the choice of arbitrary $\mathfrak{h}$.

To further analyze the group action, first we have a fact about the relationship between the Weyl group \(W(U,K)\) and the group $W(\Sigma)$.

\begin{theorem}[\protect{\cite[Corollary 2.13, Ch.~VII]{Helgason2001Differential}}]\label{theorem_:Weyl_group_identity_of_2_versions}
The Weyl group $W(U,K)$ is generated by the reflections \(s_{\alpha}\), \(\alpha \in \Sigma\). In other words, $W(U,K) = W(\Sigma)$.
\end{theorem}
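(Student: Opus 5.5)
We prove the two inclusions $W(\Sigma)\subset W(U,K)$ and $W(U,K)\subset W(\Sigma)$, working throughout with the $K$-invariant inner product $\langle\cdot,\cdot\rangle=-B$ on $\mathfrak p$. This is positive definite because $\mathfrak u$ is compact semisimple. First note that each $\operatorname{Ad}_k$, $k\in N_K(\mathfrak h)$, acts on $\mathfrak h$ orthogonally. It also permutes $\Sigma$, since $\operatorname{Ad}_k$ is an automorphism of $\mathfrak u^{\mathbb C}$ preserving $\mathfrak h^{\mathbb C}$. Hence $W(U,K)$ embeds as a finite group of isometries of $\mathfrak h$ that preserves the set of hyperplanes $\{\alpha=0\}$, $\alpha\in\Sigma$. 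Roots $\alpha\in\Sigma$ take purely imaginary values on $\mathfrak h$. We identify them with vectors $H_\alpha\in\mathfrak h$ through $\alpha(X)=i\langle H_\alpha,X\rangle$, so that $s_\alpha$ is the orthogonal reflection in $H_\alpha^\perp$.

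\textbf{Step 1: $W(\Sigma)\subset W(U,K)$.} Fix $\alpha\in\Sigma$. Take a root vector $E\in\mathfrak u^{\mathbb C}$ with $[X,E]=\alpha(X)E$, and let $\bar E$ be its conjugate with respect to $\mathfrak u$. Form $Z=(E+\bar E)+\theta(E+\bar E)$. Choosing $E$ appropriately, $Z$ is a nonzero element of $\mathfrak k$ satisfying $[Z,\mathfrak h]\subset\mathfrak p$ and $[Z,H_\alpha^\perp]=0$. Then $\operatorname{Ad}_{\exp tZ}$ fixes $H_\alpha^\perp$ pointwise and rotates the plane spanned by $H_\alpha$ and $[Z,H_\alpha]$. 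For a suitable $t$ it sends $H_\alpha$ to $-H_\alpha$. The element $k=\exp tZ\in K$ therefore normalizes $\mathfrak h$ and induces $s_\alpha$. This is the standard $\mathfrak{su}(2)$-triple argument.

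\textbf{Step 2: $W(U,K)\subset W(\Sigma)$.} Let $k\in N_K(\mathfrak h)$ and write $w=\operatorname{Ad}_k|_{\mathfrak h}$. Since $w$ permutes the Weyl chambers, and $W(\Sigma)$ acts simply transitively on the chambers (standard reflection-group theory, using that $\Sigma$ is a root system), there is $s\in W(\Sigma)$ with $swC=C$ for a fixed chamber $C$. By Step 1, $s$ is realized by some $k'\in N_K(\mathfrak h)$. Replacing $k$ by $k'k$, we may therefore assume $wC=C$. It then suffices to show that $w=\mathrm{id}$, i.e.\ that $k\in C_K(\mathfrak h)$.

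This is the main obstacle, and I would handle it as follows. Since $w$ is a finite-order isometry preserving $C$, averaging over the orbit of a point of $C$ gives a fixed point $H\in C$; this point is regular, meaning $\alpha(H)\neq0$ for all $\alpha\in\Sigma$. Thus $k\in C_K(H)$. Next we show that the centralizer of a regular $H$ in $K$ also centralizes $\mathfrak h$. Regularity implies that the centralizer of $H$ in $\mathfrak p$ is exactly $\mathfrak h$. Consider the torus $T=\overline{\exp\mathbb R H}\subset U$. Because $H$ is regular, $T$ has the same centralizer in $\mathfrak p$ as $H$, namely $\mathfrak h$. Now $k$ centralizes $T$, so $k$ lies in $C_U(T)$, which is connected (the centralizer of a torus in a compact connected group, via Theorem~\ref{theorem_:Maximal Torus Theorem}). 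I would then argue that $C_U(T)\cap K$ acts on $\mathfrak h$ through its identity component, and the identity component acts trivially because its Lie algebra $\mathfrak z_{\mathfrak k}(H)$ commutes with $\mathfrak h$. To reduce to the identity component, I would invoke simple connectivity of $U$: it gives that $K=U_\theta$ is connected, so $C_U(T)^\theta$ is connected, as in Helgason's argument. If this connectedness step is delicate, the fallback is to cite \cite[Ch.~VII, Thm.~2.12]{Helgason2001Differential} directly, which states that $W(U,K)$ acts simply transitively on the Weyl chambers. Given that theorem, $wC=C$ forces $w=\mathrm{id}$, which completes Step 2 and hence the proof.
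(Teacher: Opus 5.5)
Step 1 and the reduction at the start of Step 2 are sound. The argument you give for the ``main obstacle'' rests on a false claim. You want $C_K(H)=C_U(T)\cap K=C_U(T)^\theta$ to act on $\mathfrak h$ through its identity component. You justify this by saying that simple connectivity of $U$, hence connectedness of $K=U_\theta$, makes $C_U(T)^\theta$ connected. Connectedness of $U_\theta$ does not pass to the $\theta$-fixed points of the subgroup $C_U(T)$, and in fact $C_K(H)$ is typically disconnected. The paper's own example shows this. In the Bell basis, $K=\operatorname{SO}(4)$ and $\mathfrak h$ is diagonal. A $\Sigma$-regular $H$ (i.e.\ $c_a\neq\pm c_b$ for $a\neq b$) has four distinct diagonal entries $c_1-c_2+c_3,\ c_1+c_2-c_3,\ -c_1-c_2-c_3,\ -c_1+c_2+c_3$. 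So $C_K(H)$ consists of the eight matrices $\operatorname{diag}(\pm1,\pm1,\pm1,\pm1)$ with an even number of $-1$'s. This is a finite nontrivial group, even though $U=\operatorname{SU}(4)$ is simply connected, and the same happens for every $\operatorname{SU}(n)/\operatorname{SO}(n)$. So the reduction to the identity component cannot be carried out inside $K$, and as written Step 2 is not proved.

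The repair uses only what you already have: do not intersect with $K$ at all.
\begin{itemize}
\item The group $C_U(T)$ is connected, and its Lie algebra is $\mathfrak z_{\mathfrak u}(H)$.
\item By regularity of $H$, $\mathfrak z_{\mathfrak u}(H)=\mathfrak z_{\mathfrak u}(\mathfrak h)$. In the root space decomposition of $\mathfrak u^{\mathbb C}$ with respect to $\mathfrak h^{\mathbb C}$, $\operatorname{ad}H$ acts on each root space by the scalar $\alpha(H)\neq 0$, so its kernel is the zero root space.
\item A connected Lie group is generated by exponentials of its Lie algebra. For $X\in\mathfrak z_{\mathfrak u}(\mathfrak h)$, the map $\operatorname{Ad}_{\exp X}=e^{\operatorname{ad}X}$ is the identity on $\mathfrak h$.
\item Hence $\operatorname{Ad}_k|_{\mathfrak h}=\mathrm{id}$ for every $k\in C_U(T)$, in particular for your $k$.
\end{itemize}
This is why the argument must be run in $U$ rather than in $K$, and it does not need simple connectivity of $U$.

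Your fallback of citing Helgason's Theorem~2.12 is legitimate. Note, though, that given Step 1, simple transitivity of $W(U,K)$ on the Weyl chambers is essentially the statement itself. That route therefore amounts to what the paper does: it gives no proof and simply cites Helgason's Corollary~2.13.
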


Next we consider the relationship between the representatives of the double coset\footnote{Two elements are in the same equivalence class if they differ by multiplications of $K$ on the left and right sides. } $K\backslash U/K$ and the orbits of some group acting on the vector space $\mathfrak{h}$. We recall some standard definitions following the notation of~\cite{Helgason2001Differential}.

\begin{definition}[Diagram]\label{definition_:diagram}
The set
\begin{equation}\label{equation_:diagram}
D(U, K): = \left\{H \in \mathfrak{h}: \alpha(H) \in \pi i \mathbb{Z} \quad \text { for some } \alpha \in \Sigma \right\}
\end{equation}
is called the \textit{diagram} of the pair \((U, K)\).
And let \(\mathfrak{h}_r=\mathfrak{h}-D(U, K)\) be the complement of the diagram in $\mathfrak{h}$. The components of \(\mathfrak{h}_{r}\) are called \textit{cells}.
\end{definition}

\begin{definition}[]\label{definition_:affine Weyl group}
The \textit{affine Weyl group}, denoted by \(\Gamma_{\Sigma}\),  is defined to be the group of linear transformations of \(\mathfrak{h}\) generated by the reflections in all the hyperplanes in the diagram \(D(U,K)\).
\end{definition}

\begin{definition}[Unit lattice]\label{definition_:unit_lattice}
The \textit{unit lattice} in \( \mathfrak{h} \) is defined as the set
\[
\mathfrak{a}_{e}=\left\{H \in \mathfrak{h}: \exp H=e\right\},
\]
where $e$ is the identity element in $U$.
\end{definition}

\begin{definition}[$K$-lattice]\label{definition_:K-lattice}
The $K$-lattice is defined as the set
\[
\mathfrak{a}_{K}=\left\{H \in \mathfrak{h}: \exp H \in K\right\}.
\]
\end{definition}

To introduce another lattice, we need an inner product on the vector space $\mathfrak{h}$. For each linear form \(\mu\) on \(\mathfrak{h}\), since the Killing form $B$ on $\mathfrak{u}^{\mathbb{C}}$is non-degenerate, let \(A_{\mu} \in \mathfrak{h}\) be determined by
\[\mu(H)= B\left(A_{\mu}, H\right)
\text{, for all } H \in \mathfrak{h} \]
and define \(\langle\mu, \lambda\rangle: =B\left(A_{\mu}, A_{\lambda}\right)\) for any two such linear forms \(\mu\) and \(\lambda\) on $\mathfrak{h} $.

\begin{definition}[$\Sigma$-lattice]\label{definition_:Sigma_lattice}
The $\Sigma$-lattice is defined as the set spanned by the vectors
\[
\frac{2 \pi i}{\langle\mu, \mu\rangle} A_{\mu} \quad (\mu \in \Sigma),
\]
and let \(\mathfrak{a}_{\Sigma}\) denote the lattice.
\end{definition}

To characterize the structure of the affine Weyl group and the resulting double coset equivalence classes, we use the following results which collectively establish the fundamental relations among the lattices, the affine Weyl group action, and KAK decomposition.

We recall the following important relation between the three lattices (see \cite[Theorem 8.5 of Ch.~VII]{Helgason2001Differential}):
for  \(U / K\) a Riemannian symmetric space of the compact type with \(U\) simply-connected and \(K\) connected, then
\begin{equation}\label{equation_:relations_lattices}
\mathfrak{a}_{K}=\mathfrak{a}_{\Sigma} =\frac{1}{2} \mathfrak{a}_{e} .
\end{equation}

In the following we view the lattice \(\mathfrak{a}_{\Sigma}\) as a group of translations of \(\mathfrak{h}\). Let \(Q_{0}\) be any component of \(\mathfrak{h}_r\) whose closure contains the origin.
By \cite[Theorem 8.3 of Ch.~VII]{Helgason2001Differential}, the affine Weyl group $\Gamma_{\Sigma}$ then admits an inner semidirect product decomposition $\Gamma_{\Sigma} = \mathfrak{a}_{\Sigma} \rtimes W(U,K)$, where $\mathfrak{a}_{\Sigma}$ is normal. Moreover, each orbit of $\Gamma_{\Sigma}$ in $\mathfrak{h}$ intersects the closure $\overline{Q_{0}}$ in exactly one point, i.e., $ \overline{Q_{0}} = \Gamma_{\Sigma}\backslash \mathfrak{h}$.

By \eqref{equation_:relations_lattices}, we replace $\mathfrak{a}_{\Sigma}$ with $\mathfrak{a}_{K}$ in the expression of $\Gamma_{\Sigma}$, which gives
\begin{equation}\label{equation_:affine Weyl group_K}
\Gamma_{\Sigma} = \Gamma_K : = \mathfrak{a}_{K}   \rtimes   W(U,K).
\end{equation}
Now we state our main consequence, which realizes the KAK decomposition for compact-type Riemannian symmetric spaces, identifying the closure of a fixed component \(\overline{Q_0}\) as a representative region for the double coset space \(K \backslash U / K\).
\begin{theorem}[\protecting{\cite[Theorem 8.6 of Ch.~VII]{Helgason2001Differential}}]\label{theorem_:equivalence_class_double_coset}
Let \(U / K\) be a Riemannian symmetric space of the compact type, \(U\) simply-connected, and \(K\) connected. Let \(Q_{0}\) be any component of \(\mathfrak{h}_r\) whose closure contains the origin. Then we have
\[
U=K \exp \overline{Q_0} K,
\]
and each \(u \in U\) can be written as  \(u=k_{1} \exp H k_{2} \left(k_{1}, k_{2} \in K\right)\) with \(H \in \overline{Q_{0}}\) unique.
\end{theorem}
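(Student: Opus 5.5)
The plan has two halves: existence of the form $u=k_1\exp H\,k_2$ with $H\in\overline{Q_0}$, and uniqueness of $H$. Both will be reduced to the affine Weyl group facts recalled above, namely $\Gamma_\Sigma=\Gamma_K=\mathfrak{a}_K\rtimes W(U,K)$ in \eqref{equation_:affine Weyl group_K} and $\overline{Q_0}=\Gamma_\Sigma\backslash\mathfrak{h}$.

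\textbf{Existence.} First use Corollary~\ref{corollary_:kak_decompostion_simply_connected_compact}: since $U$ is simply-connected, compact and semisimple, $U=K\exp(\mathfrak{h})K$. Write $u=k_1\exp H'\,k_2$ with $H'\in\mathfrak{h}$. Choose $\gamma\in\Gamma_K$ with $H:=\gamma H'\in\overline{Q_0}$, and write $\gamma$ as a translation by $X\in\mathfrak{a}_K$ composed with some $w\in W(U,K)$.
\begin{itemize}
\item For the Weyl part, $w$ is represented by $k\in N_K(\mathfrak{h})$, so $\exp(\operatorname{Ad}_k H')=k\exp(H')k^{-1}$, and the factors $k^{\pm1}$ are absorbed into $k_1,k_2$.
\item For the translation part, $\exp(H'+X)=\exp H'\exp X$ because $\mathfrak{h}$ is Abelian, and $\exp X\in K$ by the definition of $\mathfrak{a}_K$. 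This factor is absorbed into $k_2$.
\end{itemize}
Hence $u\in K\exp(H)K$ with $H\in\overline{Q_0}$.

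\textbf{Uniqueness.} Suppose $k_1\exp H_1\,k_2=\exp H_2$ with $H_1,H_2\in\overline{Q_0}$ (after moving $K$-factors). It suffices to show $H_2\in\Gamma_K H_1$, since each $\Gamma_\Sigma$-orbit meets $\overline{Q_0}$ exactly once. The intended route is through the symmetric space $U/K$ with the Cartan embedding $\phi(uK)=u\,\tilde\theta(u)^{-1}$, where $\tilde\theta$ is the involution of $U$ integrating $\theta$. On $\exp\mathfrak{h}$ we have $\tilde\theta(\exp H)=\exp(-H)$, so $\phi(\exp H\,K)=\exp 2H$. The relation $\exp H_2=k_1\exp H_1\,k_2$ gives
\[
\exp 2H_2=\exp H_2\,\tilde\theta(\exp H_2)^{-1}=k_1\exp H_1\,k_2k_2^{-1}\exp H_1\,k_1^{-1}=k_1\exp(2H_1)k_1^{-1}.
\]
So $\exp 2H_1$ and $\exp 2H_2$ are $K$-conjugate elements of the torus $T=\overline{\exp\mathfrak{h}}$.

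The next step is the analogue of the classical fact that two elements of a maximal torus conjugate in the group are conjugate under the normalizer. Here I will use that $T$ is a maximal torus of $\phi(U/K)$ (from the Maximal Torus Theorem~\ref{theorem_:Maximal Torus Theorem}). The aim is to show that if $k\,t\,k^{-1}=t'$ for $t,t'\in\exp\mathfrak{h}$, then there is $n\in N_K(\mathfrak{h})$ with $n t n^{-1}=t'$. The standard argument applies the conjugacy of maximal Abelian subspaces to the centralizer $Z_K(t')$ acting on the tangent data of the symmetric space. Granting this, $2H_2-\operatorname{Ad}_n(2H_1)$ lies in the unit lattice $\mathfrak{a}_e$, hence $H_2-wH_1\in\frac12\mathfrak{a}_e=\mathfrak{a}_K$ by \eqref{equation_:relations_lattices}. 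Therefore $H_2\in\Gamma_K H_1$, and so $H_1=H_2$.

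\textbf{Main obstacle.} The main obstacle is the normalizer-conjugacy step. It needs care because the centralizer of $t'$ in $K$ may be disconnected, and its connected component must still contain a conjugate of the relevant maximal Abelian subspace. Here simple connectivity of $U$ (which makes $K$ connected and the centralizers well behaved) is essential. If this becomes too involved, I would cite Helgason's Theorem 8.6 of Ch.~VII directly, treating the above as its structural outline.
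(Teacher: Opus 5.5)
The paper does not prove this statement; it imports it verbatim from Helgason. You instead derive it from the other facts the paper imports, plus one conjugacy lemma. Those facts are the KAK decomposition of Corollary~\ref{corollary_:kak_decompostion_simply_connected_compact}, the fact that every $\Gamma_\Sigma$-orbit meets $\overline{Q_0}$ exactly once, and \eqref{equation_:relations_lattices}. The reduction is sound. Your existence half is correct as written; note that $W(U,K)$ preserves $\mathfrak a_K$, so the order of translation and Weyl element is immaterial. In the uniqueness half, the identity $\exp 2H_2=k_1\exp(2H_1)k_1^{-1}$ obtained via $u\mapsto u\tilde\theta(u)^{-1}$ is right. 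Doubling loses exactly $\tfrac12\mathfrak a_e=\mathfrak a_K$, which is why the lattice relation closes the argument. What your route buys over the bare citation is visibility of where the hypotheses enter. Simple connectivity of $U$ and connectedness of $K$ are used through $K=U_{\tilde\theta}$, so that $\exp 2H=e$ forces $\exp H\in K$. They are also used through $\mathfrak a_K=\mathfrak a_\Sigma$, so that $\Gamma_K=\Gamma_\Sigma$ and $\overline{Q_0}$ is a fundamental domain for the $K$-translations.

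The step you grant is not a real obstacle, and your diagnosis of it is off. The Maximal Torus Theorem says nothing about $\phi(U/K)$, which is not a subgroup. Neither a disconnected centralizer nor simple connectivity plays any role in this step. Put $t=\exp 2H_1$, $t'=\exp 2H_2$, and $\mathfrak z=\{Y\in\mathfrak u:\operatorname{Ad}(t')Y=Y\}$. Since $\theta\circ\operatorname{Ad}(g)=\operatorname{Ad}(\tilde\theta(g))\circ\theta$ and $\tilde\theta(t')=t'^{-1}$, the subalgebra $\mathfrak z$ is $\theta$-stable, so $\mathfrak z=(\mathfrak z\cap\mathfrak k)\oplus(\mathfrak z\cap\mathfrak p)$. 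Both $\mathfrak h$ and $\operatorname{Ad}(k_1)\mathfrak h$ lie in $\mathfrak z\cap\mathfrak p$. Let $Z^0$ be the identity component of the compact group $\{k\in K: kt'k^{-1}=t'\}$; its Lie algebra is $\mathfrak z\cap\mathfrak k$. Choose $H\in\mathfrak h$ with $\alpha(H)\neq0$ for all $\alpha\in\Sigma$, so that the centralizer of $H$ in $\mathfrak p$ is $\mathfrak h$. Let $k\in Z^0$ maximize $k\mapsto B(\operatorname{Ad}(kk_1)H,H)$. At a maximum, $B(Y,[\operatorname{Ad}(kk_1)H,H])=0$ for all $Y\in\mathfrak z\cap\mathfrak k$. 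The bracket itself lies in $[\mathfrak z\cap\mathfrak p,\mathfrak z\cap\mathfrak p]\subset\mathfrak z\cap\mathfrak k$, where $B$ is definite, so it vanishes. Hence $\operatorname{Ad}(kk_1)H\in\mathfrak h$. Then $\mathfrak h$ lies in the centralizer in $\mathfrak p$ of $\operatorname{Ad}(kk_1)H$, which is $\operatorname{Ad}(kk_1)\mathfrak h$; by dimension, $\operatorname{Ad}(kk_1)\mathfrak h=\mathfrak h$. Thus $n=kk_1\in N_K(\mathfrak h)$ and $ntn^{-1}=kt'k^{-1}=t'$, which is exactly the normalizer conjugacy you needed. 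With this inserted, your proof is complete relative to the cited lattice and fundamental-domain facts.
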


\begin{remark}[]\label{remark_:symmetric_space_definition}
The full definition of a Riemannian symmetric space requires additional geometric structure. For brevity we omit it here; it suffices to know that a Riemannian symmetric pair $(G,K)$ always gives rise to a Riemannian symmetric space $G/K$. In particular, when $U$ is a compact Lie group as in our setting, $(U,K)$ being a Riemannian symmetric pair implies $U/K$ is a Riemannian symmetric space of the compact type. Interested readers may consult  \cite[Ch.~IV]{Helgason2001Differential} for further details.
\end{remark}

The following general definition is motivated by the double coset structure from Theorem~\ref{theorem_:equivalence_class_double_coset} and recovers, for instance, the notion of local equivalence for \(\operatorname{SU}(4)\) appeared in \cite{Zhang:2003zz} as a special case.

\begin{definition}[Locally equivalent]\label{definition_:locally equivalent_general}
Let \(U / K\) be a Riemannian symmetric space of the compact type, \(U\) simply-connected, and \(K\) connected.
Two elements \(u, u_1 \in U\) are called \textit{locally equivalent} if they differ by elements in $K$, i.e.,
\[u = k_1 u_1 k_2, \]
for some \(k_1, k_2 \in K \).
\end{definition}

However, in quantum theory, if two gates only differ by a global phase, they are the same physical operation. This motivates the following definition of \textit{projective-locally equivalent}.

\begin{definition}[Projective-locally equivalent]\label{definition_:projectively_locally_equivalent}
Let \(U / K\) be a Riemannian symmetric space of the compact type, \(U\) simply-connected, and \(K\) connected.
Two elements \(u, u_1 \in U\) are called \textit{projective-locally equivalent} if they differ by elements in $K$ and $Z(U)$, i.e.,
\[u = \ell \cdot k_1 u_1 k_2, \]
for some \(k_1, k_2 \in K \) and $\ell \in Z(U)$.
\end{definition}

\noindent This leads to the following definition.
\begin{definition}[$p$-lattice]\label{definition_:p-lattice}
The \textit{$p$-lattice} is defined to be the set
\[
\mathfrak{a}_{p}=\left\{H \in \mathfrak{h}: \exp H \in Z(U) \cdot K\right\}
\]
where $Z(U)$ is the center of $U$.
\end{definition}

Similarly, viewing $\mathfrak{a}_p$ as a translation of $\mathfrak{h}$, we can define the \textit{projective affine Weyl group} as
\[
\Gamma_{p} = \mathfrak{a}_p   \rtimes W(U,K).
\]
Since $\Gamma_{\Sigma}$ is a subgroup of $\Gamma_{p}$, so the orbit space of the projective affine Weyl group acting on $\mathfrak{h}$ is contained in some $\overline{Q_0}$. As an application of Theorem~\ref{theorem_:equivalence_class_double_coset}, we can prove the following theorem with respect to the orbits of projective affine Weyl group acting on $\mathfrak{h}$. Let \(R_{0}\) be a region which is identical to the quotient space $\Gamma_{p} \backslash \mathfrak{h}$.

\begin{theorem}[]\label{theorem_:equivalence_class_double_coset_projective}
Let \(U / K\) be a symmetric space of the compact type, \(U\) simply-connected, and \(K\) connected.
Let \(R_{0}\) be a region identical to the quotient space $\Gamma_{p} \backslash \mathfrak{h}$.
Then we have
\[
U= Z(U) \cdot K \exp R_0 K,
\]
where $Z(U)$ is the center of $U$. And each \(u \in U\) can be written as
$u= \ell \cdot k_{1} \exp H k_{2}$  $\left(\ell \in Z(U),  k_{1}, k_{2} \in K \right)$
with \(H \in R_{0}\) unique.
\end{theorem}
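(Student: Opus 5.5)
We reduce the projective statement to Theorem~\ref{theorem_:equivalence_class_double_coset} and then account for the extra translations in $\mathfrak{a}_p$. Throughout, $R_0$ is a fundamental domain for $\Gamma_p$ acting on $\mathfrak{h}$, chosen inside $\overline{Q_0}$. This is possible because $\Gamma_\Sigma=\Gamma_K\subset\Gamma_p$. The inclusion holds since $\mathfrak{a}_K\subset\mathfrak{a}_p$, which is immediate from $e\in Z(U)$. We also need $\Gamma_p$ to be a well-defined group. For this, check that $\mathfrak{a}_p$ is an additive subgroup stable under $W(U,K)$. If $w=\operatorname{Ad}_k$ with $k\in N_K(\mathfrak h)$, then $\exp(wH)=k\exp(H)k^{-1}\in Z(U)K$ whenever $\exp H\in Z(U)K$. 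Additivity follows from commutativity of $\exp$ on the Abelian $\mathfrak h$, together with $Z(U)K$ being a group because $Z(U)$ is central.

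\textbf{Existence.} Let $u\in U$. By Theorem~\ref{theorem_:equivalence_class_double_coset}, $u=k_1\exp H'k_2$ with $H'\in\overline{Q_0}$. Choose $\gamma\in\Gamma_p$ with $\gamma H'=H\in R_0$, and show that each generator of $\Gamma_p$ changes $\exp H'$ only by factors from $Z(U)$ on the outside and $K$ on both sides. A Weyl element acts as $\exp(\operatorname{Ad}_k H')=k\exp(H')k^{-1}$, which only changes $k_1,k_2$. A translation by $T\in\mathfrak a_p$ has $\exp T=\ell k$, so
\[\exp(H'+T)=\exp H'\,\ell k=\ell\exp H'\,k,\]
again using that $\ell$ is central. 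Composing these gives $u=\ell k_1'\exp H\,k_2'$, which yields $U=Z(U)\cdot K\exp R_0K$.

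\textbf{Uniqueness.} This is the main step. Suppose $\ell k_1\exp H k_2=\ell'k_1'\exp H'k_2'$ with $H,H'\in R_0$. Then $\exp H'$ is $K$-double-coset equivalent to $z\exp H$ for some $z\in Z(U)$. The key claim is that for every $z\in Z(U)$ there is $T_z\in\mathfrak a_p$ with $\exp T_z\in zK$. Given this, $z\exp H\in K\exp(H+T_z)K$. Applying the uniqueness in Theorem~\ref{theorem_:equivalence_class_double_coset} to the $\Gamma_\Sigma$-representatives of $H+T_z$ and $H'$ in $\overline{Q_0}$ shows that $H'$ and $H+T_z$ lie in the same $\Gamma_\Sigma$-orbit. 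Hence $H,H'$ lie in the same $\Gamma_p$-orbit, and $H=H'$ because $R_0$ meets each orbit once.

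To prove the claim, use that $Z(U)$ lies in every maximal torus. Take a maximal torus of $U$ containing $\exp\mathfrak h$, for instance $\exp(\mathfrak t)$ with $\mathfrak t\supset\mathfrak h$ maximal Abelian in $\mathfrak u$. Since $\theta$ preserves $\mathfrak h$ elementwise up to sign, we may take $\mathfrak t=(\mathfrak t\cap\mathfrak k)\oplus\mathfrak h$. Write $z=\exp(X_k+X_h)=\exp(X_k)\exp(X_h)$. Then $\exp X_h=z\exp(-X_k)\in Z(U)K$, so $T_z:=X_h\in\mathfrak a_p$ and $\exp T_z\in zK$. The delicate point is justifying that a $\theta$-stable maximal torus containing $\exp\mathfrak h$ exists. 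I would take $\mathfrak t$ to be a maximal Abelian subalgebra of the centralizer of $\mathfrak h$ in $\mathfrak u$, which is $\theta$-stable. Its $\mathfrak p$-part commutes with $\mathfrak h$, so by maximality of $\mathfrak h$ in $\mathfrak p$ it equals $\mathfrak h$, giving the required splitting.
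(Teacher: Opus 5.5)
Your proof is correct and has the same structure as the paper's. Existence: each generator of $\Gamma_p=\mathfrak a_p\rtimes W(U,K)$, whether a translation by $T\in\mathfrak a_p$ or $\operatorname{Ad}_k$ with $k\in N_K(\mathfrak h)$, changes $k_1\exp H\,k_2$ only by a central factor and $K$-factors. Uniqueness: absorb the central discrepancy into an $\mathfrak a_p$-translation, then apply the double-coset uniqueness of Theorem~\ref{theorem_:equivalence_class_double_coset}. Your version is somewhat cleaner in three ways:
\begin{itemize}
\item you work with the single central element $\ell'^{-1}\ell$;
\item you verify that $\mathfrak a_p$ is a $W$-stable additive group, so $\Gamma_p$ is well defined;
\item your choice $R_0\subset\overline{Q_0}$ is harmless, and since you never use it, the argument holds for any transversal $R_0$.
\end{itemize}

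The real difference is the key claim that every $z\in Z(U)$ admits $T_z\in\mathfrak a_p$ with $\exp T_z\in zK$. The paper asserts this ``by the definition of the $p$-lattice''. But the definition only says which $H$ lie in $\mathfrak a_p$; it does not say that every central element is reached. So the paper's uniqueness step silently relies on exactly the fact you isolate.

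Your maximal-torus proof of the claim is valid. The splitting works because $\mathfrak z_{\mathfrak u}(\mathfrak h)=\mathfrak z_{\mathfrak k}(\mathfrak h)\oplus\mathfrak h$, and a maximal Abelian subalgebra of it automatically contains the central summand $\mathfrak h$. Such a subalgebra is then maximal Abelian in $\mathfrak u$, so $Z(U)\subset\exp\mathfrak t$.

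There is, however, a one-line route that avoids tori altogether. Apply $U=KAK$ to $z$ itself: writing $z=k\exp(H)\,k'$ gives $\exp H=k^{-1}zk'^{-1}=z\,k^{-1}k'^{-1}\in zK$ by centrality. Your torus argument gives an explicit $T_z$, namely the $\mathfrak h$-component of a logarithm of $z$ in $\mathfrak t$. The KAK route uses only what the theorem already assumes.
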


\begin{proof}[Proof]
Let $u \in U$ and let  \(Q_{0}\) be a fixed component of \(\mathfrak{h}_r\) whose closure contains the origin.  By Theorem~\ref{theorem_:equivalence_class_double_coset},  we can write
$$
u = k_1 \exp H  k_2
$$
with $k_1, k_2 \in K$ and $H \in \Gamma_{\Sigma} \backslash \mathfrak{h}$. By definition of the quotient space and the semi-direct product structure of the affine Weyl group, for some $H_0\in \overline{Q_0}$, $H$ can be expressed as $ r_1 s_1 \cdot H_0$ for all $r_1\in \mathfrak{a}_K, s_1 \in W(U,K)$.

For any $H_1\in \mathfrak{a}_p$, by definition of $p$-lattice, there exists $k_3 \in K, z \in Z(U)$, such that $\exp (H_1) = k_3 \cdot z$.
Hence
\[
u
= z^{-1} \cdot k_1 k_3^{-1} k_3 z\cdot  \exp H  k_2
=z^{-1} \cdot k_1 k_3^{-1} \exp (H_{1})  \exp H  k_2
= z^{-1} \cdot k_1 k_3^{-1} \exp(H_{1}+ H) k_2,
\]
so $H_1 + H = H_1 + r_1 s_1 \cdot H_0 = r_2 r_1 s_1 \cdot H_0$, for some $r_2 \in \mathfrak{a}_p$.
By the arbitrariness of $H_1$, $r_2$ can be any element in $\mathfrak{a}_p$, so $r_2 r_1$ can be any element in $\mathfrak{a}_p$. Moreover, by the semi-direct product structure of the projective affine Weyl group $\Gamma_p = \mathfrak{a}_p  \rtimes W(U,K)$, any element $g\in \Gamma_p$ can be represented as a product $g = r \cdot s $ for some $r \in \mathfrak{a}_p$ and $s\in W(U,K)$ and vice versa. Since $s_1$ can be any element in $W(U,K)$, so $r_2 r_1 s_1 \in \Gamma_p$ and can be any element in $\Gamma_p$.
Hence there exists $H_2 \in \Gamma_{p} \backslash \mathfrak{h} = R_0$, such that
\[
u = z^{-1} \cdot k_1 k_3^{-1} \exp(H_{2}) k_2  \in Z(U) \cdot K \exp(H_2) K.
\]
This proves the existence of such a decomposition of $u\in U$.

For the uniqueness, suppose now that $u$ has two forms of expression,
$$
u = z_1 \cdot r_1 \exp H  r_2
\quad \text{ and }
u = z_2 \cdot l_1 \exp H'  l_2
$$
with $z_1, z_2 \in Z(U), r_1, r_2, l_1, l_2 \in K$ and $H, H' \in R_0$.
By the definition of $p$-lattice, there exists $H_{z_1}, H_{z_2} \in \mathfrak{a}_{p}$, such that $\exp (H_{z_1}) = r_3 \cdot z_1,  \exp (H_{z_2}) = l_3 \cdot z_2$.
Hence
\[
u = z_1 \cdot r_1 \exp H  r_2
= r_1 r_3^{-1} r_3 \cdot z_1 \cdot \exp H  r_2
=  r_1 r_3^{-1} \exp (H_{z_1}) \exp H  r_2
= r_1 r_3^{-1} \exp (H_{z_1}+ H)  r_2,
\]
and similarly
\[
u = l_1 l_3^{-1} \exp (H_{z_2}+ H')  l_2.
\]
Since $r_1 r_3^{-1}, r_2, l_1 l_3^{-1}, l_2 \in K $, by Theorem~\ref{theorem_:equivalence_class_double_coset}, so $H_{z_1}+ H$ and $H_{z_2}+ H'$ are conjugate under the action of the affine Weyl group $\Gamma_{\Sigma}$. Since $H_{z_1}, H_{z_2}\in \mathfrak{a}_p$ and $\Gamma_{p} = \mathfrak{a}_p  \rtimes W(U,K) > \Gamma_{\Sigma}$, so $H$ and $H'$ are conjugate under the action of the projective affine Weyl group $\Gamma_{p}$. Since by assumption $H , H' \in R_0$, this leads to $H = H'$, which proves the uniqueness.
\qedhere
\end{proof}

Hence, given more symmetries on the equivalence relation (Definition~\ref{definition_:projectively_locally_equivalent}), we can restrict our geometry region of different representatives to a smaller one.
To explicitly compute the $p$-lattice for a compact Lie group, generally, it is necessary to handle each case individually, and it turns out to be a non-trivial task.  In the following  Section~\ref{section_:application_SU4}, we will deal with the Riemannian symmetric pair $(U,K) = (\operatorname{SU}(4), \operatorname{SU}(2)\otimes \operatorname{SU}(2))$. We calculate the minimal positive period for each coefficient $c_i$ of elements in $\mathfrak{h}$, which in this case is $\pi/ 2$ and gives the $p$-lattice as defined above.

\section{Key application: $\operatorname{SU}(4)$}
\label{section_:application_SU4}

The KAK theory for $\operatorname{SU}(4)$ is especially important as it serves as the theoretical foundation for two-qubit quantum gates and has broad applications in quantum computing.
In this section, we first apply our above general theory to $\operatorname{SU}(4)$ to derive its KAK decomposition and to calculate the Weyl group using the root system of $\mathfrak{su}(4)$. As an application of Theorem~\ref{theorem_:kak_theorem_compact_version}, we can prove the following theorem.
\begin{theorem}[KAK decomposition for $\operatorname{SU}(4)$]\label{theorem_:KAK_decomposition_for_SU4}
Any element  \(u \in \operatorname{\operatorname{SU}}(4)\) can be written in the following form
\[
u = k_1 a k_2=k_1 \exp \left\{i\left(c_1 \sigma_x^1 \sigma_x^2 + c_2 \sigma_y^1 \sigma_y^2 + c_3 \sigma_z^1 \sigma_z^2 \right)\right\} k_2
\]
where \(k_1, k_2 \in \operatorname{SU}(2) \otimes \operatorname{SU}(2)\), $c_1,c_2,c_3 \in \mathbb{R}$,   \(\sigma_x, \sigma_y, \sigma_z\) are the Pauli matrices, and
\(\sigma_\alpha^1 \sigma_\beta^2 := \sigma_\alpha^1 \otimes \sigma_\beta^2\).
\end{theorem}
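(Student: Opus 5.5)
We will apply Theorem~\ref{theorem_:kak_theorem_compact_version} with $U=\operatorname{SU}(4)$, which is connected, compact and semisimple. The work is to exhibit a vector space decomposition $\mathfrak{su}(4)=\mathfrak{k}\oplus\mathfrak{p}$ satisfying the commutation relations \eqref{equation_:commutation relations_compact}. We must also check that the connected subgroup with Lie algebra $\mathfrak{k}$ is $\operatorname{SU}(2)\otimes\operatorname{SU}(2)$, and that $\mathfrak{a}=\operatorname{span}_{\mathbb R}\{i\sigma_x\otimes\sigma_x,\, i\sigma_y\otimes\sigma_y,\, i\sigma_z\otimes\sigma_z\}$ is a maximal Abelian subspace of $\mathfrak{p}$.

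\textbf{Choice of decomposition.} Use the basis $\{i\,\sigma_\alpha\otimes\sigma_\beta\}$ of $\mathfrak{su}(4)$, where $(\alpha,\beta)\neq(0,0)$ and $\sigma_0=I$. Set
\[
\mathfrak{k}=\operatorname{span}_{\mathbb R}\{i\sigma_\alpha\otimes I,\ iI\otimes\sigma_\alpha\},\qquad
\mathfrak{p}=\operatorname{span}_{\mathbb R}\{i\sigma_\alpha\otimes\sigma_\beta\},
\]
with $\alpha,\beta\in\{x,y,z\}$. This gives $\dim\mathfrak{k}=6$ and $\dim\mathfrak{p}=9$, summing to $15$. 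We verify the relations using $(A\otimes B)(C\otimes D)=AC\otimes BD$, the Pauli product rule $\sigma_\alpha\sigma_\beta=\delta_{\alpha\beta}I+i\epsilon_{\alpha\beta\gamma}\sigma_\gamma$, and the identity
\[
[A\otimes B,C\otimes D]=\tfrac12\bigl([A,C]\otimes\{B,D\}+\{A,C\}\otimes[B,D]\bigr).
\]
\begin{itemize}
\item $[\mathfrak k,\mathfrak k]\subset\mathfrak k$ holds because $\mathfrak k\cong\mathfrak{su}(2)\oplus\mathfrak{su}(2)$.
\item $[\mathfrak k,\mathfrak p]\subset\mathfrak p$ holds because, for example, $[\sigma_\gamma\otimes I,\sigma_\alpha\otimes\sigma_\beta]=[\sigma_\gamma,\sigma_\alpha]\otimes\sigma_\beta$.
\item For $[\mathfrak p,\mathfrak p]\subset\mathfrak k$, take two generators $\sigma_\alpha\otimes\sigma_\beta$ and $\sigma_\gamma\otimes\sigma_\delta$. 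Each factor pair either commutes, when equal, or anticommutes, when distinct. The commutator is nonzero only when exactly one factor pair anticommutes. In that case the product has one factor equal to $I$ and the other a Pauli matrix, so it lies in $\mathfrak k$.
\end{itemize}

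\textbf{The groups $K$ and $A$.} The map $\operatorname{SU}(2)\times\operatorname{SU}(2)\to\operatorname{SU}(4)$, $(a,b)\mapsto a\otimes b$, is a Lie group homomorphism. Its differential is $(X,Y)\mapsto X\otimes I+I\otimes Y$, whose image is exactly $\mathfrak k$. The image $\operatorname{SU}(2)\otimes\operatorname{SU}(2)$ is connected, being the image of a connected group. Hence it is the connected Lie subgroup $K$ with Lie algebra $\mathfrak k$.

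The three elements $\sigma_\alpha\otimes\sigma_\alpha$ pairwise commute, since both factor pairs anticommute and the signs cancel. So $\mathfrak a$ is Abelian.

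\textbf{Maximality of $\mathfrak a$.} This is the step needing the most care. Take a general $Y=i\sum_{\alpha,\beta}m_{\alpha\beta}\sigma_\alpha\otimes\sigma_\beta\in\mathfrak p$ commuting with $\sigma_z\otimes\sigma_z$. The element $\sigma_\alpha\otimes\sigma_\beta$ commutes with $\sigma_z\otimes\sigma_z$ iff both or neither of $\alpha,\beta$ equal $z$. The commutators of the remaining basis elements with $\sigma_z\otimes\sigma_z$ are linearly independent Pauli strings, so $Y$ lies in
\[
\operatorname{span}\{xx,xy,yx,yy,zz\}.
\]
Next impose commutation with $\sigma_x\otimes\sigma_x$. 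Among these, $xy$ and $yx$ fail to commute, producing independent terms, so $m_{xy}=m_{yx}=0$. Therefore $Y\in\mathfrak a$, and $\mathfrak a$ is maximal.

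Alternatively, one could note that $\operatorname{rank}(\mathfrak{su}(4),\mathfrak{k})=3$ via the magic basis isomorphism to $(\mathfrak{su}(4),\mathfrak{so}(4))$. The direct computation, however, is self-contained.

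\textbf{Conclusion.} Theorem~\ref{theorem_:kak_theorem_compact_version} gives $u=k_1\exp(H)k_2$ with $H\in\mathfrak a$. Writing $H=i(c_1\sigma_x^1\sigma_x^2+c_2\sigma_y^1\sigma_y^2+c_3\sigma_z^1\sigma_z^2)$ with real $c_j$ yields the stated form.
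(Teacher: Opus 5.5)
Your proposal is correct and takes the same route as the paper. Both split $\mathfrak{su}(4)=\mathfrak{k}\oplus\mathfrak{p}$ into local and two-body Pauli terms, check the commutation relations, and apply Theorem~\ref{theorem_:kak_theorem_compact_version}. You also verify two points the paper only asserts: that $\mathfrak{a}$ is maximal Abelian (via the centralizer computation against $\sigma_z\otimes\sigma_z$ and $\sigma_x\otimes\sigma_x$), and that the connected subgroup for $\mathfrak{k}$ is $\operatorname{SU}(2)\otimes\operatorname{SU}(2)$ (as the image of $(a,b)\mapsto a\otimes b$).
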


\begin{proof}[Proof]
First note that the Lie group $\operatorname{SU}(4)$ is compact and semisimple. A basis of $\mathfrak{g} = \mathfrak{su}(4)$ in the form of Pauli matrices is given by the following,
\begin{equation}\label{equation_:basis_of_kp_in_su4_in_pauli_matrix_}
\begin{aligned}
\mathfrak{k}
= &
i \left\{ \sigma_x^1, \sigma_y^1, \sigma_z^1, \sigma_x^2, \sigma_y^2, \sigma_z^2 \right\}, \\
\mathfrak{p}
= &
i \left\{ \sigma_x^1 \sigma_x^2, \sigma_x^1 \sigma_y^2, \sigma_x^1 \sigma_z^2, \sigma_y^1 \sigma_x^2, \sigma_y^1 \sigma_y^2, \sigma_y^1 \sigma_z^2, \sigma_z^1 \sigma_x^2, \sigma_z^1 \sigma_y^2, \sigma_z^1 \sigma_z^2 \right\}.
\end{aligned}
\end{equation}
Note that the basis vectors of $\mathfrak{k}$ are shorthand as they omit the tensor product with an identity operator.
We can check by definition that $\mathfrak{k}, \mathfrak{p}$ satisfy the commutation relations (cf. Appendix~\ref{appendix_:Proof of Commutation relations}),
\[ [\mathfrak{k}, \mathfrak{k}] \subset \mathfrak{k}, \quad[\mathfrak{p}, \mathfrak{k}] \subset \mathfrak{p}, \quad[\mathfrak{p}, \mathfrak{p}] \subset \mathfrak{k}.  \]
And the real Lie subalgebra
\[\mathfrak{a} = \langle i \sigma_x^1 \sigma_x^2, i \sigma_y^1 \sigma_y^2, i \sigma_z^1 \sigma_z^2 \rangle_{\mathbb{R}} \]
forms  a maximal Abelian subspace of $\mathfrak{p}$. So applying Theorem~\ref{theorem_:kak_theorem_compact_version}, we get
\[
\operatorname{SU}(4)=K A K,
\]
where \(A = \exp  \mathfrak{a}\) and $K$ is the connected Lie subgroup of $\operatorname{SU}(4)$ corresponding to the Lie algebra \(\mathfrak{k} \subset \mathfrak{su}(4)\), which is indeed $\operatorname{SU}(2) \otimes \operatorname{SU}(2)$. So the theorem holds.
\qedhere
\end{proof}

By Proposition~\ref{proposition_:example_SU_SO_Riemannian symmetric pair}, we know that $(\operatorname{SU}(4), \operatorname{SO}(4))$ is a Riemannian symmetric pair.
Since $\operatorname{SO}(4)$ is isomorphic to $ \operatorname{SU}(2)\otimes\operatorname{SU}(2)$ under the Bell basis transformation (see below in the proof of Proposition~\ref{proposition_:minimum positive period_SU4} or in Appendix~\ref{appendix_:Bell basis transformation}), so
\[(U,K) = (\operatorname{SU}(4), \operatorname{SU}(2)\otimes\operatorname{SU}(2)) \]
is a Riemannian symmetric pair.

First, we compute the Weyl group \(W(\Sigma)\).
Let $X_{i}$ be elements from $\mathfrak{k}$ and $\mathfrak{p}$ given above from the first to the last, i.e.,
\[
\begin{aligned}
&X_1 = i \sigma_x^1 , & & X_2 = i \sigma_y^1 , & & X_3 = i \sigma_z^1 , \\
&X_4 = i \sigma_x^2 , & & X_5 = i \sigma_y^2 , &&X_6 = i \sigma_z^2 , \\
&X_7 = i \sigma_x^1 \sigma_x^2 ,& &X_8 = i \sigma_x^1 \sigma_y^2 , &&X_9 = i \sigma_x^1 \sigma_z^2 , \\
&X_{10} = i \sigma_y^1 \sigma_x^2 ,& &X_{11} = i \sigma_y^1 \sigma_y^2 , & &X_{12} = i \sigma_y^1 \sigma_z^2 , \\
&X_{13} = i \sigma_z^1 \sigma_x^2 , & &X_{14} = i \sigma_z^1 \sigma_y^2 , & &X_{15} = i \sigma_z^1 \sigma_z^2 .\\
\end{aligned}
\]
Let
\begin{equation}\label{equation:X_c1c2c3}
X=i\left(c_1 \sigma_x^1 \sigma_x^2+c_2 \sigma_y^1 \sigma_y^2+c_3 \sigma_z^1 \sigma_z^2\right) \in \mathfrak{a}.
\end{equation}
Since the Killing form of $\mathfrak{su}(4)$ is given by
\[B(X, Y) = 8 \cdot \operatorname{tr}(X Y), \]
for any $X,Y \in \mathfrak{su}(4)$, so it satisfies the following property,
\[
B(X_{i}, X_{j}) = - 32 \delta_{ij}.
\]
By identifying \(\mathfrak{a}\) with the Euclidean space \(\mathbb{R}^3\), we can write $X$ in~\eqref{equation:X_c1c2c3} as a column vector \(\left[c_1, c_2, c_3\right]\).
Let $\mathfrak{g}^{\mathbb{C}}$ and $\mathfrak{a}^{\mathbb{C}}$ be the complexification of $\mathfrak{g}$ and   $\mathfrak{a}$ respectively. Since every Cartan subalgebra of $\mathfrak{su}(4)^{\mathbb{C}}$ has dimension $3$, so $\mathfrak{a}^{\mathbb{C}}$ is a Cartan subalgebra, thus the root system of the $\mathfrak{g}^{\mathbb{C}}$ is given by the following vector space direct sum decomposition
\begin{equation}\label{equation_:root system of su_4}
\mathfrak{g}^{\mathbb{C}} = \mathfrak{a}^{\mathbb{C}} \oplus \bigoplus_{\alpha, \beta \in \{x,y,z\}, \alpha \neq \beta} \mathfrak{g}^{\mathbb{C}}_{\pm e_{\alpha} \pm e_{\beta}},
\end{equation}
where $e_{\alpha}$ are linear functionals satisfying
\[
e_{\alpha} (i \sigma_{\gamma}^{1} \sigma_{\gamma}^{2}) = 2 i \delta_{\alpha \gamma} = \left\{
\begin{array}{ll}
2i & \quad \alpha = \gamma,
\\
0 & \quad \alpha \neq \gamma.
\end{array}
\right.
\]
The computation details of the root system and associated linear functionals are provided in Appendix~\ref{appendix_:Calculation of the root system}.
\begin{proposition}[\cite{Zhang:2003zz}]\label{proposition_:reflections}
For \(X = \left[c_1, c_2, c_3\right]\in \mathfrak{a}\), the reflections of $X$ through \(s_{\pm e_{\alpha} \pm e_{\beta}}\) are given as follows,
\begin{equation}\label{equation_zhang_15:}
\begin{aligned}
s_{e_{z} -e_{y}}(X) & =\left[c_1, c_3, c_2\right], \quad s_{e_{y}+ e_{z}}(X)=\left[c_1,-c_3,-c_2\right],
\\
s_{ e_{y} - e_{x}}(X) & =\left[c_2, c_1, c_3\right], \quad s_{e_{x}+ e_{y}}(X)=\left[-c_2,-c_1, c_3\right],
\\
s_{e_{x} - e_{z}}(X) & =\left[c_3, c_2, c_1\right], \quad s_{e_{x}+ e_{z}}(X)=\left[-c_3, c_2,-c_1\right].
\end{aligned}
\end{equation}
And the Weyl group $W(U,K) = W(\Sigma)$ is generated by these reflections.
\end{proposition}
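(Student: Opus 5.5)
The plan is to compute each reflection $s_\mu$ for $\mu = \pm e_\alpha \pm e_\beta$ directly from the reflection formula in $\mathfrak a$. Under the Killing-form inner product introduced before Definition~\ref{definition_:Sigma_lattice}, the reflection is
\[
s_\mu(X) = X - 2\frac{\mu(X)}{\langle \mu,\mu\rangle} A_\mu .
\]
Generation of $W(U,K)$ then follows from Theorem~\ref{theorem_:Weyl_group_identity_of_2_versions}.

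The first step is to compute $A_{e_\alpha}$. Since $B(X_i,X_j)=-32\delta_{ij}$, a vector $A=\sum_\gamma a_\gamma\, i\sigma^1_\gamma\sigma^2_\gamma$ satisfies $B(A, i\sigma_\gamma^1\sigma_\gamma^2) = -32 a_\gamma$. We need this to equal $e_\alpha(i\sigma^1_\gamma\sigma^2_\gamma)=2i\delta_{\alpha\gamma}$. This forces $a_\gamma = -\tfrac{i}{16}\delta_{\alpha\gamma}$, so
\[
A_{e_\alpha} = -\tfrac{i}{16}\, i\sigma^1_\alpha\sigma^2_\alpha ,
\]
which lies in the complexification $\mathfrak a^{\mathbb C}$. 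Here $\mathfrak a$ plays the role of $\mathfrak h$. By linearity, $A_{\pm e_\alpha\pm e_\beta}=\pm A_{e_\alpha}\pm A_{e_\beta}$. It follows that
\[
\langle e_\alpha,e_\beta\rangle = B(A_{e_\alpha},A_{e_\beta}) = \left(-\tfrac{i}{16}\right)^2(-32)\delta_{\alpha\beta} = \tfrac{1}{8}\delta_{\alpha\beta},
\]
so $\langle\mu,\mu\rangle = \tfrac14$ for every $\mu=\pm e_\alpha\pm e_\beta$ with $\alpha\ne\beta$. This step is where I expect the only real friction. The roots take imaginary values on $\mathfrak a$, so the factors of $i$ in $A_\mu$ must cancel in $\mu(X) A_\mu/\langle\mu,\mu\rangle$ and leave a real vector in $\mathfrak a$. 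This cancellation is also the justification that the reflection preserves the real space $\mathfrak a$, which is gap 6 from the introduction. I will track it carefully: for $X=[c_1,c_2,c_3]$ we have $e_\alpha(X) = 2i c_\alpha$, so $\mu(X)A_\mu$ is a product of $2i(\pm c_\alpha\pm c_\beta)$ and $-\tfrac{i}{16}(\pm \mathbf e_\alpha \pm \mathbf e_\beta)$. This product is real.

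The second step is to substitute. Take $\mu = e_z - e_y$. Then
\[
2\frac{\mu(X)}{\langle\mu,\mu\rangle}A_\mu = 8\cdot 2i(c_3-c_2)\cdot\left(-\tfrac{i}{16}\right)(\mathbf e_z-\mathbf e_y) = (c_3-c_2)(\mathbf e_z - \mathbf e_y),
\]
so that $s_\mu(X) = [c_1, c_3, c_2]$. Now take $\mu = e_y+e_z$. Here the correction term is $(c_2+c_3)(\mathbf e_y+\mathbf e_z)$, giving $[c_1,-c_3,-c_2]$. The remaining four formulas follow by permuting $x,y,z$ cyclically. A reflection $s_\mu$ depends only on $\pm\mu$, so this covers all twelve roots in $\Sigma$. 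All roots in~\eqref{equation_:root system of su_4} are nonzero on $\mathfrak a^{\mathbb C}$ because $\mathfrak a^{\mathbb C}$ is a Cartan subalgebra. Hence $\Sigma$ consists exactly of the $\pm e_\alpha\pm e_\beta$.

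The last step is to invoke Theorem~\ref{theorem_:Weyl_group_identity_of_2_versions}. It identifies $W(U,K)$ with $W(\Sigma)$, the group generated by these six reflections, which gives the final claim.
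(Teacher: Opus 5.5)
Your proposal is correct and follows essentially the same route as the paper. Both compute each $s_\mu$ from the orthogonal reflection formula for the Killing-form inner product on $\mathfrak a$, and then rely on Theorem~\ref{theorem_:Weyl_group_identity_of_2_versions} for generation. The paper uses the Euclidean normal vector (e.g.\ $u=[1,0,-1]$ for $e_x-e_z$) together with the fact that the induced inner product is a scalar multiple of the standard one. Your explicit computation of $A_{e_\alpha}=-\tfrac{i}{16}\, i\sigma^1_\alpha\sigma^2_\alpha\in i\mathfrak a$ additionally makes precise two points the paper handles only loosely: $A_\mu$ lies in $i\mathfrak a$ rather than $\mathfrak h$, and the reflections preserve the real space $\mathfrak a$, which the paper treats only in the subsequent remark.
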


\begin{proof}[Proof]
We can identify \(\mathfrak{a}\) with \(\mathbb{R}^3\). Since the inner product $\langle\cdot, \cdot \rangle$ in $\mathfrak{a}$ is induced from the Killing form $B$ and so is given by the standard inner product up to a constant scalar in the $3$-dimensional Euclidean space.
For the root $e_x - e_z$, so the plane \((e_x - e_z)(Y)=0\) in \(\mathfrak{a}\) is the set
\[
\Pi_{e_x - e_z} =
\left\{Y \in \mathbb{R}^3 \mid u^\top Y=0\right\},
\]
where \(u=[1,0,-1]\).
The reflection of \(X=\left[c_1, c_2, c_3\right]\) with respect to the plane \(\Pi_{e_x - e_z}\) is
\[
s_{e_x - e_z}(X)= X- \frac{2 \langle u,X\rangle}{\|u\|^2} u =\left[c_3, c_2, c_1\right].
\]
Similarly, we can compute all the other reflections \(s_{\pm e_{\alpha} \pm e_{\beta}}\), which gives \eqref{equation_zhang_15:}.
\qedhere
\end{proof}

Therefore, we have a direct corollary,
\begin{corollary}[]\label{corollary_:Weyl group structure}
Each element in the Weyl group is equivalent to either a permutation of the entries of \(\left[c_1, c_2, c_3\right]\), or a permutation with sign flips of two entries. In other word, \(\left[c_1, c_2, c_3\right]\) can be transformed by elements in the Weyl group into forms like
\[[c_i, c_j,c_k], [c_i, - c_j,- c_k], [-c_i, c_j,-c_k],[-c_i, -c_j,c_k] \]
where $(i,j,k)$ is an arbitrary permutation of $(1,2,3)$.
\end{corollary}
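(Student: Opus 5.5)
The plan is to identify the group generated by the six reflections in Proposition~\ref{proposition_:reflections} explicitly with the group $S_3 \ltimes V$. Here $S_3$ acts by permuting coordinates of $[c_1,c_2,c_3]$, and $V=\{\mathrm{id}, \epsilon_{23}, \epsilon_{13}, \epsilon_{12}\}$ is the Klein four-group of even sign changes, with $\epsilon_{jk}$ negating entries $j$ and $k$. Once $W(U,K)=W(\Sigma)=S_3\ltimes V$ is established, the listed forms are exactly the images of $[c_1,c_2,c_3]$ under this group. That is the statement of the corollary, using Theorem~\ref{theorem_:Weyl_group_identity_of_2_versions} and Proposition~\ref{proposition_:reflections}.

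Step one shows that every generator lies in $G:=S_3\ltimes V$, and that $G$ is a group of order $24$. The three reflections $s_{e_z-e_y}$, $s_{e_y-e_x}$, $s_{e_x-e_z}$ are the transpositions $(23),(12),(13)$. The other three are a transposition composed with an even sign flip; for example $s_{e_y+e_z}=\epsilon_{23}\circ(23)$. Conjugating an even sign change by a permutation gives again an even sign change, since $\pi\epsilon_{jk}\pi^{-1}=\epsilon_{\pi(j)\pi(k)}$. Hence $V$ is normalized by $S_3$ and $G$ is closed under composition. So $W(\Sigma)\subseteq G$.

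Step two proves the reverse inclusion. The transpositions among the generators already generate all of $S_3$. Moreover $s_{e_y+e_z}\circ s_{e_z-e_y}=\epsilon_{23}(23)(23)=\epsilon_{23}$, so $\epsilon_{23}\in W(\Sigma)$. Conjugating $\epsilon_{23}$ by permutations yields $\epsilon_{13}$ and $\epsilon_{12}$. Hence $V\subseteq W(\Sigma)$ and $W(\Sigma)=G$.

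Finally, every element of $G$ can be written uniquely as $\epsilon\circ\pi$ with $\epsilon\in V$ and $\pi\in S_3$. Applying it to $[c_1,c_2,c_3]$ gives $[c_i,c_j,c_k]$ with $(i,j,k)=\pi^{-1}$-permuted indices, followed by no sign flip or by flips of exactly two entries. These are precisely the four displayed forms.

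The main obstacle is bookkeeping: the composition order and the action convention (permuting positions versus indices) must be applied consistently when writing each $s_{\pm e_\alpha\pm e_\beta}$ as $\epsilon\circ\pi$. Since the final description is symmetric over all permutations and all even sign patterns, either convention gives the same set of forms. I would therefore verify a single product such as $s_{e_y+e_z}s_{e_z-e_y}$ directly on coordinates and rely on the symmetry for the rest.
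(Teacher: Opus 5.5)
Your proposal is correct and follows the route the paper intends. The paper calls this a direct corollary of the six reflections listed in Proposition~\ref{proposition_:reflections} together with $W(U,K)=W(\Sigma)$, and gives no further argument; your proof supplies the steps it leaves implicit. These are closure of the set of permutations with even sign flips (via $\pi\epsilon_{jk}\pi^{-1}$ being again an even sign change), and the reverse inclusion obtained from $s_{e_y+e_z}\circ s_{e_z-e_y}=\epsilon_{23}$ and conjugation, which together give $W(\Sigma)=S_3\ltimes V$.
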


\begin{remark}[]\label{remark_:}
The Weyl group defined by the root system acts on the complex vector space $\mathfrak{a}^{\mathbb{C}}$. But here from Proposition~\ref{proposition_:reflections}, all those reflections preserve the real vector space $\mathfrak{a}$. So the Weyl group action can be restricted to  $\mathfrak{a}$. Hence we can talk about the action of the Weyl group on this $3$-dimensional Euclidean space as in Theorem~\ref{theorem_:equivalence_class_double_coset} and Theorem~\ref{theorem_:equivalence_class_double_coset_projective}. This is a technical detail that is missing in quantum computing literature~\cite{Zhang:2003zz}.
\end{remark}

By the KAK decomposition Theorem~\ref{theorem_:KAK_decomposition_for_SU4}, we have the following expression for any \(u \in \operatorname{SU}(4)\),
\[
u = k_1 a k_2=k_1
\exp \left\{i\left(c_1 \sigma_x^1 \sigma_x^2+c_2 \sigma_y^1 \sigma_y^2+c_3 \sigma_z^1 \sigma_z^2\right)\right\}
k_2.
\]
Note that the expression for $a$ is periodic in each $c_{k}$ since an obvious result is $2\pi$ in each $c_i$. In the next two sections, we will do detailed calculations to find the period and use the results to determine corresponding equivalence classes.

\subsection{Double coset equivalence classes}
\label{subsection_:double_coset_equivalence_classes}
In this subsection, we derive the double coset equivalence classes. This corresponds to the affine Weyl group $\Gamma_{\Sigma}$. Since we have already calculated the Weyl group, so next we need to calculate the $K$-lattice according to Definition~\ref{definition_:K-lattice}. By \eqref{equation_:relations_lattices}, we only need to calculate the unit lattice.

\begin{proposition}[]\label{proposition_:unit lattice of SU4}
The unit lattice $\mathfrak{a}_e$ is given by
\begin{equation}\label{equation_:unit lattice_SU4}
\mathfrak{a}_{e} =
\left\{[c_1, c_2, c_3]\in \mathbb{R}^{3}
\middle|
\begin{aligned}
&c_i \in \pi\mathbb{Z}, \\
&c_i / \pi \text{ are all even or} \\
& \text{only two of them are odd. }
\end{aligned}
\right\}.
\end{equation}
\end{proposition}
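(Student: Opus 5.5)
We compute $\exp X$ for $X = i(c_1\sigma_x^1\sigma_x^2+c_2\sigma_y^1\sigma_y^2+c_3\sigma_z^1\sigma_z^2)$ by simultaneously diagonalizing the three commuting operators. The key observation is that $\sigma_x\otimes\sigma_x$, $\sigma_y\otimes\sigma_y$, $\sigma_z\otimes\sigma_z$ are diagonal in the Bell basis $\{|\Phi^\pm\rangle,|\Psi^\pm\rangle\}$ with eigenvalues given by sign patterns. Explicitly, on $|\Phi^+\rangle,|\Phi^-\rangle,|\Psi^+\rangle,|\Psi^-\rangle$ they act as $(1,-1,1,-1)$, $(-1,1,1,-1)$ and $(1,1,-1,-1)$ respectively. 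This is consistent with $\sigma_x\sigma_x\cdot\sigma_y\sigma_y=-\sigma_z\sigma_z$. Hence $\exp X$ is diagonal in the Bell basis with entries
\[
e^{i(c_1-c_2+c_3)},\quad e^{i(-c_1+c_2+c_3)},\quad e^{i(c_1+c_2-c_3)},\quad e^{-i(c_1+c_2+c_3)}.
\]
I would double-check these sign patterns against the appendix on the Bell basis transformation, but the argument only uses that they are these four combinations up to relabeling.

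The condition $\exp X = e$ is then equivalent to all four linear forms lying in $2\pi\mathbb{Z}$. Write $c_i=\pi t_i$ and call the four forms $a,b,c,d$. Then
\[
a+b = 2c_3,\quad a+c=2c_1,\quad b+c=2c_2 .
\]
Each of $2c_i$ is therefore in $2\pi\mathbb{Z}$, so $c_i\in\pi\mathbb{Z}$ and each $t_i\in\mathbb{Z}$.

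Given integrality, the remaining conditions are $t_1-t_2+t_3$, $-t_1+t_2+t_3$, $t_1+t_2-t_3$, $t_1+t_2+t_3$ all even. Modulo $2$ every one of these equals $t_1+t_2+t_3$, so the only condition is $t_1+t_2+t_3\equiv 0 \pmod 2$. This means either all $t_i$ are even or exactly two are odd, which is the stated description.

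Conversely, any such $(c_1,c_2,c_3)$ makes all four exponents lie in $2\pi i\mathbb{Z}$, so $\exp X = I$. This gives equality of sets rather than mere inclusion.

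The only genuinely delicate step is the correct Bell-basis eigenvalue table and the consistency of signs. I would verify it directly by applying $\sigma_z\otimes\sigma_z$ and $\sigma_x\otimes\sigma_x$ to $|00\rangle\pm|11\rangle$ and $|01\rangle\pm|10\rangle$, and deducing $\sigma_y\otimes\sigma_y$ from the product relation. The parity argument is insensitive to the sign conventions, since changing the sign of any $c_j$ in a form does not change it mod $2\pi$ once the $c_j\in\pi\mathbb{Z}$ are established. The integrality step, however, requires that pairwise sums of the forms isolate $2c_i$, and that does hold for any such pattern.
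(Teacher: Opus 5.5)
Your proposal is correct and follows essentially the same route as the paper: diagonalize $\exp X$ in the Bell basis, require the four linear forms in the $c_i$ to lie in $2\pi\mathbb{Z}$, add them pairwise to get $c_i\in\pi\mathbb{Z}$, and finish with a parity count. Your reduction of all four parity conditions to $t_1+t_2+t_3\equiv 0 \pmod 2$, together with the explicit converse inclusion, states the last step a little more cleanly than the paper does, but it is the same argument.
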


\begin{proof}[Proof]
Since $\sigma_x^1 \sigma_x^2, \sigma_y^1 \sigma_y^2, \sigma_z^1 \sigma_z^2$ commute with each other, we can simultaneously diagonalize them using the Bell basis
\[
\begin{aligned}
\frac{1}{\sqrt{2}}(|00\rangle+|11\rangle),&
\frac{i}{\sqrt{2}}(|01\rangle+|10\rangle),
\\
\frac{1}{\sqrt{2}}(|01\rangle-|10\rangle),&
\frac{i}{\sqrt{2}}(|00\rangle-|11\rangle).
\end{aligned}
\]
In the Bell basis, the matrix of \(U\in \operatorname{SU}(4)\) can be written as
\[
U_B=Q^{\dagger} U Q,
\]
where
\[
Q=\frac{1}{\sqrt{2}}\left(\begin{array}{cccc}
1 & 0 & 0 & i \\
0 & i & 1 & 0 \\
0 & i & -1 & 0 \\
1 & 0 & 0 & -i
\end{array}\right).
\]
Explicitly, the matrices of  operators $\sigma_x^1 \sigma_x^2, \sigma_y^1 \sigma_y^2, \sigma_z^1 \sigma_z^2$ in the Bell basis are given as follows,

\[
\begin{pmatrix}
1 & 0 & 0 &  0  \\
0 & 1 & 0 & 0  \\
0 & 0 & -1 & 0  \\
0 & 0 & 0 & -1
\end{pmatrix},
\begin{pmatrix}
-1 & 0 & 0 &  0  \\
0 & 1 & 0 & 0  \\
0 & 0 & -1 & 0  \\
0 & 0 & 0 & 1
\end{pmatrix},
\begin{pmatrix}
1 & 0 & 0 &  0  \\
0 & -1 & 0 & 0  \\
0 & 0 & -1 & 0  \\
0 & 0 & 0 & 1
\end{pmatrix}.
\]
Since $A = \exp(\mathfrak{a})$ is a compact Lie subgroup of $U$, by Proposition~\ref{proposition_:exponential map compact onto} the exponential map $\exp : \mathfrak{a}\rightarrow A$ is onto, so the operator
\[
\exp \left\{i\left(c_1 \sigma_x^1 \sigma_x^2 + c_2 \sigma_y^1 \sigma_y^2+c_3 \sigma_z^1 \sigma_z^2 \right)\right\}
\]
can be expressed in the matrix form
\begin{equation}\label{equation_:diagonal_A}
\exp \left(
i
\begin{pmatrix}
c_{1} - c_{2}+ c_{3} & 0 & 0 &  0  \\
0 & c_{1} + c_{2} - c_{3} & 0 & 0  \\
0 & 0 & - c_{1} - c_{2} - c_{3} & 0  \\
0 & 0 & 0 & -c_{1} + c_{2}+ c_{3}
\end{pmatrix}
\right)
\end{equation}
in the Bell basis.

By definition of unit lattice,
\[
\exp \left\{i\left(c_1 \sigma_x^1 \sigma_x^2 + c_2 \sigma_y^1 \sigma_y^2+c_3 \sigma_z^1 \sigma_z^2 \right)\right\}  = I_4
\]
 if and only if  the elements
\[
c_{1} - c_{2}+ c_{3}, \quad
c_{1} + c_{2} - c_{3} , \quad
- c_{1} - c_{2} - c_{3}, \quad
-c_{1} + c_{2}+ c_{3}
\]
are all in the set $2\pi \mathbb{Z}$.
Adding each pair of them gives $c_i \in \pi\mathbb{Z}$.
Now, all four elements divided by $\pi$ are in $2\mathbb{Z}$,
so the only possibilities are that $c_i/\pi$ are all even or only two of them are odd, which gives the unit lattice $\mathfrak{a}_e$.
\qedhere
\end{proof}

As a corollary, we can get the $K$-lattice using \eqref{equation_:relations_lattices},
\begin{proposition}[]\label{proposition_:K-lattice_SU4}
The $K$-lattice $\mathfrak{a}_K$ is given by
\begin{equation}\label{equation_:K-lattice_SU4}
\mathfrak{a}_{K} =
\left\{[c_1, c_2, c_3]\in \mathbb{R}^{3}
\middle|
\begin{aligned}
&2 c_i \in \pi\mathbb{Z}, \\
&2 c_i / \pi \text{ are all even or} \\
& \text{only two of them are odd. }
\end{aligned}
\right\}
\end{equation}
\end{proposition}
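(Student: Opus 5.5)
The plan is to obtain $\mathfrak{a}_K$ directly from the relation \eqref{equation_:relations_lattices}, namely $\mathfrak{a}_K=\frac12\mathfrak{a}_e$, combined with the explicit description of $\mathfrak{a}_e$ in Proposition~\ref{proposition_:unit lattice of SU4}. The first step is to check that the hypotheses of \cite[Theorem 8.5 of Ch.~VII]{Helgason2001Differential} hold for $(U,K)=(\operatorname{SU}(4),\operatorname{SU}(2)\otimes\operatorname{SU}(2))$:
\begin{itemize}
\item $\operatorname{SU}(4)$ is simply-connected and compact semisimple;
\item $K\cong\operatorname{SU}(2)\times\operatorname{SU}(2)/\{\pm(I,I)\}$ is connected, being the image of a connected group;
\item $(U,K)$ is a Riemannian symmetric pair, as noted after Theorem~\ref{theorem_:KAK_decomposition_for_SU4} via Proposition~\ref{proposition_:example_SU_SO_Riemannian symmetric pair} and the Bell basis conjugation taking $\operatorname{SO}(4)$ to $\operatorname{SU}(2)\otimes\operatorname{SU}(2)$.
\end{itemize}
By Remark~\ref{remark_:symmetric_space_definition}, $U/K$ is then of compact type. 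The lattices are computed with respect to the same $\mathfrak{h}=\mathfrak{a}$ with coordinates $[c_1,c_2,c_3]$, so the relation applies verbatim.

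Next, I will transport the description of $\mathfrak{a}_e$ by scaling. We have $H=[c_1,c_2,c_3]\in\mathfrak{a}_K$ iff $2H=[2c_1,2c_2,2c_3]\in\mathfrak{a}_e$. Substituting $2c_i$ for $c_i$ in \eqref{equation_:unit lattice_SU4} yields exactly \eqref{equation_:K-lattice_SU4}: $2c_i\in\pi\mathbb{Z}$, with $2c_i/\pi$ all even or exactly two odd. This step is purely formal.

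As a sanity check, and to guard against any gap in applying the cited theorem, I would also verify one inclusion directly. The second Bell-basis reformulation is immediate, because $Q$ is fixed, so $Q^\dagger KQ=\operatorname{SO}(4)$ and membership of $\exp H$ in $K$ becomes membership of the diagonal matrix \eqref{equation_:diagonal_A} in $\operatorname{SO}(4)$. A diagonal unitary is real orthogonal iff its entries are $\pm1$, and its determinant is automatically $1$. So $\exp H\in K$ iff the four phases $c_1-c_2+c_3,\ c_1+c_2-c_3,\ -c_1-c_2-c_3,\ -c_1+c_2+c_3$ all lie in $\pi\mathbb{Z}$. This is the unit-lattice condition with $2\pi$ replaced by $\pi$, and the same pairwise-sum argument as in Proposition~\ref{proposition_:unit lattice of SU4} gives the stated parity description. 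This independent computation confirms the factor $\frac12$.

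The step I expect to be the main obstacle is justifying that the Bell-basis conjugation really carries $K=\operatorname{SU}(2)\otimes\operatorname{SU}(2)$ onto $\operatorname{SO}(4)$. This is the first half of the reformulation in the previous paragraph, and it is also what identifies $(U,K)$ as a Riemannian symmetric pair. I would handle it at the Lie algebra level. I would check that $Q^\dagger\mathfrak{k}Q$ consists of real antisymmetric matrices; it is $6$-dimensional, so it equals $\mathfrak{so}(4)$. Both groups are connected, so the groups coincide. The remaining points are routine.
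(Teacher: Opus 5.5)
Your proposal is correct and follows the paper's route: the paper obtains $\mathfrak{a}_K$ as an immediate corollary of $\mathfrak{a}_K=\frac12\mathfrak{a}_e$ from \eqref{equation_:relations_lattices}, combined with the unit lattice of Proposition~\ref{proposition_:unit lattice of SU4}, exactly as in your scaling step. Your additional Bell-basis check is also valid and proves both inclusions, not just one: it uses $Q^\dagger K Q=\operatorname{SO}(4)$, the fact that a diagonal unitary of determinant $1$ lies in $\operatorname{SO}(4)$ iff its entries are $\pm1$, and hence that $\exp H\in K$ iff the four phases lie in $\pi\mathbb{Z}$. It therefore gives a self-contained derivation that does not depend on the cited theorem from Helgason.
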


The diagram of the $K$-lattice in the $c_1 c_2$-plane is given in Figure~\ref{fig:klattice}. It is the same for other two planes ($c_2 c_3$-plane, $c_3 c_1$-plane). In the $3$-dimensional space $c_1c_2c_3$,  the fundamental period (minimum positive period) is $\pi$ along each coordinate direction ($c_1, c_2, c_3$).

\begin{figure}
\centering
\includegraphics[width=0.8\linewidth]{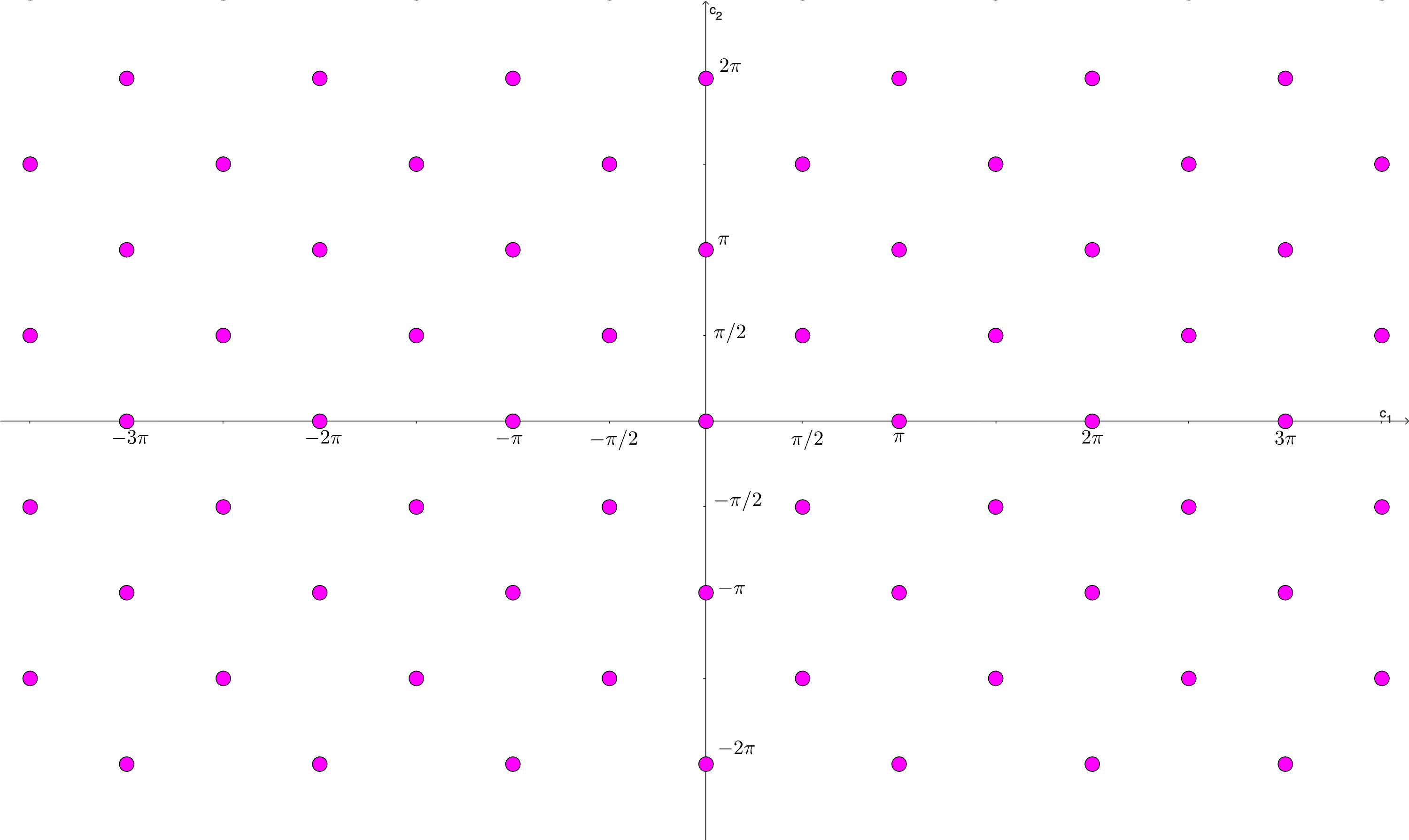}
\caption{$K$-lattice (in the $c_1 c_2$-plane).}
\label{fig:klattice}
\end{figure}

Combining the Weyl group,  where the reflections are either permutations or permutations with sign flips of two entries in \(\left[c_1, c_2, c_3\right]\) from \eqref{equation_zhang_15:},  we get the following criterion for determining distinct  double coset equivalence classes.

\begin{proposition}[]\label{proposition_:equivalence_relation_double coset}
If \(\left[c_1, c_2, c_3\right]\) is a coordinate of \([u]\) as an double coset equivalence class, then
\begin{equation}\label{equation_:double_coset_equivalence_class_relation}
\begin{aligned}
& \left[c_i, c_j, c_k\right], \left[c_i+ \pi, c_j, c_k\right], \left[c_i+ \pi/2, c_j+ \pi/2, c_k\right], \\
& \left[\pi/2-c_i, \pi/2-c_j, c_k\right],\left[c_i, \pi/2-c_j, \pi/2-c_k\right], \left[\pi/2-c_i, c_j, \pi/2-c_k\right]
\end{aligned}
\end{equation}
are also coordinates for \([u]\), where \((i, j, k)\) can be any permutation of \((1,2,3)\). And \eqref{equation_:double_coset_equivalence_class_relation} generates the full set of equivalence relations for double coset equivalence class.
\end{proposition}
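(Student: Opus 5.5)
The plan is to use Theorem~\ref{theorem_:equivalence_class_double_coset} together with the semidirect product structure $\Gamma_\Sigma=\Gamma_K=\mathfrak{a}_K\rtimes W(U,K)$ from \eqref{equation_:affine Weyl group_K}. That theorem says two points $H,H'\in\mathfrak{a}$ give the same double coset $K\exp H K = K\exp H' K$ exactly when they lie in the same $\Gamma_K$-orbit. Indeed, both are $\Gamma_\Sigma$-conjugate to the unique representative in $\overline{Q_0}$. Conversely, translation by $\mathfrak{a}_K$ multiplies $\exp H$ by an element of $K$, because $\mathfrak{a}$ is abelian. Likewise, each Weyl element is realized by $\operatorname{Ad}_k$ with $k\in N_K(\mathfrak{h})$, so it conjugates $\exp H$ by an element of $K$. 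The proof therefore reduces to a statement about group generation: the transformations listed in \eqref{equation_:double_coset_equivalence_class_relation} all lie in $\Gamma_K$, and together they generate $\Gamma_K$.

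First I will check that each listed map lies in $\Gamma_K$. The permutations $[c_i,c_j,c_k]$ belong to $W(U,K)$ by Proposition~\ref{proposition_:reflections} and Corollary~\ref{corollary_:Weyl group structure}. The translations by $[\pi,0,0]$ and by $[\pi/2,\pi/2,0]$ lie in $\mathfrak{a}_K$ by Proposition~\ref{proposition_:K-lattice_SU4}: in each case $2c/\pi$ is all even, respectively has exactly two odd entries. The same holds for their permutations. The map $[\pi/2-c_i,\pi/2-c_j,c_k]$ is the composite of the Weyl element $[c_1,c_2,c_3]\mapsto[-c_1,-c_2,c_3]$ with translation by $[\pi/2,\pi/2,0]\in\mathfrak{a}_K$. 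The last two maps in the list are handled the same way.

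Next I will show generation. Every element of $\Gamma_K$ has the form $r\circ s$ with $r\in\mathfrak{a}_K$ and $s\in W(U,K)$, so it suffices to generate $W(U,K)$ and $\mathfrak{a}_K$ separately. For the Weyl group, Corollary~\ref{corollary_:Weyl group structure} shows that $W(U,K)$ consists of all permutations, composed with sign flips of an even number of coordinates. The permutations are in the list directly. A double sign flip $[-c_1,-c_2,c_3]$ arises by composing $[\pi/2-c_1,\pi/2-c_2,c_3]$ with the translations $-[\pi/2,\pi/2,0]$. Negative translations are available because the listed moves generate a group, so their inverses are allowed. For the lattice, I must show that $\mathfrak{a}_K$ is spanned over $\mathbb{Z}$ by the permutations of $[\pi,0,0]$ and $[\pi/2,\pi/2,0]$. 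Take $H\in\mathfrak{a}_K$ and write $H=\frac{\pi}{2}(m_1,m_2,m_3)$ with the number of odd $m_i$ equal to $0$ or $2$. If two are odd, subtract the appropriate vector $[\pi/2,\pi/2,0]$ (permuted) to make all $m_i$ even. An all-even vector is an integer combination of permuted copies of $[\pi,0,0]$.

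The step I expect to need the most care is the identification of ``same double coset'' with ``same $\Gamma_K$-orbit'', particularly the converse direction. For this I will lean on the uniqueness part of Theorem~\ref{theorem_:equivalence_class_double_coset} and the equality $\Gamma_\Sigma=\Gamma_K$ from \eqref{equation_:relations_lattices}, rather than constructing explicit elements of $K$. I will also note that $K=\operatorname{SU}(2)\otimes\operatorname{SU}(2)$ is connected and $U=\operatorname{SU}(4)$ is simply connected, so the hypotheses of that theorem hold. Once this is in place, the rest is the bookkeeping described above.
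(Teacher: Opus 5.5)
Your proposal is correct and follows the paper's route. The paper gives no separate proof: it gets the proposition by combining the Weyl group of Proposition~\ref{proposition_:reflections} with the $K$-lattice of Proposition~\ref{proposition_:K-lattice_SU4}, via $\Gamma_\Sigma=\Gamma_K=\mathfrak{a}_K\rtimes W(U,K)$ and Theorem~\ref{theorem_:equivalence_class_double_coset}. Your explicit check that the listed moves lie in $\Gamma_K$ and generate it (recovering the double sign flips and all of $\mathfrak{a}_K$) supplies the bookkeeping the paper leaves implicit.
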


Therefore, we can reduce the symmetry in the $3$-dimensional Euclidean space using these equivalences to get a geometric representation of the double coset equivalence classes.

\begin{proposition}[]\label{proposition_:criterion_equivalence_classes_double_coset}
Let $\vec{c} = [c_1, c_2, c_3]$ be a coordinate for  $u\in \operatorname{SU}(4)$,  we can transform \(\vec{c}\) using \eqref{equation_:double_coset_equivalence_class_relation} such that it lies in
\begin{equation}\label{equation_:geometric_representation-tetrahedral cell_2}
T:=
\left\{
\vec{c} \in \mathbb{R}^3
\middle|
 \pi/2>c_1 \geq c_2 \geq |c_3| \geq 0
, c_1+c_2 \leq \pi/2
\right\}.
\end{equation}
And each element in $T$ represents a unique double coset equivalence class.
\end{proposition}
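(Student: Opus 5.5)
The plan is to identify $T$ with the closure $\overline{Q_0}$ of a cell of $\mathfrak{h}_r$ whose closure contains the origin, and then apply Theorem~\ref{theorem_:equivalence_class_double_coset}. That theorem gives both parts of the statement at once: every $u$ has a coordinate in $\overline{Q_0}$, and that coordinate is unique. Equivalently, by the structure $\Gamma_\Sigma=\Gamma_K=\mathfrak{a}_K\rtimes W(U,K)$ from \eqref{equation_:affine Weyl group_K}, $\overline{Q_0}$ is a fundamental domain for the moves listed in \eqref{equation_:double_coset_equivalence_class_relation}. So the work is to compute the diagram $D(U,K)$ explicitly and read off the cell.

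First I compute the diagram. By the root decomposition \eqref{equation_:root system of su_4}, $\Sigma=\{\pm e_\alpha\pm e_\beta\}$. Since $e_\alpha(X)=2ic_\alpha$ for $X=[c_1,c_2,c_3]$, we get $(\pm e_\alpha\pm e_\beta)(X)=\pm 2i(c_\alpha\pm c_\beta)$. Hence
\[
D(U,K)=\{X:\ c_\alpha\pm c_\beta\in \tfrac{\pi}{2}\mathbb{Z}\ \text{for some }\alpha\neq\beta\},
\]
a family of planes.

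Next I pick the chamber $c_1>c_2>|c_3|$ of $W(\Sigma)$, whose walls are $c_1=c_2$, $c_2=c_3$ and $c_2=-c_3$, consistent with Corollary~\ref{corollary_:Weyl group structure}. Inside this chamber, the component of $\mathfrak{h}_r$ adjacent to $0$ is cut off by the first affine wall met. The candidates are the levels $c_\alpha\pm c_\beta=\pi/2$. On the chamber we have $c_1+c_2\ge c_1\pm c_3$ and $c_1+c_2\ge c_2\pm c_3$, so the binding constraint is $c_1+c_2<\pi/2$. Thus $Q_0=\{c_1>c_2>|c_3|,\ c_1+c_2<\pi/2\}$, an open tetrahedron with vertices $0$, $[\pi/4,\pi/4,\pm\pi/4]$ and $[\pi/2,0,0]$. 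Its closure is obtained by relaxing all inequalities to non-strict ones.

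I also check that the generators in \eqref{equation_:double_coset_equivalence_class_relation} are exactly $W(\Sigma)$ together with translations by $\mathfrak{a}_K$ from Proposition~\ref{proposition_:K-lattice_SU4}. The entries $[c_i+\pi,\dots]$ and $[c_i+\pi/2,c_j+\pi/2,c_k]$ generate $\mathfrak{a}_K$. The "$\pi/2-$" entries are reflections in the affine walls $c_i+c_j=\pi/2$, which are compositions of a two-sign-flip Weyl element with a lattice translation. Conversely, $W(\Sigma)$ is generated by permutations and paired sign flips, by Proposition~\ref{proposition_:reflections}. Hence the orbits of the listed moves coincide with the $\Gamma_K$-orbits, and Theorem~\ref{theorem_:equivalence_class_double_coset} yields both existence and uniqueness on $\overline{Q_0}$.

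The main obstacle is the boundary. I have to reconcile $\overline{Q_0}$ with $T$ as written, which has the strict inequality $c_1<\pi/2$. Given $c_1+c_2\le\pi/2$ and $c_2\ge0$, that inequality removes only the vertex $[\pi/2,0,0]$. I would first test whether $[\pi/2,0,0]$ is $\Gamma_K$-equivalent to another point of $\overline{Q_0}$ by a lattice translation. The translation $[\pi/2,0,0]\mapsto[0,0,0]$ is not available, since $[\pi/2,0,0]\notin\mathfrak{a}_K$ (it has exactly one odd entry). So the uniqueness part of Theorem~\ref{theorem_:equivalence_class_double_coset} suggests this vertex is a genuine, separate class. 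The strict inequality therefore needs a careful check. I expect either to amend it to $c_1\le\pi/2$, or to find that the intended statement treats this vertex separately. Apart from this point, the argument is a direct application of the cited theorem.
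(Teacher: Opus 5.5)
Your argument is correct. For uniqueness it is the paper's own argument: compute $D(U,K)$ from the roots $\pm e_\alpha\pm e_\beta$, show that the planes $c_1=c_2$, $c_2=\pm c_3$ and $c_1+c_2=\pi/2$ bound a cell $Q_0$ with $0\in\overline{Q_0}$ and no other diagram plane through its interior, then invoke Theorem~\ref{theorem_:equivalence_class_double_coset}. The difference is in existence. The paper runs an explicit reduction: translate into $[0,\pi/2)^2\times[-\pi/2,\pi/2)$, sort, fold across $c_1+c_2=\pi/2$, and finally flip via $[c_1,-c_3,-c_2]$. That procedure is constructive but only sketched: the final flip can break both the ordering and the bound $c_1+c_2\le\pi/2$ (try $[0.5,0.05,-1.4]$), so the steps have to be iterated. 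You instead take existence from the same theorem, and you check directly that the listed moves generate exactly $\Gamma_K=\mathfrak{a}_K\rtimes W(U,K)$, which the paper leaves implicit. Your route is uniform, and it is what exposes the boundary defect.

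Your worry about the vertex is justified, and you can settle it rather than leave two options open. Since $[\pi/2,0,0]\in\overline{Q_0}$, the uniqueness clause of Theorem~\ref{theorem_:equivalence_class_double_coset} already says that no other point of $\overline{Q_0}$, and hence no point of $T$, lies in its class. Concretely, $\exp(i\frac{\pi}{2}\sigma_x^1\sigma_x^2)=i\,\sigma_x\otimes\sigma_x$ and $\sigma_x\otimes\sigma_x=-(i\sigma_x)\otimes(i\sigma_x)\in K$. So this vertex is the class of $\pm iI_4$, which differs from the class of $I_4$ because $iI_4\notin K$. The paper itself confirms this. Its table places $\operatorname{I}'=iI_4$ at $[\pi/2,0,0]$ in the $T$-cell, and its algorithm applied to this class runs $[0,0,-\pi/2]\to[0,\pi/2,0]\to[\pi/2,0,0]$, ending outside $T$. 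So the statement as written fails for exactly one double coset. The correct region is $\overline{Q_0}=\{c_1\ge c_2\ge|c_3|,\ c_1+c_2\le\pi/2\}$: replace $\pi/2>c_1$ by $c_1\le\pi/2$, which, given the other inequalities, adds only this vertex. Treating the vertex ``separately'' is not an option, because any subset of $\overline{Q_0}$ that meets every class must be all of $\overline{Q_0}$.
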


We call the above region $T$ a \textit{tetrahedral cell} (or $T$-cell) and its graph is given in Figure~\ref{fig:tetrahedral-cell2}. The vertices are $A = (\pi/4, \pi/4, \pi/4), B = (\pi/2, 0, 0 ), U = (\pi/4, \pi/4, -\pi/4)$. Tetrahedrons $OABC$ and $OUBC$ are inside the top and bottom cubes respectively and $A,U$ are the center points of each cube. The tetrahedron $OABU$ shows the region $T$ and each point represents a unique equivalence class.

\begin{figure}[H]
\centering
\includegraphics[width=1\linewidth]{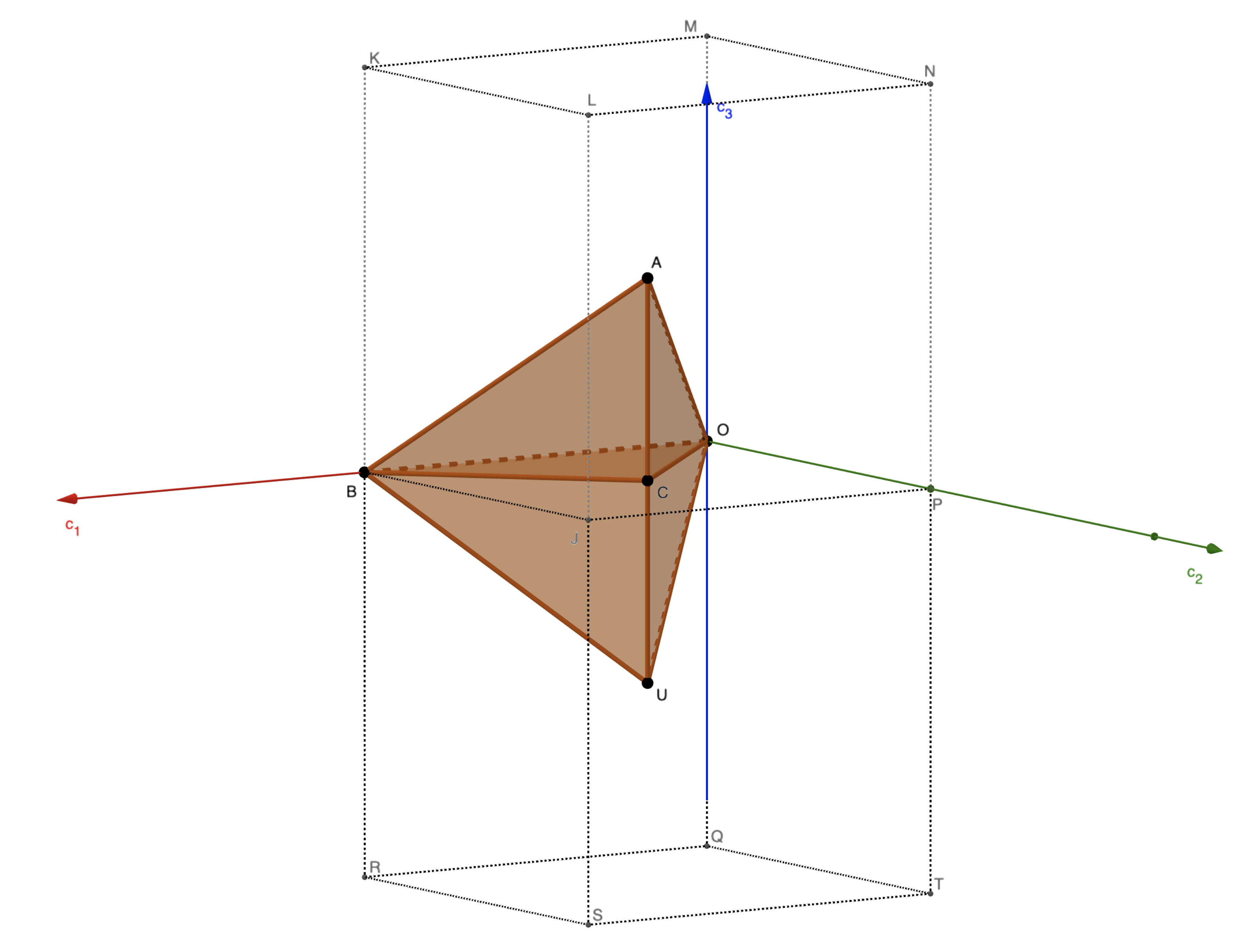}
\caption{Tetrahedral cell ($T$-cell)  given by \eqref{equation_:geometric_representation-tetrahedral cell_2}.}
\label{fig:tetrahedral-cell2}
\end{figure}

\begin{proof}[Proof of Proposition~\ref{proposition_:criterion_equivalence_classes_double_coset}]
For \(\vec{c} = [c_1, c_2, c_3]  \in \mathbb{R}^3\). We implement the following algorithm.

Via translation relations
\[
\left[c_i, c_j, c_k\right] \sim \left[c_i, c_j, c_k+\pi\right] \sim \left[c_i+ \pi/2, c_j+ \pi/2, c_k\right],
\]
we can make  \(c_1, c_2 \in\left[0, \pi/2\right)\) and $c_3\in [-\pi/2,\pi/2)$.
Via permutations of each pair of entries, we can make \(c_1 \geq c_2 \geq c_3\).

If \(c_1+c_2> \pi/2 \), then transform using
\[
[c_1, c_2, c_3] \sim\left[\pi/2- c_2, \pi/2-c_1, c_3\right].
\]
If \(c_3\) is larger than any of \(\pi/2-c_2, \pi/2-c_1\), permute those two  entries again. Note that \(\pi/2-c_2+c_3 \leq \pi/2\). By renaming them we get
\[
c_1+c_2 \leq \pi/2.
\]

If $c_2+c_3 <0$, transform using
\[
[c_1, c_2, c_3] \sim [c_1, -c_3, -c_2],
\]
then  $-c_2-c_3 > 0 $ and $-c_3 > -c_2$. Hence we can make $c_1 \geq c_2 \geq |c_3|$.

In sum, we can transform $\vec{c}$ into the region given in \eqref{equation_:geometric_representation-tetrahedral cell_2}.

As for uniqueness, by \eqref{equation_:diagram} and \eqref{equation_:root system of su_4}, we get
\[
D(U, K) = \left\{H \in \mathfrak{h}: (\pm e_{\alpha} \pm e_{\beta})(H) \in \pi i \mathbb{Z} \quad \text { for some } \alpha, \beta \in \{x,y,z\}, \alpha \neq \beta \right\},
\]
 which is identical to the following set (since $H = c_1 \sigma_x^1 \sigma_x^2 + c_2 \sigma_y^1 \sigma_y^2+c_3 \sigma_z^1 \sigma_z^2$)
\[
 \left\{[c_1, c_2, c_3] \in \mathbb{R}^{3}:
(\pm c_\alpha \pm c_\beta) \in \frac{\pi}{2}  \mathbb{Z} \quad \text { for some } \alpha, \beta \in \{1,2,3\}, \alpha \neq \beta \right\},
\]
In $D(U, K)$, the enclosed region of four planes
\[
\Pi_1 : c_1 - c_2 = 0, \quad \Pi_2 : c_1 + c_2 = \pi/2, \quad \Pi_3 : c_2 - c_3 = 0, \quad \Pi_4 : c_2 + c_3 = 0
\]
is the tetrahedron $OABU$ as shown in Figure~\ref{fig:tetrahedral-cell2}.
Moreover, no other planes pass through the interior of the region. Hence, the tetrahedron $OABU$ is such a $\overline{Q_0}$ in Theorem~\ref{theorem_:equivalence_class_double_coset}, which implies the uniqueness of each point in the tetrahedral cell $T$.
\qedhere
\end{proof}

\subsection{Projective equivalence classes}
\label{subsection_:projective_equivalence_classes}
In this subsection, we derive the projective equivalence classes.
We need to determine the smallest element such that the image of the exponential map in $A$ lies in $K$  up to global phases. In other words, we need to calculate the following \textit{minimum positive period} for $\operatorname{SU}(4)$.

\begin{definition}[Minimum positive period]\label{definition_:minimum positive period}
We define the \textit{minimum positive period}  to be the smallest positive number in each $c_{i}$ such that
\[
\exp \left\{
i\left(c_1 \sigma_x^1 \sigma_x^2 + c_2 \sigma_y^1 \sigma_y^2+c_3 \sigma_z^1 \sigma_z^2 \right)
\right\}
\in
Z(U) \cdot K
\]
where $Z(U)$ is the center of $\operatorname{SU}(4)$.
\end{definition}

We have the following main result.
\begin{proposition}[]\label{proposition_:minimum positive period_SU4}
The minimum positive period for $\operatorname{SU}(4)$ is $\pi/2$. Moreover, the $p$-lattice is given by
\begin{equation}\label{equation_:p_lattice_SU4}
\mathfrak{a}_p  = \{ [c_1, c_2, c_3] \in \mathbb{R}^{3} | c_i \in \pi \mathbb{Z}/2 \}.
\end{equation}
\end{proposition}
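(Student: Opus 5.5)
We will compute $\mathfrak{a}_p$ directly from Definition~\ref{definition_:p-lattice}, working in the Bell basis used in the proof of Proposition~\ref{proposition_:unit lattice of SU4}. Two inputs are needed. First, $Z(\operatorname{SU}(4))=\{\omega I_4 : \omega^4=1\}=\{I,iI,-I,-iI\}$. Second, the magic-basis fact (Appendix~\ref{appendix_:Bell basis transformation}) that conjugation by $Q$ carries $K=\operatorname{SU}(2)\otimes\operatorname{SU}(2)$ onto $\operatorname{SO}(4)$, i.e. $Q^\dagger K Q=\operatorname{SO}(4)$. Since scalars commute with $Q$, the condition $\exp H\in Z(U)\cdot K$ becomes the following: the diagonal matrix \eqref{equation_:diagonal_A}, $D=\operatorname{diag}(e^{i\lambda_1},\dots,e^{i\lambda_4})$ with
\[
\lambda_1=c_1-c_2+c_3,\quad \lambda_2=c_1+c_2-c_3,\quad \lambda_3=-c_1-c_2-c_3,\quad \lambda_4=-c_1+c_2+c_3,
\]
satisfies $\omega^{-1}D\in\operatorname{SO}(4)$ for some fourth root of unity $\omega$.

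Next we reduce this to arithmetic conditions. A diagonal matrix lies in $\operatorname{SO}(4)$ iff it is real orthogonal with determinant $1$, i.e. its entries are $\pm1$ with product $1$. Since $\det D=e^{i\sum\lambda_j}=1$ and $\omega^{-4}=1$, the determinant condition is automatic. So the condition is exactly this: there is $\omega\in\{\pm1,\pm i\}$ with $e^{i\lambda_j}\in\{\pm\omega\}$ for all $j$. Equivalently, it is the conjunction of two conditions:
\begin{enumerate}[(i)]
\item $\lambda_j-\lambda_k\in\pi\mathbb{Z}$ for all $j,k$;
\item $\lambda_1\in\frac{\pi}{2}\mathbb{Z}$.
\end{enumerate}
Computing the differences, (i) says $2(c_a\pm c_b)\in\pi\mathbb{Z}$, i.e. $c_a\pm c_b\in\frac{\pi}{2}\mathbb{Z}$ for all $a\neq b$. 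Adding the two signs gives $c_a\in\frac{\pi}{4}\mathbb{Z}$. Moreover, all $c_a$ are then congruent to one common value modulo $\frac{\pi}{2}$, which is either $0$ or $\frac{\pi}{4}$. In the second case $\lambda_1=c_1-c_2+c_3\equiv\frac{\pi}{4}\pmod{\frac{\pi}{2}}$, which violates (ii). Hence $c_a\in\frac{\pi}{2}\mathbb{Z}$ for every $a$. Conversely, if all $c_a\in\frac{\pi}{2}\mathbb{Z}$, then each $\lambda_j\in\frac{\pi}{2}\mathbb{Z}$ and each $\lambda_j-\lambda_k=2(\cdot)\in\pi\mathbb{Z}$, so both (i) and (ii) hold. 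This establishes \eqref{equation_:p_lattice_SU4}.

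The minimum positive period then follows at once. The point $[\pi/2,0,0]$ lies in $\mathfrak{a}_p$; concretely, $\exp(i\frac{\pi}{2}\sigma_x^1\sigma_x^2)=i\,\sigma_x^1\sigma_x^2\in Z(U)\cdot K$. On the other hand, by \eqref{equation_:p_lattice_SU4} no coordinate of a lattice point can lie in $(0,\pi/2)$, so $\pi/2$ is the smallest positive period in each $c_i$. The main obstacle is the second input, the identification $Q^\dagger KQ=\operatorname{SO}(4)$. It is the step that makes the $\pm1$ criterion valid, and we would verify it by checking that $Q^\dagger X_iQ$ is real antisymmetric for $i=1,\dots,6$ and then using connectedness of both groups. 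After that, the only subtle point is remembering condition (ii), i.e. not just the pairwise-difference condition (i), since (i) alone would wrongly admit the points with all $c_a\equiv\frac{\pi}{4}\pmod{\frac{\pi}{2}}$.
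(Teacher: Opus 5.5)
Your proof is correct and follows essentially the same route as the paper. Both arguments diagonalize $\exp H$ in the Bell (magic) basis, use $Q^\dagger K Q=\operatorname{SO}(4)$ and $Z(\operatorname{SU}(4))=\{\pm 1,\pm i\}$ to turn membership in $Z(U)\cdot K$ into an arithmetic condition on the $\lambda_j$, and exhibit $\exp(i\frac{\pi}{2}\sigma_x^1\sigma_x^2)\in iK$. Your conditions (i)--(ii) are equivalent to the paper's requirement that all $\lambda_j/(\pi/2)$ be integers of the same parity; the paper then reads off $c_a\in\frac{\pi}{2}\mathbb{Z}$ a little more directly from suitable pairwise sums $\lambda_j+\lambda_k=2c_a$. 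Your converse, which checks every point of $(\frac{\pi}{2}\mathbb{Z})^3$ directly, is slightly more complete than the paper's, which checks only the generators and tacitly uses that $\mathfrak{a}_p$ is an additive group.
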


\begin{proof}[Proof]
Recall that in the proof of Proposition~\ref{proposition_:unit lattice of SU4} the operator
\[
\exp \left\{i\left(c_1 \sigma_x^1 \sigma_x^2 + c_2 \sigma_y^1 \sigma_y^2+c_3 \sigma_z^1 \sigma_z^2 \right)\right\}
\]
can be expressed as
\begin{equation}\label{equation_:diagonal_A}
\exp \left(
i
\begin{pmatrix}
c_{1} - c_{2}+ c_{3} & 0 & 0 &  0  \\
0 & c_{1} + c_{2} - c_{3} & 0 & 0  \\
0 & 0 & - c_{1} - c_{2} - c_{3} & 0  \\
0 & 0 & 0 & -c_{1} + c_{2}+ c_{3}
\end{pmatrix}
\right)
\end{equation}
in the Bell basis.

Now, transforming $\mathfrak{k}$ defined in~\eqref{equation_:basis_of_kp_in_su4_in_pauli_matrix_} into the Bell basis we get the following operators
\[i Q^{\dagger}\left\{\sigma_x^1, \sigma_y^1, \sigma_z^1, \sigma_x^2, \sigma_y^2, \sigma_z^2\right\} Q, \]
which forms a real basis for \(\mathfrak{so(4)}\) viewed as a vector subspace of real matrices. Applying the exponential map to $\mathfrak{so}(4)$, since $\operatorname{SO}(4)$ is compact, by Proposition~\ref{proposition_:exponential map compact onto}, the exponential map $\exp: \mathfrak{so}(4)\rightarrow\operatorname{SO}(4)$ is onto, so we get the real Lie group $\operatorname{SO}(4)$.
The  complete set of diagonal matrices in $\operatorname{SO}(4)$ is given by
\[\operatorname{diag}(\pm 1, \pm 1, \pm 1, \pm 1) \text{ with an even number of } -1 \text{'s}.\]
By adding global phases $\{\pm i , \pm 1\}$, these diagonal matrices, which are elements of $\operatorname{SU}(4)$, can only be
\[\operatorname{diag}(\pm 1, \pm 1, \pm 1, \pm 1) \text{ with an even number of } -1 \text{'s}\]
or
\[\operatorname{diag}(\pm i, \pm i, \pm i, \pm i) \text{ with an even number of } -i \text{'s}.\]
Therefore, a necessary condition is that the matrices have the form
\begin{equation}\label{equation_:diagonal_in_K_global_phase}
\exp \left\{ \frac{\pi i }{2}
\operatorname{diag}( b_{1},  b_{2},  b_{3},  b_{4})
\right\}, \quad \text{where $b_{1},  b_{2},  b_{3},  b_{4}\in \mathbb{Z}$ have the same parity.  }
\end{equation}
Combining the two expressions \eqref{equation_:diagonal_A} and \eqref{equation_:diagonal_in_K_global_phase}, the elements
\[
c_{1} - c_{2}+ c_{3}, \quad
c_{1} + c_{2} - c_{3} , \quad
- c_{1} - c_{2} - c_{3}, \quad
-c_{1} + c_{2}+ c_{3}
\]
 divided by $\pi/2$ are integers and have the same parity.
Hence, by adding each pair of the elements we conclude that  $c_{1}, c_{2}, c_{3} \in \pi \mathbb{Z}/2$.

On the other hand, note that we have the following equality
\[
\exp\{i\frac{\pi}{2} \sigma_x^1 \sigma_x^2\} = -i\cdot \exp\{i \left(\frac{\pi}{2} \sigma_x^1 + \frac{\pi}{2} \sigma_x^2 \right)\} \in iK,
\]
we can exactly achieve the minimal positive period $\pi/2$ in the entry $c_1$ and it is the same for other two entries $c_2$ and $c_3$.
In sum, the minimum positive period in each $c_{i}$ is $\pi/2$.

Moreover, since $c_{1}, c_{2}, c_{3} \in \pi \mathbb{Z}/2$, so the minimum positive period indeed gives the $p$-lattice
\[
\mathfrak{a}_p  = \{ [c_1, c_2, c_3] \in \mathbb{R}^{3} | c_i \in \pi \mathbb{Z}/2 \}.
\]

\qedhere
\end{proof}

The diagram of the $p$-lattice in the $c_1 c_2$-plane is given as below (Figure~\ref{fig:qlattice}). Note that we now have points such as $(\pi/2, 0,0), (0,\pi/2,0)$, unlike the $K$-lattice in  Figure~\ref{fig:klattice}. The diagram is the same for other two planes ($c_2$-plane, $c_3$-plane). And in the $3$-dimensional space $c_1c_2c_3$, the minimum period is $\pi/2$ along each coordinate direction.
\begin{figure}[H]
\centering
\includegraphics[width=0.8\linewidth]{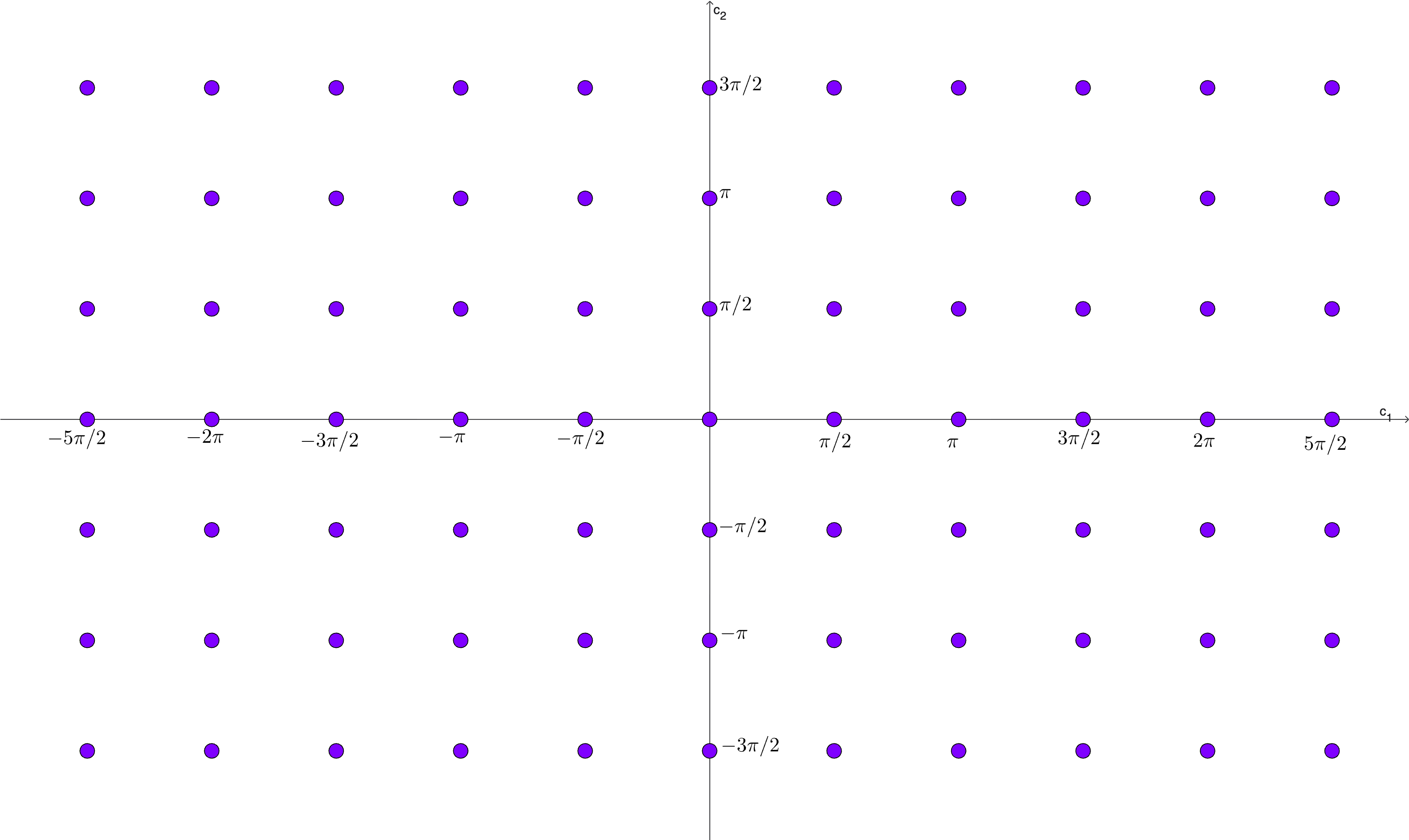}
\caption{$p$-lattice (in the $c_1 c_2$-plane).}
\label{fig:qlattice}
\end{figure}

Combining the Weyl group, using Theorem~\ref{theorem_:equivalence_class_double_coset_projective},  we get the following criterion for determining distinct projective equivalence classes.
\begin{proposition}[]\label{proposition_:equivalence_relation_global_phases}
If \(\left[c_1, c_2, c_3\right]\) is a coordinate of \([u]\) for $u\in \operatorname{SU}(4)$ as an projective equivalence classes, then
\begin{equation}\label{equation_:projective equivalence classes}
\begin{aligned}
& \left[c_i, c_j, c_k\right], \left[c_i+ \pi/2, c_j, c_k\right],
\\
& \left[\pi/2-c_i, \pi/2-c_j, c_k\right],\left[c_i, \pi/2-c_j, \pi/2-c_k\right], \text{and} \left[\pi/2-c_i, c_j, \pi/2-c_k\right]
\end{aligned}
\end{equation}
are also coordinates for \([u]\), where \((i, j, k)\) can be any permutation of \((1,2,3)\). And \eqref{equation_:projective equivalence classes} generates the full set of equivalence relations for projective equivalence class.
\end{proposition}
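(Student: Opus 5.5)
The plan is to apply Theorem~\ref{theorem_:equivalence_class_double_coset_projective}. It says that two points of $\mathfrak{a}\cong\mathbb{R}^3$ represent the same projective-local class exactly when they lie in the same orbit of $\Gamma_p=\mathfrak{a}_p\rtimes W(U,K)$. So the statement reduces to two things: each listed transformation lies in $\Gamma_p$, and the listed transformations generate $\Gamma_p$. Both ingredients are already available. Corollary~\ref{corollary_:Weyl group structure} and Proposition~\ref{proposition_:reflections} describe $W(U,K)$ as the group of permutations together with sign flips of two entries. Proposition~\ref{proposition_:minimum positive period_SU4} gives $\mathfrak{a}_p=(\pi/2)\mathbb{Z}^3$.

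\emph{Step 1: each listed move lies in $\Gamma_p$.}
\begin{itemize}
\item Permutations $[c_i,c_j,c_k]$ are products of the reflections $s_{e_\alpha-e_\beta}$.
\item The translation $[c_i+\pi/2,c_j,c_k]$ is translation by a basis vector of $\mathfrak{a}_p$.
\item The move $[\pi/2-c_i,\pi/2-c_j,c_k]$ is the composite of the Weyl element $[c_i,c_j,c_k]\mapsto[-c_i,-c_j,c_k]$ (two sign flips) followed by translation by $(\pi/2,\pi/2,0)\in\mathfrak{a}_p$. The other two listed moves are the analogous composites.
\end{itemize}
Since $\Gamma_p$ is a group, it is closed under composition. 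Hence any two coordinates related by a finite sequence of these moves give $u$'s that are projective-locally equivalent. The direct meaning is that $u$ and $u'$ differ by $Z(U)$ and $K$ factors, and one can check this explicitly via the identity $\exp\{i\frac{\pi}{2}\sigma_x^1\sigma_x^2\}\in iK$ from the proof of Proposition~\ref{proposition_:minimum positive period_SU4}.

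\emph{Step 2: the moves generate $\Gamma_p$.}
\begin{itemize}
\item Translations by $\pi/2$ in each coordinate, possibly after permuting, generate all of $\mathfrak{a}_p=(\pi/2)\mathbb{Z}^3$; inverses come from composing with permutations and the other moves, or from closure under the group operation.
\item Composing $[\pi/2-c_i,\pi/2-c_j,c_k]$ with the translation by $-(\pi/2,\pi/2,0)$ recovers the pure sign flip $[-c_i,-c_j,c_k]$.
\item Together with the permutations, these sign flips give every Weyl element by Corollary~\ref{corollary_:Weyl group structure}.
\end{itemize}
Since every element of $\Gamma_p$ has the form $r\cdot s$ with $r\in\mathfrak{a}_p$ and $s\in W(U,K)$, the listed moves generate all of $\Gamma_p$.

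\emph{Step 3: completeness.} Suppose $u=\ell k_1\exp H k_2$ and $u=\ell' k_1'\exp H' k_2'$ with $\ell,\ell'\in Z(U)$. The uniqueness argument in Theorem~\ref{theorem_:equivalence_class_double_coset_projective} already shows that $H$ and $H'$ are then conjugate under $\Gamma_p$. It goes by absorbing the centre elements into translations from $\mathfrak{a}_p$ and then invoking Theorem~\ref{theorem_:equivalence_class_double_coset} for $\Gamma_\Sigma\subset\Gamma_p$. By Step~2, $H'$ is therefore reachable from $H$ by a word in the listed moves, so no further equivalences exist.

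\emph{Main obstacle.} One point needs justification: $\Gamma_p$ must really be a group of affine maps, and its semidirect product structure must hold, i.e.\ $W(U,K)$ has to preserve $\mathfrak{a}_p$. This is needed for the decomposition $r\cdot s$ used in Step~2. With the explicit lattice it is immediate, since permutations and sign flips preserve $(\pi/2)\mathbb{Z}^3$. Abstractly it follows because elements of $N_K(\mathfrak h)$ normalize both $K$ and the central subgroup $Z(U)$.
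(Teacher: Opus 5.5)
Your proposal is correct and takes the same route as the paper. The paper obtains this proposition by combining three ingredients: the Weyl group (permutations and sign flips of two entries), the $p$-lattice $\mathfrak{a}_p=(\pi/2)\mathbb{Z}^3$, and Theorem~\ref{theorem_:equivalence_class_double_coset_projective}, which identifies projective-local classes with $\Gamma_p$-orbits. Your Steps 2 and 3 (that the listed moves generate $\Gamma_p$, and that $W(U,K)$ preserves $\mathfrak{a}_p$, so the $r\cdot s$ decomposition is available) make explicit details the paper leaves implicit.
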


Therefore, we can reduce the symmetry in the $3$-dimensional Euclidean space using the above equivalences to get the geometric representation of the projective equivalence classes. This is the set of projective-locally equivalent two-qubit gates.

\begin{proposition}[]\label{proposition_:criterion_equivalence_classes_up_to_global_phases}
Let $\vec{c} = [c_1, c_2, c_3]$ be a coordinate of $[u]$ for  $u\in \operatorname{SU}(4)$,  we can transform \(\vec{c}\) using \eqref{equation_:projective equivalence classes} so that it lies in
\begin{equation}\label{equation_:geometric_representation-projective tetrahedral cell}
P:=
\left\{
\vec{c} \in \mathbb{R}^3
\left\lvert\,
\pi/2 > c_1 \geq c_2 \geq c_3 \geq 0
, c_1+c_2 \leq \pi/2, \text { if } c_3=0, \text { then } c_1 \leq \pi/4.
\right.
\right\}.
\end{equation}
And each element in $P$ represents a unique projective equivalence class.
\end{proposition}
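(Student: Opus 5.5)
The proof has two parts: reduce an arbitrary $\vec c$ into $P$ using the moves \eqref{equation_:projective equivalence classes}, and show that $P$ is a fundamental domain for $\Gamma_p=\mathfrak{a}_p\rtimes W(U,K)$ acting on $\mathfrak{a}\cong\mathbb{R}^3$. Once both are in place, uniqueness follows from Theorem~\ref{theorem_:equivalence_class_double_coset_projective} with $R_0=P$.

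\textbf{Existence.} I follow the same algorithm as in the proof of Proposition~\ref{proposition_:criterion_equivalence_classes_double_coset}, with the finer translation $c_i\mapsto c_i+\pi/2$ from Proposition~\ref{proposition_:minimum positive period_SU4}.
\begin{enumerate}
\item Translate each coordinate separately so that $c_i\in[0,\pi/2)$. In particular no sign flip is needed to get $c_3\ge 0$.
\item Sort the coordinates so that $c_1\ge c_2\ge c_3$.
\item If $c_1+c_2>\pi/2$, replace $(c_1,c_2)$ by $(\pi/2-c_2,\pi/2-c_1)$. Both new entries are still $\ge c_3$, because $c_1,c_2<\pi/2$ and $c_1+c_2>\pi/2$ force each of them to exceed $\pi/2-$ the other. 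Re-sorting then gives $c_1+c_2\le\pi/2$ with $c_3\ge0$.
\item Boundary case $c_3=0$ with $c_1>\pi/4$. Apply $[c_1,c_2,0]\mapsto[\pi/2-c_1,c_2,\pi/2-0]$, translate the third entry back to $0$, and re-sort. The new pair is $\{\pi/2-c_1,c_2\}$. Both entries are less than $\pi/4$, since $c_2\le\pi/2-c_1<\pi/4$, and their sum is at most $\pi/2$. So the new maximum is at most $\pi/4$.
\end{enumerate}
At this point $\vec c\in P$.

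\textbf{Uniqueness.} I need that no two distinct points of $P$ lie in the same $\Gamma_p$-orbit. The approach is to recognise $P$ as the quotient of the $T$-cell $\overline{Q_0}$ by the finite group $\Gamma_p/\Gamma_\Sigma$.
\begin{enumerate}
\item Compute the index: $\mathfrak{a}_p/\mathfrak{a}_K\cong\mathbb{Z}/2$, generated by $[\pi/2,0,0]$, since by Propositions~\ref{proposition_:K-lattice_SU4} and~\ref{proposition_:minimum positive period_SU4} the $K$-lattice consists of the $p$-lattice points with an even number of odd entries $2c_i/\pi$.
\item So $\Gamma_p/\Gamma_\Sigma\cong\mathbb{Z}/2$, and it induces an involution $\tau$ of $T$. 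Take the representative $[c_1,c_2,c_3]\mapsto[c_1,c_2,-c_3]$, which is the translation by $[0,0,\pi/2]$ followed by suitable Weyl and $K$-lattice moves. This maps $\mathfrak{a}$ into itself, but it is not in $\Gamma_\Sigma$: the Weyl group only flips two signs at a time, and the lattice has the parity constraint.
\item Check that this map, composed with the normalisation into $T$, sends $[c_1,c_2,c_3]$ to $[c_1,c_2,-c_3]$ whenever $c_3\neq0$. Then $T/\tau$ is the half with $c_3\ge0$.
\item On the face $c_3=0$ the flip is trivial. There $\tau$ must act differently, through the translation composed with $[\pi/2-c_1,c_2,\pi/2-c_3]$, giving $[c_1,c_2,0]\sim[\pi/2-c_1,c_2,0]$. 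This is a reflection of the triangle $OB$--$A'$ about $c_1=\pi/4$ and explains the extra condition in $P$.
\end{enumerate}
Rather than track $\tau$ abstractly, I would work concretely. Take $\vec c,\vec c'\in P$ with $\vec c'=\gamma\vec c$ for some $\gamma\in\Gamma_p$. Either $\gamma\in\Gamma_\Sigma$, and then $\vec c=\vec c'$ by Proposition~\ref{proposition_:criterion_equivalence_classes_double_coset}. Or $\gamma=t\gamma_0$ with $t=[\pi/2,0,0]$ and $\gamma_0\in\Gamma_\Sigma$. In the second case I compute the unique $T$-representative of $t\vec c$ (equivalently of $\tau\vec c$) and show it equals $[c_1,c_2,-c_3]$ if $c_3>0$, which lies outside $P$ unless $\vec c=\vec c'$. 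If $c_3=0$, it equals the reflected point $[\pi/2-c_1,c_2,0]$ (re-sorted), and the condition $c_1\le\pi/4$ allows only the fixed line $c_1=\pi/4$, where the point maps to itself.

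\textbf{Main obstacle.} The delicate step is identifying the $T$-representative of $t\vec c$ on all of $T$, including edges such as $c_1+c_2=\pi/2$ and $c_2=|c_3|$, where the normalisation algorithm branches. I would first verify the claimed form of $\tau$ on the four vertices $O,A,B,U$. Since $\tau$ is induced by an affine isometry normalising $\Gamma_\Sigma$, it maps the simplex $\overline{Q_0}$ to a simplex of the alcove decomposition, and after normalisation it acts on $T$ as an affine isometry of the tetrahedron. It is therefore determined by where it sends the vertices. I expect $A\leftrightarrow U$ with $O$ and $B$ fixed, which gives $c_3\mapsto -c_3$. The face $c_3=0$ needs a separate check, because there the isometry of $T$ acts trivially while the boundary identification comes from a different element of the coset. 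That face analysis is where errors are most likely.
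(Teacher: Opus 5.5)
Your existence argument is the paper's algorithm, with one slip. In step 3 the new entries $\pi/2-c_2,\pi/2-c_1$ need not stay $\ge c_3$; take $\vec c=[0.4\pi,0.4\pi,0.4\pi]$. The conclusion still holds, because after re-sorting the two largest entries sum to at most $\pi/2-c_2+c_3\le\pi/2$, which is the remark the paper makes. Your uniqueness skeleton also matches the paper's. Since $[\mathfrak a_p:\mathfrak a_K]=2$, the quotient $\Gamma_p/\Gamma_\Sigma\cong\mathbb Z/2$ acts on the closed alcove $\overline{Q_0}$ (the tetrahedron $OABU$) by an involution $\tau$, and $P$ is the quotient by $\tau$. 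The paper phrases the nontrivial coset as multiplication by the central element $i\notin K$. The genuine gap is that your identification of $\tau$, which is the entire content of the uniqueness step, is wrong. Your special treatment of the face $c_3=0$ is a symptom of that error.

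\textbf{Why your $\tau$ fails.} Every element of $\Gamma_p=\mathfrak a_p\rtimes W(U,K)$ has linear part in $W(U,K)$, i.e.\ a permutation with an even number of sign changes.
\begin{itemize}
\item So $[c_1,c_2,c_3]\mapsto[c_1,c_2,-c_3]$ is not in $\Gamma_p$; your own remark that the Weyl group flips two signs at a time rules it out.
\item No element of $\Gamma_p$ can fix $O$ and $B$ while swapping $A$ and $U$: such an affine map fixes the origin, so it is linear, and it is forced to be $\operatorname{diag}(1,1,-1)\notin W(U,K)$.
\item Since $\Gamma_\Sigma$ is normal and $\overline{Q_0}$ is a strict fundamental domain, $\tau(\vec c)$ is the representative of $\gamma\vec c$ for \emph{any} $\gamma$ in the nontrivial coset. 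No other coset element can act differently on the face $c_3=0$.
\item If $\tau$ were $c_3\mapsto -c_3$, it would fix that face pointwise. Then $[c_1,c_2,0]\sim[\pi/2-c_1,c_2,0]$ would be false, yet your existence step 4 uses it.
\end{itemize}

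\textbf{The correct involution.} It is $\tau(\vec c)=[\pi/2-c_1,c_2,-c_3]$: apply $\operatorname{diag}(-1,1,-1)\in W(U,K)$, then translate by $t=[\pi/2,0,0]$. It swaps $O\leftrightarrow B$ and $A\leftrightarrow U$; it is the half-turn about the line $c_1=\pi/4$, $c_3=0$, which is exactly the paper's map. For example, take $\vec c=[\pi/8,\pi/16,\pi/32]$. Subtracting $[\pi,0,0]\in\mathfrak a_K$ from $t\vec c$ and applying $\operatorname{diag}(-1,1,-1)$ gives $[3\pi/8,\pi/16,-\pi/32]$. This lies in $T$ and differs from your predicted $[\pi/8,\pi/16,-\pi/32]\in T$, so by uniqueness in $T$ your formula is false. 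Your map is actually the mirror symmetry $u\mapsto\bar u$: conjugation sends the coordinate $\vec c$ to $-\vec c\sim[c_1,c_2,-c_3]$, and $\overline K=K$.

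\textbf{How to finish.} With the correct $\tau$, your concrete argument goes through with no special case.
\begin{itemize}
\item If $c_3>0$, then $\tau\vec c$ has a negative third entry, so it is not in $P$.
\item If $c_3=0$, then $\tau$ restricts to $c_1\mapsto \pi/2-c_1$. The constraints $c_1,c_1'\le\pi/4$ force $c_1=c_1'=\pi/4$, so $\vec c'=\vec c$.
\end{itemize}
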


\begin{proof}[Proof of Proposition~\ref{proposition_:criterion_equivalence_classes_up_to_global_phases}]
Consider \(\vec{c} = [c_1, c_2, c_3]  \in \mathbb{R}^3\). We implement the following algorithm.

Via translation, we can make \(c_1, c_2, c_3 \in\left[0, \pi/2\right)\).
Via permutations of each pair of entries, we can make \(c_1 \geq c_2 \geq c_3\).

If \(c_1+c_2> \pi/2 \), then transform using
\[
[c_1, c_2, c_3] \sim\left[\pi/2- c_2, \pi/2-c_1, c_3\right].
\]
If \(c_3\) is larger than any of \(\pi/2-c_2, \pi/2-c_1\), permute those two  entries again. Note that \(\pi/2-c_2+c_3 \leq \pi/2\). By renaming them we get
\[
c_1+c_2 \leq \pi/2.
\]

Finally if \(c_3=0\) and \(c_1>\pi/4\), then transform using
\[
[c_1, c_2, 0] \sim\left[ \pi/2-c_1, c_2, 0\right].
\]
We get \eqref{equation_:geometric_representation-projective tetrahedral cell} as desired.

As for uniqueness, by Theorem~\ref{theorem_:equivalence_class_double_coset_projective}, we only need to geometrically determine the quotient space $R_0 = \Gamma_{p}\backslash \mathfrak{h}$. Since the affine Weyl group $\Gamma_{\Sigma}$  is a subgroup of the projective affine Weyl group $\Gamma_{p}$, hence $R_0$ is contained in one such $\overline{Q_{0}}$. We choose $\overline{Q_{0}}$ to be the $T$-cell shown in Figure~\ref{fig:tetrahedral-cell2}. Since center of $\operatorname{SU}(4)$ is $Z(\operatorname{SU}(4)) = \{\pm 1, \pm i\}$ and  $\pm 1 \in K = SU(2)\otimes SU(2), \pm i \notin K$, and by Definition~\ref{definition_:projectively_locally_equivalent}, we only need to consider pairs of two points in $T$-cell such that the corresponding unitaries differ by $i$. For an arbitrary $U\in\operatorname{SU}(4)$, it has a unique representative $[c_1, c_2, c_3]$ in the $T$-cell by Proposition~\ref{proposition_:criterion_equivalence_classes_double_coset}, and similarly for $iU \in\operatorname{SU}(4)$. We observe the following transformation
\[
[c_1, c_2, c_3] \rightarrow [\pi/2 - c_1, c_2, -c_3]
\]
leads to a scalar multiplication by $i$ up to multiplication by elements in $K$ on both sides, so the unique coordinate for $iU$ in the $T$-cell is $[\pi/2 - c_1, c_2, -c_3]$.
By identifying the two points $[c_1, c_2, c_3]$ and $[\pi/2 - c_1, c_2, -c_3]$, we get the region $P$ given by  \eqref{equation_:geometric_representation-projective tetrahedral cell}. This implies that each point inside $P$ represents a unique projective equivalence class. We can also directly prove uniqueness via the argument in Appendix~\ref{appendix_:more_direct_proof_of_uniqueness}.

\qedhere
\end{proof}

Combining Theorem~\ref{theorem_:equivalence_class_double_coset_projective}, Theorem~\ref{theorem_:KAK_decomposition_for_SU4} and  Proposition~\ref{proposition_:criterion_equivalence_classes_up_to_global_phases}, we can get our main result of the KAK decomposition and the projective equivalence class classification for $\operatorname{SU}(4)$.
\begin{theorem}[]\label{theorem_:}
For any element  \(u \in \operatorname{\operatorname{SU}}(4)\), there exist \(k_1, k_2 \in \operatorname{SU}(2) \otimes \operatorname{SU}(2)\), $\ell \in \{\pm1,\pm i\}$, and a unique $[c_1,c_2,c_3] \in P$, where
\[P = \left\{
[c_1,c_2,c_3] \in \mathbb{R}^3
\left\lvert\,
\begin{aligned}
&\pi/2 > c_1 \geq c_2 \geq c_3 \geq 0, c_1+c_2 \leq \pi/2, \\
&\text {and if } c_3=0, \text { then } c_1 \leq \pi/4.
\end{aligned}
\right.
\right\},\]
such that
\[
u = \ell \cdot k_1 \exp \left\{i\left(c_1 \sigma_x^1 \sigma_x^2 + c_2 \sigma_y^1 \sigma_y^2 + c_3 \sigma_z^1 \sigma_z^2 \right)\right\} k_2.
\]
\end{theorem}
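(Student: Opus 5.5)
The plan is to assemble this theorem from three earlier results: Theorem~\ref{theorem_:KAK_decomposition_for_SU4} gives existence of a $KAK$ factorization, Theorem~\ref{theorem_:equivalence_class_double_coset_projective} gives the general structure, and Proposition~\ref{proposition_:criterion_equivalence_classes_up_to_global_phases} identifies the region. First I will record the setting. $U=\operatorname{SU}(4)$ is simply-connected, and $K=\operatorname{SU}(2)\otimes\operatorname{SU}(2)$ is connected, being the image of the connected group $\operatorname{SU}(2)\times\operatorname{SU}(2)$. The pair $(U,K)$ is a Riemannian symmetric pair of compact type, which follows from Proposition~\ref{proposition_:example_SU_SO_Riemannian symmetric pair} after conjugating by the Bell-basis matrix $Q$. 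Finally $Z(U)=\{\pm1,\pm i\}$. These are exactly the hypotheses of Theorem~\ref{theorem_:equivalence_class_double_coset_projective}.

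For existence, take $u\in\operatorname{SU}(4)$ and write $u=k_1\exp X\,k_2$ with $X\in\mathfrak a$ using Theorem~\ref{theorem_:KAK_decomposition_for_SU4}. Then apply the moves of Proposition~\ref{proposition_:equivalence_relation_global_phases} (Weyl reflections, $\pi/2$-translations and the combined moves) to bring $[c_1,c_2,c_3]$ into $P$, following the algorithm in the proof of Proposition~\ref{proposition_:criterion_equivalence_classes_up_to_global_phases}. Each move is realized by an element of $\Gamma_p=\mathfrak a_p\rtimes W(U,K)$. A Weyl element is $\operatorname{Ad}_k$ for some $k\in N_K(\mathfrak h)$, so it is absorbed into $k_1,k_2$. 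A translation by $H_1\in\mathfrak a_p$ satisfies $\exp H_1=k_3 z$ with $z\in Z(U)$ by Proposition~\ref{proposition_:minimum positive period_SU4}, so it changes $u$ only by a central factor $\ell$ and elements of $K$. This gives $u=\ell k_1\exp(\cdots)k_2$ with $\ell\in\{\pm1,\pm i\}$ and $[c_1,c_2,c_3]\in P$.

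For uniqueness, suppose $u=\ell k_1\exp H\,k_2=\ell' k_1'\exp H'\,k_2'$ with $H,H'\in P$. The uniqueness part of Theorem~\ref{theorem_:equivalence_class_double_coset_projective} shows that $H$ and $H'$ lie in the same $\Gamma_p$-orbit. It then remains to know that $P$ meets each $\Gamma_p$-orbit in exactly one point, i.e.\ that $P$ is a valid $R_0$. Proposition~\ref{proposition_:criterion_equivalence_classes_up_to_global_phases} provides this. Its argument takes the $T$-cell as $\overline{Q_0}$; each point of $T$ is a unique double coset class by Proposition~\ref{proposition_:criterion_equivalence_classes_double_coset}. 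Multiplying by $\pm1\in K$ acts trivially, and multiplying by $i$ acts as the involution $[c_1,c_2,c_3]\mapsto[\pi/2-c_1,c_2,-c_3]$ on $T$. So $P$ is a fundamental domain for this involution on $T$.

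The main obstacle is the bookkeeping on the boundary of $P$. I must check that the involution maps $T$ to itself and that $P$ picks exactly one point from each orbit $\{H,\iota H\}$. In the interior of $T$ the condition $c_3\ge0$ selects one representative. On the face $c_3=0$ the condition $c_1\le\pi/4$ does the selecting, and $c_1=\pi/4$ is fixed. The strict inequality $c_1<\pi/2$ removes the duplicate vertex $B$, which is equivalent to $O$. I would verify each face of the tetrahedron $OABU$ directly, since an error there would break the uniqueness claim.
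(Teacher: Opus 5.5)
Your proposal is correct and follows the paper's own argument. Existence comes from Theorem~\ref{theorem_:KAK_decomposition_for_SU4} plus the $\Gamma_p$-moves; uniqueness comes from Theorem~\ref{theorem_:equivalence_class_double_coset_projective} together with Proposition~\ref{proposition_:criterion_equivalence_classes_up_to_global_phases}, where $P$ is the quotient of the $T$-cell by the involution $\iota\colon[c_1,c_2,c_3]\mapsto[\pi/2-c_1,c_2,-c_3]$ induced by multiplication by $i$.

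One point for your boundary check: $\iota(O)=B=(\pi/2,0,0)$, which represents the double coset of $iI$ but is excluded from $T$ by the strict inequality $c_1<\pi/2$. So $\iota$ does not map $T$ to itself; run the involution argument on the closed tetrahedron $\overline{Q_0}=OABU$, which $\iota$ does preserve. In $P$ the point $B$ is already excluded by the condition $c_3=0\Rightarrow c_1\le\pi/4$.
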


We call the above region $P$ a \textit{projective tetrahedral cell} (or \textit{$P$-cell}), which is shown in Figure~\ref{fig:projective-tetrahedral-cell}. The coordinates of vertices are $A = (\pi/4, \pi/4, \pi/4), B = (\pi/2, 0, 0 ), C = (\pi/4, \pi/4,0), S = (\pi/4, 0,0)$. The tetrahedron $OABC$ is inside the cube and $A$ is the center point of the cube. For $\triangle OBC$ (base of the tetrahedron), $\triangle OCS$ is equivalent to $\triangle BCS$ reflected across $\overline{SC}$. And each of the rest points represents a distinct projective equivalence class.

\begin{figure}[H]
\centering
\includegraphics[width=1\linewidth]{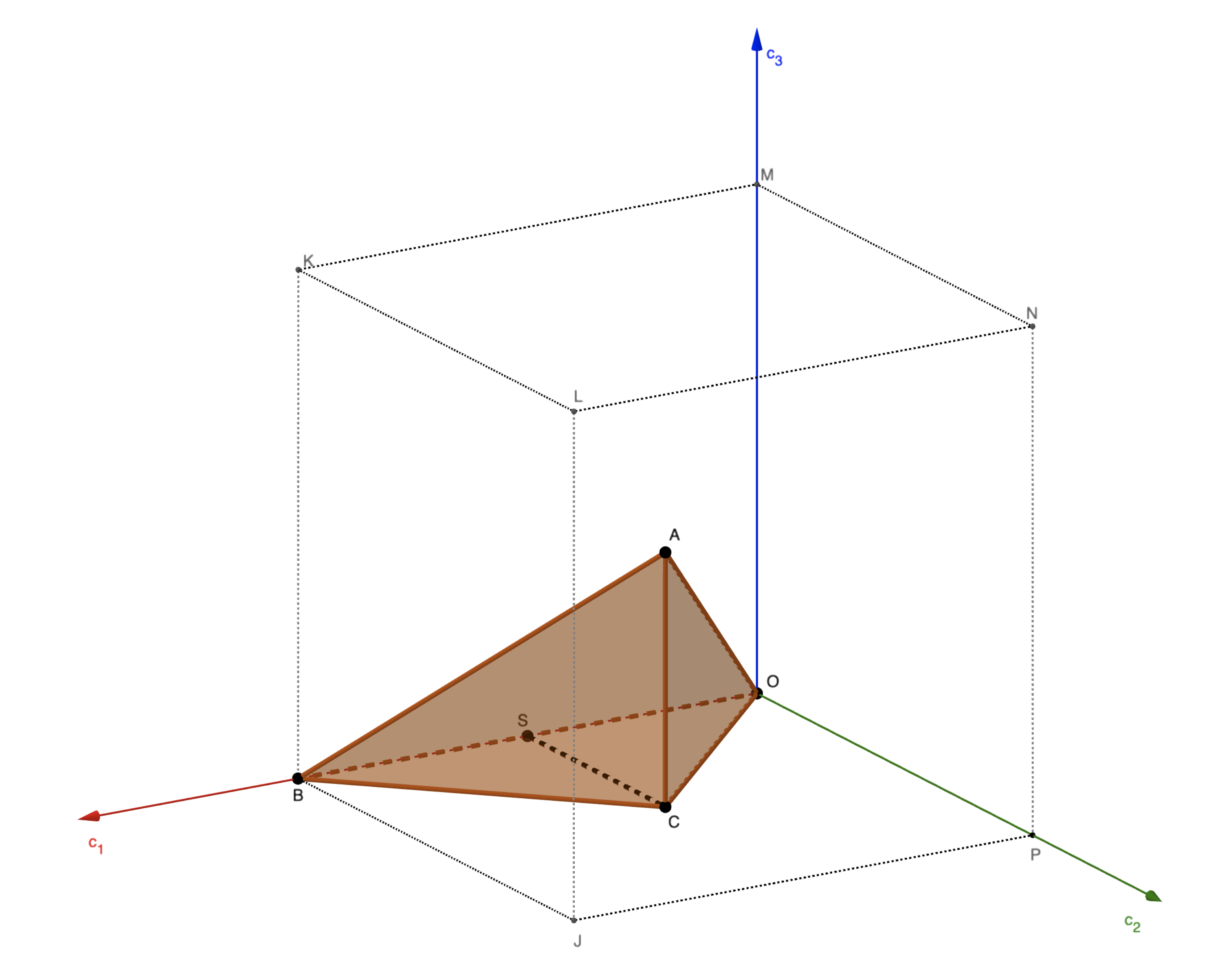}
\caption{Projective tetrahedral cell ($P$-cell).}
\label{fig:projective-tetrahedral-cell}
\end{figure}

\subsection{Comparison between the equivalence classes}

To conclude this section, we offer a comparative remark on the two types of equivalence classes introduced above. Motivated by practical applications in quantum information theory, quantum gates that differ only by a global phase are considered physically equivalent. This leads to the notion of projective equivalence classes, as stated in Definition~\ref{definition_:projectively_locally_equivalent}.

In \ref{subsection_:double_coset_equivalence_classes} and \ref{subsection_:projective_equivalence_classes}, we have explicitly computed two distinct lattice structures corresponding to these two equivalence relations. The difference is evident from Figure~\ref{fig:klattice} and Figure~\ref{fig:qlattice}. Building on these lattices, we have constructed two different associated groups, the affine Weyl group and the projective affine Weyl group. By examining the action of each group on the maximal Abelian subspace $\mathfrak{h}$, we obtained the geometric regions shown in~Figure~\ref{fig:tetrahedral-cell2} and  Figure~\ref{fig:projective-tetrahedral-cell} which show a complete set of distinct equivalence classes.

In the previous literature~\cite{Zhang:2003zz}, the geometric region shown in Figure~\ref{fig:projective-tetrahedral-cell} is typically referred to as the ``Weyl chamber''. However, we adopt the term projective tetrahedral cell, because in standard mathematical terminology, the Weyl chamber is infinitely large and arises from the action of the Weyl group. In our case, the relevant action is that of the projective affine Weyl group. Hence, a correction in terminology is needed.

Despite the apparent differences between these two geometric representations, they are closely related. For instance, a gate represented by a point inside  the tetrahedron $OABC$ in Figure~\ref{fig:tetrahedral-cell2} multiplied by the global phase $i$ admits a counterpart inside the tetrahedron $OBCU$. This correspondence is realized by a reflection along the line $c_1 = \pi/4, c_3 = 0$, i.e.,
\[
[c_1, c_2, c_3] \sim [\pi/2 - c_1, c_2, -c_3].
\]
Algebraically, this reflects the fact that certain central elements like $\pm i I_4$ are absent from the group $K$, and thus this symmetry cannot be eliminated solely by the double coset equivalence relations.

Finally, we provide two figures that illustrate the locations of various representative instances of two-qubit gates (which we also view as representative elements in $\operatorname{SU}(4)$) widely used in quantum information and computation in the $T$-cell and $P$-cell, respectively.
For example, $\operatorname{SWAP}$ gate exchanges the states of the two qubits, whose matrix in the computational basis is
$$
\operatorname{SWAP} =
\begin{pmatrix}
1 & 0 & 0 & 0 \\
0 & 0 & 1 & 0 \\
0 & 1 & 0 & 0 \\
0 & 0 & 0 & 1
\end{pmatrix}.
$$
Its coordinate in the $T$-cell is $[\pi/4, \pi/4,\pi/4]$, thus the same $[\pi/4, \pi/4,\pi/4]$ in the $P$-cell. Similarly, we can compute other two-qubit gates' coordinates in the $T$-cell as well as in the $P$-cell, e.g.,
\begin{center}
\begin{tabular}{cc}
\toprule
Elements in $\operatorname{SU}(4)$  & Coordinates in the $P$-cell (and $T$-cell) \\
\midrule
$\operatorname{I}$ & $[0,0,0]$
\\
$\operatorname{SWAP}$ & $[\pi/4, \pi/4,\pi/4]$
\\
$\sqrt{\operatorname{SWAP}}$ &  $[\pi/8, \pi/8,\pi/8]$
\\
$\operatorname{iSWAP}$ & $[\pi/4, \pi/4,0]$
\\
$\sqrt{\operatorname{iSWAP}}$ &  $[\pi/8, \pi/8,0]$
\\
$\operatorname{CNOT}$ & $[\pi/4, 0,0]$
\\
$\operatorname{B}$ & $[\pi/4, \pi/8,0]$
\\
$\operatorname{QFT}$ & $[\pi/4, \pi/4, \pi/8]$
\\
$\chi$ & $\left[\pi/4-\tfrac{1}{8}\arccos(1/5),\ \pi/8,\ \tfrac{1}{8}\arccos(1/5)\right]$\\
\bottomrule
\end{tabular}
\end{center}
The $\chi$-gate is the unique two-qubit gate that exactly generates a $3$-design sandwiched by single-qubit Haar-random gates recently identified in \cite{kongConvergenceEfficiencyQuantum2024}.
The locations of these representative gates in the $P$-cell are shown in Figure~\ref{fig:projectivetetrahedralcellwithgate}.
\begin{figure}[H]
\centering
\includegraphics[width=0.75\linewidth]{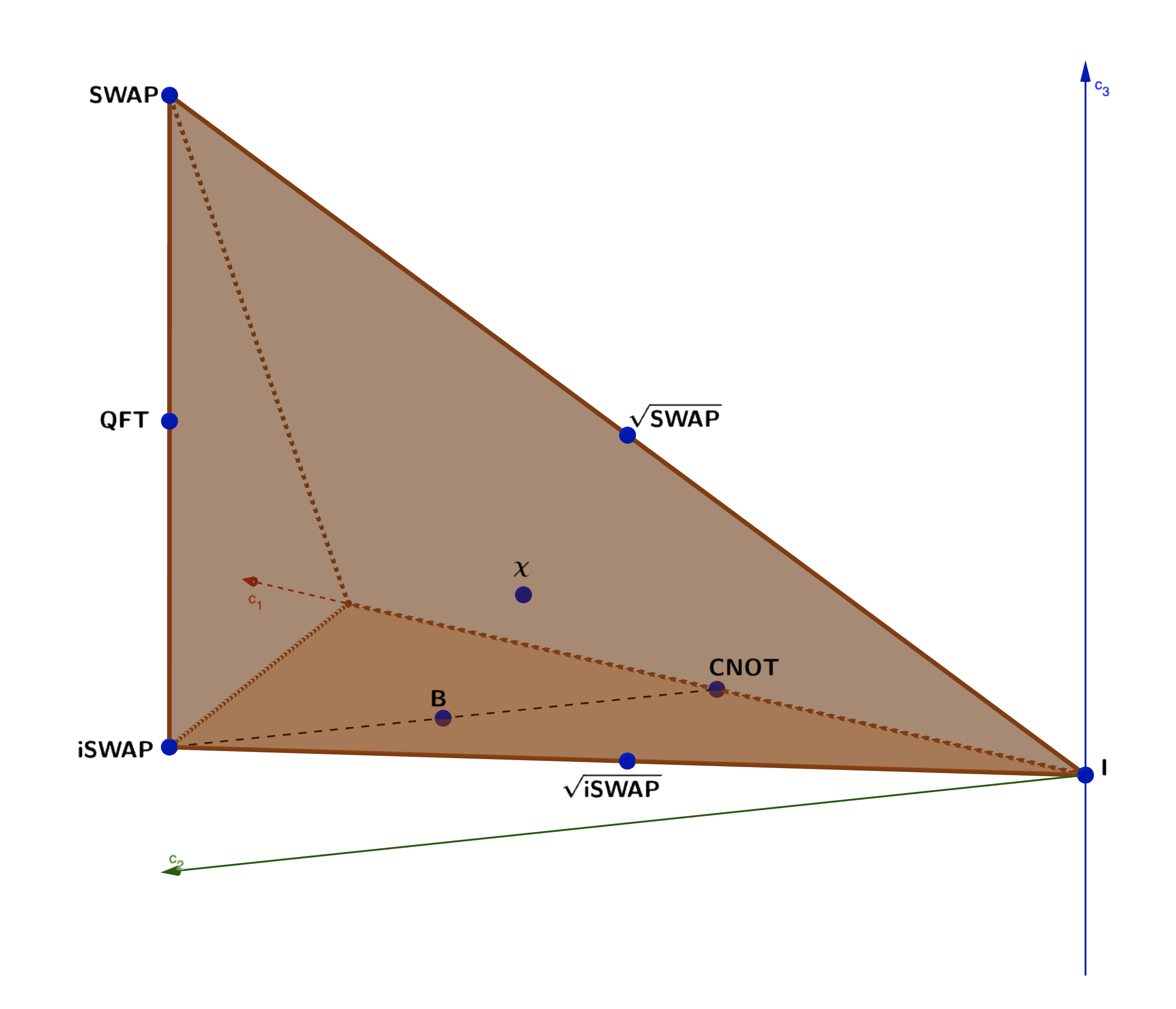}
\caption{Representative two-qubit gates in the $P$-cell.}
\label{fig:projectivetetrahedralcellwithgate}
\end{figure}

In contrast, recall that the $T$-cell is a twice as large region that additionally distinguishes different global phases.
For instance, define $\operatorname{SWAP}'$ to be the above $\operatorname{SWAP}$-gate multiplying a global phase $i$, which represents the action that exchanges the states of the two qubits and attaches a global phase $i$. Its matrix in the computational basis is
$$
\operatorname{SWAP}' = i\cdot
\begin{pmatrix}
1 & 0 & 0 & 0 \\
0 & 0 & 1 & 0 \\
0 & 1 & 0 & 0 \\
0 & 0 & 0 & 1
\end{pmatrix},
$$
and its coordinate in the $T$-cell is $[\pi/4, \pi/4,-\pi/4]$. Similarly, we define $\operatorname{I}'$,  $\sqrt{\operatorname{SWAP}}'$, $\sqrt{\operatorname{\operatorname{iSWAP}}}'$, $\operatorname{QFT}'$ and  $\chi'$ to be $I$, $\sqrt{\operatorname{SWAP}}$, $\sqrt{\operatorname{\operatorname{iSWAP}}}$, $\operatorname{QFT}$ and  $\chi$ multiplying a global phase $i$ respectively. Their coordinates in the $T$-cell are
\begin{center}
\begin{tabular}{cc}
\toprule
Elements in $\operatorname{SU}(4)$  & Coordinates in the $T$-cell \\
\midrule
$\operatorname{I}'$ & $[\pi/2,0,0]$
\\
$\operatorname{SWAP}'$ & $[\pi/4, \pi/4, -\pi/4]$
\\
$\sqrt{\operatorname{SWAP}}'$ &  $[3\pi/8, \pi/8, -\pi/8]$
\\
$\operatorname{iSWAP}'$ & $[\pi/4, \pi/4,0]$
\\
$\sqrt{\operatorname{iSWAP}}'$ &  $[3\pi/8, \pi/8,0]$
\\
$\operatorname{CNOT}'$ & $[\pi/4, 0,0]$
\\
$\operatorname{B}'$ & $[\pi/4, \pi/8,0]$
\\
$\operatorname{QFT}'$ & $[\pi/4, \pi/4, -\pi/8]$
\\
$\chi'$ & $\left[\pi/4+\tfrac{1}{8}\arccos(1/5),\ \pi/8,\ -\tfrac{1}{8}\arccos(1/5)\right]$\\
\bottomrule
\end{tabular}
\end{center}
Figure~\ref{fig:tetrahedralcellwithgate} presents the locations of those two-qubit gates with their counterparts which are by multiplication of a global phase $i$. For example, $\operatorname{SWAP}$ and $\operatorname{SWAP}'$ represent two different elements in the group $\operatorname{SU}(4)$ in which they differ by multiplication of a global phase $i$. Therefore, the points representing $\operatorname{SWAP}$ and $\operatorname{SWAP}'$ are symmetric along the line $c_1 = \pi/4, c_3 = 0$.
Since they are considered the same in the sense of projective equivalence, so they lie at the same point ($\operatorname{SWAP}$) in the $P$-cell as shown in Figure~\ref{fig:projectivetetrahedralcellwithgate}.
The situation is the same for such pairs, e.g.,~  $\sqrt{\operatorname{\operatorname{iSWAP}}}$ and $\sqrt{\operatorname{\operatorname{iSWAP}}}'$, $\operatorname{QFT}$ and $\operatorname{QFT}'$, $\chi$ and $\chi'$, where the relationship between them is also by multiplication of a global phase $i$. Note that the coordinates of $\operatorname{iSWAP}'$, $\operatorname{CNOT}'$ and $\operatorname{B}'$ are the same as those of $\operatorname{iSWAP}$, $\operatorname{CNOT}$ and $\operatorname{B}$, for brevity we only draw one out of each pair.
\begin{figure}[H]
\centering
\includegraphics[width=0.75\linewidth]{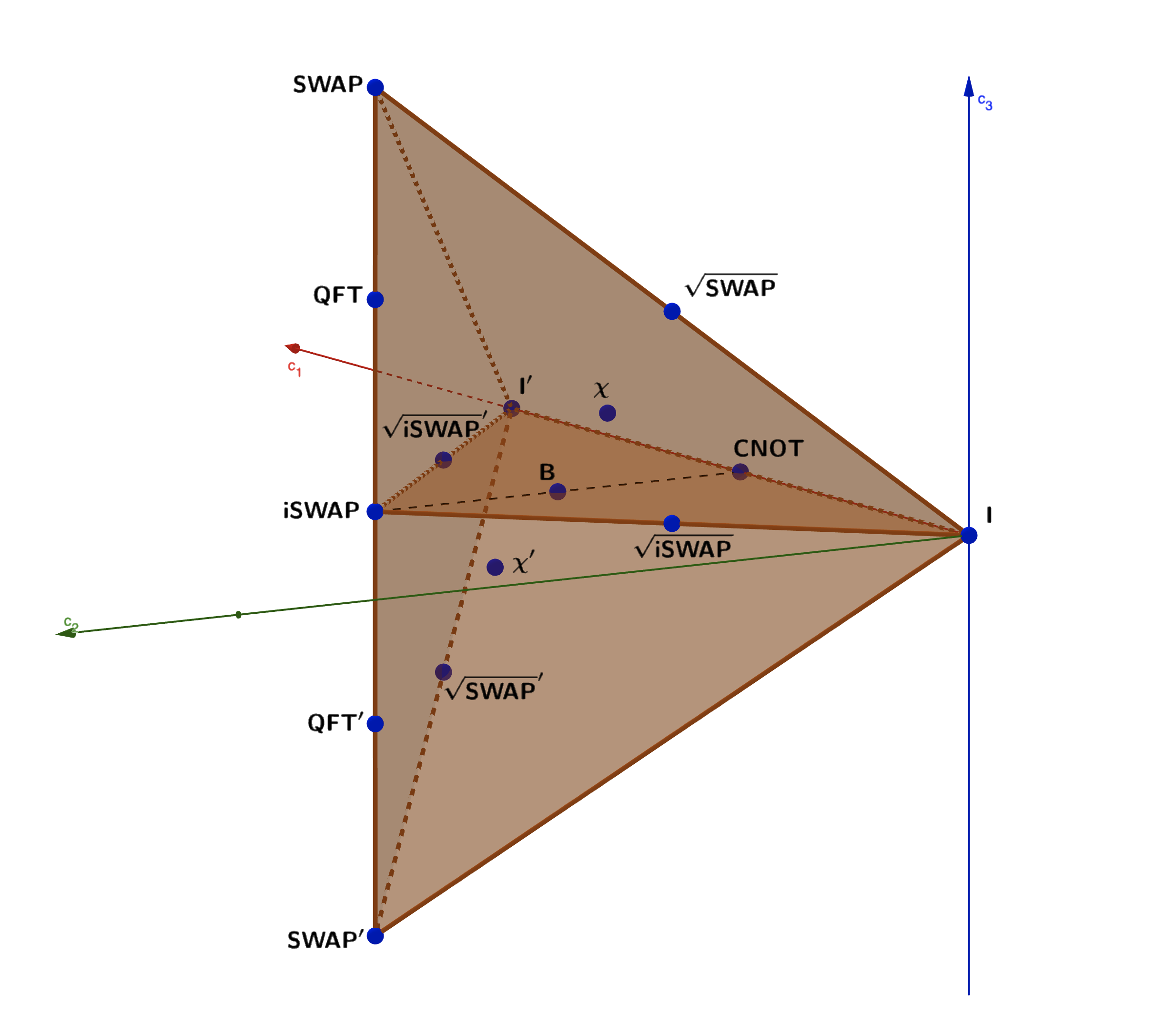}
\caption{Representative two-qubit gates in the $T$-cell. Note that the same gate can split across different locations due to global phase ambiguities.}
\label{fig:tetrahedralcellwithgate}
\end{figure}

\section{Conclusions and outlook}
In this work, we closed notable gaps in the Lie-theoretic foundations of the KAK decomposition and its applications, especially in quantum computing. Specifically, we make the heuristic arguments in the literature fully rigorous. Our primary aim, as set forth in the introduction, was twofold. First, we provide a clear and rigorous mathematical framework for understanding both the KAK decomposition for connected compact Lie groups and the different notions of equivalence classes for simply-connected compact Lie groups.  Second, we apply this general framework to \(\operatorname{SU}(4)\) to derive a precise and rigorous justification for the familiar tetrahedral representation of two-qubit gates.

We summarize our key findings again.
First, by establishing the KAK decomposition theorem for connected compact semisimple Lie groups, we clarified the conditions for its existence.
The resulting theorem (Theorem \ref{theorem_:kak_theorem_compact_version}) provides a solid foundation for its application to \(\operatorname{SU}(4)\) and beyond.
 Second, we developed a general theory for classifying group elements under two distinct notions of equivalence. By distinguishing between the double coset equivalence and the physically motivated projective equivalence, we derive the two corresponding groups, the affine Weyl group and its projective counterpart. This theoretical distinction is crucial for correctly interpreting the geometric spaces used in two-qubit gate classifications.
Third, we applied this general method to \(\operatorname{SU}(4)\), explicitly computing its Weyl group and deriving the \(K\)-lattice, which leads to the affine Weyl group. Additionally, we calculated the $p$-lattice and the corresponding projective affine Weyl group. By analyzing the actions of these two groups, we obtained two distinct geometric representations, the \(T\)-cell and the \(P\)-cell, and presented proofs of uniqueness for distinct points in each cell. This calculation rigorously justifies the familiar geometric representation of two-qubit gates, where the \(P\)-cell (previously referred to as the ``Weyl chamber'' in the literature) instead of the $T$-cell serves as a fundamental domain for projective-locally equivalent two-qubit gates.

Our work points to several interesting open directions for future research.
First, we have primarily stated and proved the KAK decomposition theorem for the compact case. It is well-established that analogous decompositions exist for a wider class of reductive Lie groups. A question for future investigation is whether the decomposition \(G = KAK\) holds for any Lie group whose Lie algebra admits a vector space decomposition satisfying the given commutation relations like \eqref{equation_:commutation relations_compact}, even if it does not arise from a strict Cartan decomposition.
Exploring the necessary and sufficient conditions for such decompositions in a more general setting would be a significant contribution to Lie theory.

Second, although our general framework for a simply-connected compact Lie groups is comprehensive, we have focused on its  application  to \(\operatorname{SU}(4)\) here. Applying this theory to other simply-connected compact groups relevant to multi-qubit or qudit systems, such as \(\operatorname{SU}(2^n)\) for multi-qubit gates or \(\operatorname{SU}(d)\) for qudit gates, is conceptually straightforward but could be technically complicated. As outlined in our general theory, an explicit recipe for such an application would involve:
\begin{enumerate}
\item Identifying a maximal Abelian subalgebra \(\mathfrak{a}\) and its corresponding Weyl group;
\item Determining the appropriate $p$-lattice;
\item Analyzing the action of the resulting projective affine Weyl group on \(\mathfrak{a}\) to obtain the fundamental domain for the equivalence classes.
\end{enumerate}
A systematic study of these steps for higher-dimensional systems would be a natural and valuable continuation of our work.

Third, our analysis on equivalence classes is specifically tailored to simply-connected compact Lie groups. The KAK decomposition and the classification of equivalence classes for non-simply-connected compact groups, or for non-compact reductive Lie groups, present new mathematical subtleties that remain to be studied. For compact Lie groups, the classification is conjectured to be intricately linked to the structure of their fundamental groups.  A rigorous investigation would significantly deepen the mathematical understanding of these decompositions and classifications.

Finally, extending these results to infinite-dimensional Lie groups remains largely unexplored and is expected to be of great interest to continuous-variable (CV) quantum information processing. For CV systems where the relevant symmetry groups are often infinite-dimensional (e.g.,~groups of unitary transformations on infinite-dimensional Hilbert spaces), it is not yet known under what conditions an analogous KAK-type decomposition exists. Whether a decomposition into local and entangling operations, expressed via a Cartan-like subalgebra, can be rigorously formulated in the CV setting remains a valuable open direction.

In conclusion, this work provides a rigorous mathematical foundation for a concept of great significance to modern quantum computing. By clarifying the underlying theory, distinguishing between different notions of equivalence classes, and explicitly deriving some key results for \(\operatorname{SU}(4)\), we hope to have formulated a general technique that is both mathematically sound and practically useful, fostering a deeper connection between pure mathematics and its applications in quantum information science.

\section*{Acknowledgments}
We thank Jianxin Chen, Chang Huang, and Changjian Su for valuable discussion and feedback. DD is funded by the Shanghai Institute of Mathematics and Interdisciplinary Sciences under grant number [SIMIS-ID-2025-QT]. DD would like to thank God for all of His provisions.
YL and ZWL are supported in part by NSFC under Grant No.~12475023, Dushi Program, and a startup funding from YMSC.

\bibliographystyle{alpha}
\bibliography{ref_KAK}

\appendix

\section{Definitions of the Cartan Decomposition}\label{appendix_:Proof of the equivalence of two definitions-Cartan decomposition}
In this appendix, we prove that Definition~\ref{definition_:Cartan decomposition-Helgason} and Definition~\ref{definition_:Cartan decomposition-Knapp96} are equivalent.
\begin{proof}[Proof]

\textbf{Proof of Definition~\ref{definition_:Cartan decomposition-Helgason} $\implies$ Definition~\ref{definition_:Cartan decomposition-Knapp96}: }

Assume a real semisimple Lie algebra $\mathfrak{g}_0 = \mathfrak{k}_0 \oplus \mathfrak{p}_0$ satisfies Definition~\ref{definition_:Cartan decomposition-Helgason}.
That is, there exists a compact real form $\mathfrak{g}_k$ of $\mathfrak{g}$ such that
\[
\sigma \cdot \mathfrak{g}_k \subset \mathfrak{g}_k,
\quad
\mathfrak{k}_0=\mathfrak{g}_0 \cap \mathfrak{g}_k,
\quad
\mathfrak{p}_0=\mathfrak{g}_0 \cap\left(i \mathfrak{g}_k\right).
\]

For $X, Y \in \mathfrak{k}_0 \subset \mathfrak{g}_k$, we have $[X, Y] \in \mathfrak{g}_k$ (since $\mathfrak{g}_k$ is a subalgebra). Also $[X, Y] \in \mathfrak{g}_0$ because $\mathfrak{g}_0$ is a subalgebra. Hence
$$
[X, Y] \in \mathfrak{g}_0 \cap \mathfrak{g}_k = \mathfrak{k}_0.
$$
So $[\mathfrak{k}_0, \mathfrak{k}_0] \subset \mathfrak{k}_0$.
\\
\noindent
For $X \in \mathfrak{k}_0$ and $Y \in \mathfrak{p}_0$, write $Y = iZ$ with $Z \in \mathfrak{g}_k$. Then
$$
[X, Y] = [X, iZ] = i[X, Z] \in i\mathfrak{g}_k.
$$
Also $[X, Y] \in \mathfrak{g}_0$. Hence
$$
[X, Y] \in \mathfrak{g}_0 \cap i\mathfrak{g}_k = \mathfrak{p}_0.
$$
So $[\mathfrak{k}_0, \mathfrak{p}_0] \subset \mathfrak{p}_0$.
\\
\noindent
For $X, Y \in \mathfrak{p}_0$, write $X = iZ$, $Y = iW$ with $Z, W \in \mathfrak{g}_k$. Then
$$
[X, Y] = [iZ, iW] = -[Z, W] \in \mathfrak{g}_k.
$$
Also $[X, Y] \in \mathfrak{g}_0$. Hence
$$
[X, Y] \in \mathfrak{g}_0 \cap \mathfrak{g}_k = \mathfrak{k}_0.
$$
So $[\mathfrak{p}_0, \mathfrak{p}_0] \subset \mathfrak{k}_0$.
Therefore, the commutation relations \eqref{equation_:commutation_relations_in_Cartan_decomposition} are satisfied.

Since $\mathfrak{g}_k$ is a compact real form, its Killing form $B_{\mathfrak{g}_k}$ is negative definite. For any non-zero $X \in \mathfrak{k}_0$ and  $Y\in \mathfrak{p}_0, Z = iY \in i \mathfrak{p}_0$, we have
\[
\operatorname{ad}_X\circ \operatorname{ad}_X (iY)  =  i \operatorname{ad}_X\circ \operatorname{ad}_X (Y),
\]
hence for any non-zero $X \in \mathfrak{k}_0 \subset \mathfrak{g}_k$, we have
$$
B_{\mathfrak{g}_0}(X, X) = B_{\mathfrak{g}_k}(X, X) < 0.
$$
Hence $B_{\mathfrak{g}_0}$ is negative definite on $\mathfrak{k}_0$.
And for any non-zero $Y \in \mathfrak{p}_0$, write $Y = iZ$ with $Z \in \mathfrak{g}_k$. Then
$$
B_{\mathfrak{g}_0}(Y, Y) = B_{\mathfrak{g}}(iZ, iZ) = i^2 B_{\mathfrak{g}}(Z, Z) = -B_{\mathfrak{g}}(Z, Z) > 0,
$$
since $B_{\mathfrak{g}}(Z, Z) = B_{\mathfrak{g}_k}(Z, Z) < 0$.
Hence $B_{\mathfrak{g}_0}$ is positive definite on $\mathfrak{p}_0$.

Therefore, the Killing form definite conditions are satisfied. Therefore, Definition~\ref{definition_:Cartan decomposition-Helgason} $\implies$ Definition~\ref{definition_:Cartan decomposition-Knapp96}.

\textbf{Proof of  Definition~\ref{definition_:Cartan decomposition-Knapp96} $\implies$ Definition~\ref{definition_:Cartan decomposition-Helgason}: }

Assume $\mathfrak{g}_0 = \mathfrak{k}_0 \oplus \mathfrak{p}_0$ satisfies the conditions in Definition~\ref{definition_:Cartan decomposition-Knapp96}.

We first construct an involution. Define a linear map $\theta: \mathfrak{g}_0 \to \mathfrak{g}_0$ by
$$
\theta(X) = X \quad \text{for } X \in \mathfrak{k}_0,\qquad
\theta(Y) = -Y \quad \text{for } Y \in \mathfrak{p}_0.
$$
The commutation relations \eqref{equation_:commutation_relations_in_Cartan_decomposition} ensure that $\theta$ is a Lie algebra automorphism. Clearly $\theta^2 = \mathrm{Id}$, so $\theta$ is an involution.

Let $\mathfrak{g} = \mathfrak{g}_0 \oplus i\mathfrak{g}_0$ be the complexification of $\mathfrak{g}_0$.
Extend $\theta$ to a complex-linear involution on $\mathfrak{g}$, still denoted by $\theta$.
Let $\sigma$ be the conjugation of $\mathfrak{g}$ with respect to $\mathfrak{g}_0$, so $\sigma$ is anti-linear, i.e.,
$$
\sigma(X + iY) = X - iY, \quad \forall X,Y \in \mathfrak{g}_0.
$$
Consider the anti-linear involution $\sigma \theta : \mathfrak{g} \to \mathfrak{g}$.
Define
$$
\mathfrak{g}_k := \{ X \in \mathfrak{g} : \sigma \theta X = X \},
$$
which  is the set of fixed points of an anti-linear involution and equals $\mathfrak{g}_k = \mathfrak{k}_0 \oplus i\mathfrak{p}_0$, hence a real form of $\mathfrak{g}$.

Next we show that $\mathfrak{g}_k$ is compact.
For any non-zero $X \in \mathfrak{g}_k$, let $X = Y + iZ$ with $Y\in \mathfrak{k}_0, Z\in \mathfrak{p}_0$, then
\[
B_{\mathfrak{g}_k} (X,X) = B_{\mathfrak{g}_k} (Y + iZ, Y + iZ )
=  B_{\mathfrak{g}_k} (Y, Y) -  B_{\mathfrak{g}_k} (Z, Z).
\]
The crossing terms vanish because of commutation relations.
Since we can identify $B_{\mathfrak{g}_k}$ with $B_{\mathfrak{g}_0}$, so by definition of $B_{\mathfrak{g}_0}$, we have $B_{\mathfrak{g}_k} (Y, Y) < 0,  -  B_{\mathfrak{g}_k} (Z, Z) < 0$ unless $Y=0$ or $Z = 0$. Therefore, $B_{\mathfrak{g}_k} (X,X)< 0 $ unless $X = 0$. Hence $\mathfrak{g}_k$ is a compact real form of $\mathfrak{g}$.

By construction,  $\mathfrak{g}_0 \cap \mathfrak{g}_k$ consists of elements in $\mathfrak{g}_0$ fixed by $\theta$, which is precisely $\mathfrak{k}_0$.
And $\mathfrak{g}_0 \cap i\mathfrak{g}_k$ consists of elements in $\mathfrak{g}_0$ on which $\theta$ acts as $-1$, which is precisely $\mathfrak{p}_0$.
Also, by the definition of  $\mathfrak{g}_k$, let $X\in \mathfrak{g}_k$, then $ \sigma \theta X = X$. So  $ \sigma \theta \sigma X = \sigma \theta \theta X = \sigma X$, hence  $\sigma(\mathfrak{g}_k) \subset \mathfrak{g}_k$.

Therefore, the compact real form $\mathfrak{g}_k$ satisfies all conditions. Therefore,
Definition~\ref{definition_:Cartan decomposition-Knapp96} $\implies$ Definition~\ref{definition_:Cartan decomposition-Helgason}.
\qedhere
\end{proof}

\section{Some key results in Lie theory}\label{appendix_:Maximal Torus Theorem Some key results in Lie theory}

The Maximal Torus Theorem is a fundamental result in the theory of compact Lie groups. It states that

\begin{theorem}[Maximal Torus Theorem]\label{theorem_:Maximal Torus Theorem}
Let $G$ be a compact connected Lie group. Then the followings hold.
\begin{enumerate}
\item
\textbf{Existence of a maximal torus}: $G$ contains a maximal torus $T$, which is a maximal connected Abelian subgroup isomorphic to $(S^1)^k$.
\item
\textbf{Conjugacy of maximal tori}: Any two maximal tori in $G$ are conjugate.
\item
\textbf{Surjectivity of the exponential map}: Every element of $G$ is conjugate to an element of $T$, i.e.,
$$
G = \bigcup_{g \in G} g T g^{-1}.
$$
\end{enumerate}
\end{theorem}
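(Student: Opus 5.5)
The plan is to reduce everything to a statement about the Lie algebra $\mathfrak{g}$ of $G$ and then pass to the group using surjectivity of the exponential map. I will fix an $\operatorname{Ad}(G)$-invariant inner product $\langle\cdot,\cdot\rangle$ on $\mathfrak{g}$, obtained by averaging any inner product over the compact group $G$ with Haar measure. With respect to it every $\operatorname{ad}_X$ is skew-symmetric.

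\textbf{Existence.} Since $\mathfrak{g}$ is finite dimensional, it contains a maximal Abelian subalgebra $\mathfrak{t}$. I will set $T:=\overline{\exp \mathfrak{t}}$. This is a closed, connected, Abelian subgroup of the compact group $G$, hence a compact connected Abelian Lie group, hence isomorphic to $(S^1)^k$. Its Lie algebra is Abelian and contains $\mathfrak{t}$, so by maximality it equals $\mathfrak{t}$. If $T'\supseteq T$ is a larger torus, its Lie algebra is Abelian and contains $\mathfrak{t}$, so it equals $\mathfrak{t}$; by connectedness $T'=T$. Thus $T$ is a maximal torus.

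\textbf{Key infinitesimal step: every $X\in\mathfrak{g}$ is $\operatorname{Ad}$-conjugate into $\mathfrak{t}$.}
\begin{enumerate}
\item By Kronecker's theorem, choose $H\in\mathfrak{t}$ such that $\{\exp(sH)\}_{s\in\mathbb{R}}$ is dense in $T$.
\item For such $H$, if $[Y,H]=0$ then $\operatorname{Ad}_{\exp sH}Y=Y$ for all $s$, hence $\operatorname{Ad}_t Y=Y$ for all $t\in T$ by density. Differentiating gives $[Y,\mathfrak{t}]=0$, so $Y\in\mathfrak{t}$ by maximality. In other words, the centralizer of $H$ in $\mathfrak{g}$ is exactly $\mathfrak{t}$.
\item Given $X\in\mathfrak{g}$, consider the continuous function $f(g)=\langle \operatorname{Ad}_g X, H\rangle$ on the compact group $G$. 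It attains a minimum at some $g_0$.
\item Differentiate along $g=\exp(sZ)g_0$ at $s=0$. Using skew-symmetry of $\operatorname{ad}$, this gives $0=\langle [Z,\operatorname{Ad}_{g_0}X],H\rangle=\langle Z,[\operatorname{Ad}_{g_0}X,H]\rangle$ for all $Z\in\mathfrak{g}$.
\item Hence $[\operatorname{Ad}_{g_0}X,H]=0$, and by step 2, $\operatorname{Ad}_{g_0}X\in\mathfrak{t}$.
\end{enumerate}

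\textbf{Surjectivity.} Let $g\in G$. By Proposition~\ref{proposition_:exponential map compact onto}, $\exp:\mathfrak{g}\to G$ is onto for compact connected $G$, so $g=\exp X$ for some $X$. By the key step, $g_0\exp(X) g_0^{-1}=\exp(\operatorname{Ad}_{g_0}X)\in\exp\mathfrak{t}\subseteq T$. Hence $G=\bigcup_{g\in G} gTg^{-1}$.

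\textbf{Conjugacy.} Let $T'$ be another maximal torus, and pick a topological generator $t'\in T'$, again by Kronecker. By surjectivity, $t'\in gTg^{-1}$ for some $g$. Since $gTg^{-1}$ is a closed subgroup, it contains $\overline{\langle t'\rangle}=T'$. Now $gTg^{-1}$ is a torus, so maximality of $T'$ forces $T'=gTg^{-1}$.

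\textbf{Expected obstacles.} The main one is the surjectivity of $\exp$ for compact connected groups. I take this as given from the cited proposition. If it had to be proved, I would use the bi-invariant Riemannian metric: its geodesics through $e$ are one-parameter subgroups, and Hopf--Rinow on the compact, hence complete, manifold $G$ gives surjectivity. The secondary technical point is the existence of a topological generator $H$ of $T$. For this I would choose $H$ with coordinates rationally independent modulo the integer lattice and apply Kronecker's density theorem.
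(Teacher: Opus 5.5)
You should know first that the paper does not prove this theorem at all. It quotes it as a standard fact and then uses part~3 to \emph{derive} Proposition~\ref{proposition_:exponential map compact onto}: write $g=khk^{-1}$ with $h\in T$ and $h=\exp Y$.

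Your argument runs that dependency the other way. You take surjectivity of $\exp$ as input and produce part~3 as output, so citing that proposition is circular within this paper. The Riemannian argument you list under ``expected obstacles'' is therefore not an optional fallback but a load-bearing step, and it must be carried out:
\begin{enumerate}
\item Use your $\operatorname{Ad}(G)$-invariant inner product to get a bi-invariant metric.
\item Check via the Koszul formula that $\nabla_XY=\tfrac12[X,Y]$ on left-invariant fields, so the geodesics through $e$ are exactly $s\mapsto\exp(sX)$.
\item Apply Hopf--Rinow on the compact, hence complete, connected manifold $G$.
\end{enumerate}

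With that supplied, the rest is correct:
\begin{enumerate}
\item $\overline{\exp\mathfrak t}$ is a torus with Lie algebra $\mathfrak t$.
\item A Kronecker-generic $H$ has centralizer exactly $\mathfrak t$.
\item Minimizing $\langle\operatorname{Ad}_gX,H\rangle$ gives $\langle Z,[\operatorname{Ad}_{g_0}X,H]\rangle=0$ for all $Z$, with the sign you computed.
\item The topological-generator argument gives conjugacy.
\end{enumerate}

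Two small completions are needed. The statement asks for maximality among connected Abelian subgroups, not just among tori. This follows because any connected Abelian $S\supseteq T$ has closure a torus containing $T$, hence equal to $T$. Also, part~3 for an arbitrary maximal torus follows from your part~3 for the constructed $T$ together with conjugacy.

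On the choice of route: the classical topological proof uses the Lefschetz fixed-point theorem, or the degree of $G/T\times T\to G$, $(gT,t)\mapsto gtg^{-1}$. It obtains $G=\bigcup_g gTg^{-1}$ without knowing surjectivity of $\exp$, and gets surjectivity as a corollary. That is exactly the order the paper's appendix assumes. Your approach replaces the algebraic topology by Riemannian completeness plus a one-line variational computation in $\mathfrak g$. It also yields the stronger infinitesimal statement that every $X\in\mathfrak g$ is $\operatorname{Ad}(G)$-conjugate into $\mathfrak t$.

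There is a further payoff. Run the same minimization over $K$, with $H$ a generic element of a maximal Abelian $\mathfrak h\subset\mathfrak p$. For $W,H\in\mathfrak p$ one has $[W,H]\in\mathfrak k$, so testing against $Z\in\mathfrak k$ suffices. This proves directly that maximal Abelian subspaces of $\mathfrak p$ are $K$-conjugate, which the paper simply invokes as a ``direct corollary'' of this theorem.
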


This theorem is crucial for understanding the structure of compact Lie groups.  As an application it can be used to prove that the exponential map $\exp: \mathfrak{g} \to G$ is surjective for compact Lie groups.

\begin{proposition}[]\label{proposition_:exponential map compact onto}
Let $G$ be a compact connected Lie group, $\mathfrak{g}$ its Lie algebra, and
$$
\exp : \mathfrak{g} \to G
$$
the exponential map. Then the map $\exp$ is onto.
\end{proposition}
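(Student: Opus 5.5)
The strategy is to prove that $\exp$ is onto by combining parts 2 and 3 of Theorem~\ref{theorem_:Maximal Torus Theorem}. First, $\exp$ is onto for a torus. Second, $\exp$ commutes with conjugation. Together these give surjectivity for all of $G$.

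\emph{Step 1: the torus case.} Let $T$ be a maximal torus of $G$, given by part 1, and let $\mathfrak{t}\subset\mathfrak{g}$ be its Lie algebra. Since $T$ is connected and Abelian, $\exp|_{\mathfrak{t}}:\mathfrak{t}\to T$ is a Lie group homomorphism from the additive group $(\mathfrak{t},+)$: commuting elements satisfy $\exp(X+Y)=\exp X\exp Y$. Its image contains a neighborhood of $e$, because $\exp$ is a local diffeomorphism near $0$. The image is therefore an open subgroup of $T$, hence also closed, since its complement is a union of cosets. As $T$ is connected, the image is all of $T$. Concretely, under $T\cong(S^1)^k$ this map is just $\mathbb{R}^k\to(\mathbb{R}/2\pi\mathbb{Z})^k$.

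\emph{Step 2: conjugation.} For $g\in G$ and $X\in\mathfrak{g}$ we have
\[
g\exp(X)g^{-1}=\exp(\operatorname{Ad}_g X).
\]
This holds because $t\mapsto g\exp(tX)g^{-1}$ is a one-parameter subgroup whose derivative at $t=0$ is $\operatorname{Ad}_gX$. It is also the naturality of $\exp$ (Proposition~\ref{proposition_:natural property of the exponential map}) applied to the automorphism $c_g$, whose differential is $\operatorname{Ad}_g$.

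\emph{Step 3: conclusion.} Take any $x\in G$. By part 3 of Theorem~\ref{theorem_:Maximal Torus Theorem}, there exist $g\in G$ and $t\in T$ with $x=gtg^{-1}$. By Step 1, $t=\exp H$ for some $H\in\mathfrak{t}$. By Step 2, $x=\exp(\operatorname{Ad}_gH)$ with $\operatorname{Ad}_gH\in\mathfrak{g}$. Hence $\exp$ is onto.

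The main obstacle lies entirely in the deep input, namely part 3 of the Maximal Torus Theorem, which we are allowed to assume. Beyond that, the only point needing care is Step 1: we must justify that a connected Abelian subgroup is generated by, and in fact equals, the exponential image of its Lie algebra. The open-subgroup argument handles this.
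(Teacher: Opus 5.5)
Your proof is correct and follows the same route as the paper: conjugate $x$ into a maximal torus $T$, write the torus element as $\exp Y$ with $Y\in\mathfrak{t}$, and use $k\exp(Y)k^{-1}=\exp(\operatorname{Ad}_k Y)$; you also justify surjectivity of $\exp$ on $T$ and the conjugation identity, which the paper takes for granted. One trivial slip is that your opening sentence says you combine parts 2 and 3 of the Maximal Torus Theorem, whereas the argument actually uses parts 1 and 3 (conjugacy of maximal tori is never needed).
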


\begin{proof}[Proof]
For any $g \in G$, it is conjugate to some $h \in T$ by the maximal torus theorem. So $g = k h k^{-1}$ with $h \in T$ and $k \in G$. We know $h = \exp(Y)$ for some $Y \in \mathfrak{t}$, hence
$$
k h k^{-1} = k \exp(Y) k^{-1} = \exp(\mathrm{Ad}_k Y).
$$
Therefore, $g = \exp(\mathrm{Ad}_k Y)$ with $\mathrm{Ad}_k Y \in \mathfrak{g}$.
So every element of $G$ is in the image of $\exp$.
\qedhere
\end{proof}

Next proposition is related to the natural property of the exponential map with Lie group and Lie algebra homomorphisms and it is called the second fundamental result in the elementary theory of Lie groups in \cite[Sect.~10, Ch.~I]{Knapp1996Lie}.

\begin{proposition}[]\label{proposition_:natural property of the exponential map}
Let $G$ and $H$ be Lie groups with Lie algebras $\mathfrak{g}$ and $\mathfrak{h}$. If $G$ is connected and simply-connected and $H$ is connected, then for any Lie algebra homomorphism $\phi : \mathfrak{g} \to \mathfrak{h}$, there exists a unique Lie group homomorphism $\Phi : G \to H$ such that $d\Phi_e = \phi$. Moreover, the following commutative diagram commutes.
\begin{center}
\begin{tikzcd}
\mathfrak{g} \arrow[r, "\phi"] \arrow[d, "\exp_G"'] & \mathfrak{h} \arrow[d, "\exp_H"] \\
G \arrow[r, "\Phi"]                               & H
\end{tikzcd}
\end{center}
\end{proposition}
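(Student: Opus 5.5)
The plan is the standard graph-subgroup construction, which proves existence and uniqueness of $\Phi$ and the commutativity of the diagram in one framework.

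First I would prove that the diagram commutes for any Lie group homomorphism $\Phi$ with $d\Phi_e=\phi$; this also gives uniqueness. Fix $X\in\mathfrak g$ and consider the curve $t\mapsto \Phi(\exp_G tX)$ in $H$. It is a one-parameter subgroup of $H$, and its initial velocity is $d\Phi_e(X)=\phi(X)$. By uniqueness of one-parameter subgroups it equals $t\mapsto\exp_H(t\phi(X))$. Setting $t=1$ gives $\Phi\circ\exp_G=\exp_H\circ\phi$. For uniqueness, note that $G$ is connected, so it is generated by $\exp_G(V)$ for any neighborhood $V$ of $0$. Two homomorphisms with the same differential agree on $\exp_G(\mathfrak g)$ by the identity just proved. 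Hence they agree on all of $G$.

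For existence, I would consider the graph $\mathfrak s=\{(X,\phi(X)) : X\in\mathfrak g\}\subset\mathfrak g\oplus\mathfrak h$. Since $\phi$ is a Lie algebra homomorphism, $\mathfrak s$ is a Lie subalgebra. By the subgroup--subalgebra correspondence, there is a unique connected Lie subgroup $S\subset G\times H$ with Lie algebra $\mathfrak s$. Let $p_1:S\to G$ and $p_2:S\to H$ be the restrictions of the two coordinate projections. Then $dp_1:\mathfrak s\to\mathfrak g$, $(X,\phi(X))\mapsto X$, is a linear isomorphism.

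The key step is to show that $p_1$ is an isomorphism of Lie groups.
\begin{enumerate}
\item Since $dp_1$ is bijective, $p_1$ is a local diffeomorphism near $e$. Its image therefore contains a neighborhood of $e$ in the connected group $G$, so $p_1$ is surjective.
\item The kernel of $p_1$ is discrete.
\item A surjective homomorphism of connected Lie groups with discrete kernel is a covering map. To see this, take a neighborhood $U$ of $e$ in $S$ on which $p_1$ is injective, and choose a smaller neighborhood $U'$ with $U'^{-1}U'\subset U$. The translates $sU'$, for $s\in\ker p_1$, are then pairwise disjoint, and each is mapped diffeomorphically onto the evenly covered set $p_1(U')$.
\item Since $S$ is connected and $G$ is simply connected, this covering is trivial, so $p_1$ is bijective and hence an isomorphism of Lie groups.
\end{enumerate}

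Finally I would define $\Phi:=p_2\circ p_1^{-1}:G\to H$. It is a smooth homomorphism, being a composite of such maps. Its differential is $d\Phi_e(X)=dp_2(X,\phi(X))=\phi(X)$, as required.

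I expect step 3, the covering-map argument, to be the main technical obstacle. Everything else is formal once the subgroup--subalgebra correspondence and uniqueness of one-parameter subgroups are assumed. The connectedness hypothesis on $H$ is not essential to this argument; it only ensures the statement is posed within connected groups.
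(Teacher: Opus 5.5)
Your proof is correct. The paper gives no argument of its own for this proposition and only cites Knapp. Your argument is the standard textbook proof behind that citation: uniqueness and the commuting square come from uniqueness of one-parameter subgroups, and existence comes from the connected subgroup of $G\times H$ whose Lie algebra is the graph of $\phi$, whose first projection is a covering of the simply connected $G$ and hence an isomorphism. Your side remark that connectedness of $H$ is not needed is also right.
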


\section{Bell basis transformation}
\label{appendix_:Bell basis transformation}

In this appendix, we use the Bell basis transformation method to establish an isomorphism between the Lie groups $\operatorname{SO}(4)$ and $\operatorname{SU}(2)\otimes\operatorname{SU}(2)$. This transformation also changes the basis of the corresponding Lie algebras $\mathfrak{so}(4)$ and $\mathfrak{su}(2)\oplus\mathfrak{su}(2)$. This is useful for representing the matrix in the KAK decomposition. Note that it can be directly observed from the Dynkin diagram that these two Lie algebras are isomorphic, with $\mathfrak{su}(2)$ being of type A and $\mathfrak{so}(4)$ being of type D. However, here we provide a more explicit matrix transformation that is useful and beneficial for explicit calculations and engineering applications.

Let
\[
Q:=\frac{1}{\sqrt{2}}\left[\begin{array}{cccc}
1 & 0 & 0 & i \\
0 & i & 1 & 0 \\
0 & i & -1 & 0 \\
1 & 0 & 0 & -i
\end{array}\right]
\]
be the Bell basis transformation.

\begin{lemma}[]\label{lemma_:Bell_basis_transformation}
Bell basis transformation gives
\[
Q (\mathfrak{so}(4) ) Q^{\dagger} = \mathfrak{su}(2)\oplus \mathfrak{su}(2)
\]
and
\begin{equation}\label{equation_04_04:01-SO(4)}
Q(\operatorname{SO}(4)) Q^{\dagger}=\operatorname{SU}(2) \otimes \operatorname{SU}(2),
\end{equation}
\end{lemma}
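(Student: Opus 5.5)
The plan is to prove the Lie algebra statement first and then pass to the groups by exponentiating. Write $\mathfrak{su}(2)\oplus\mathfrak{su}(2)$ for the real span of the six operators $i\sigma_\alpha\otimes I$ and $iI\otimes\sigma_\alpha$, $\alpha\in\{x,y,z\}$. This is the Lie algebra of the connected subgroup $\operatorname{SU}(2)\otimes\operatorname{SU}(2)$ of $\operatorname{SU}(4)$.

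\textbf{Step 1: a direct computation.} For each of these six operators $X$ I would compute $M=Q^{\dagger}XQ$ and check that $M$ is real and antisymmetric.
\begin{itemize}
\item Since $X$ is anti-Hermitian and $Q$ is unitary (its columns are the orthonormal Bell vectors listed in the proof of Proposition~\ref{proposition_:unit lattice of SU4}), $M$ is automatically anti-Hermitian.
\item So it suffices to check that $M$ has real entries, i.e.\ that $\overline{Q}^{\,\dagger}\,\overline{X}\,\overline{Q}=Q^{\dagger}XQ$.
\item Equivalently, $Q Q^{T}$ must intertwine $X$ with $\overline{X}$. Here $QQ^T$ turns out to be (up to sign) $\sigma_y\otimes\sigma_y$, and one has $\sigma_y\sigma_\alpha\sigma_y=-\overline{\sigma_\alpha}$. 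This is the mechanism behind the magic basis, and it gives realness in one stroke rather than six separate computations.
\item The six resulting real antisymmetric matrices are linearly independent, because conjugation by $Q$ is injective. Since $\dim\mathfrak{so}(4)=6$, they span $\mathfrak{so}(4)$.
\end{itemize}
This gives $Q^{\dagger}(\mathfrak{su}(2)\oplus\mathfrak{su}(2))Q=\mathfrak{so}(4)$, which is the same as $Q(\mathfrak{so}(4))Q^{\dagger}=\mathfrak{su}(2)\oplus\mathfrak{su}(2)$. Conjugation by $Q$ preserves brackets, so this is an isomorphism of Lie algebras.

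\textbf{Step 2: passing to the groups.} Conjugation by $Q$ is a Lie group automorphism $c_Q$ of $\operatorname{U}(4)$, and it commutes with $\exp$. Two facts then finish the argument.
\begin{itemize}
\item $\operatorname{SO}(4)$ is compact and connected, so by Proposition~\ref{proposition_:exponential map compact onto} we have $\operatorname{SO}(4)=\exp\mathfrak{so}(4)$.
\item $\operatorname{SU}(2)\otimes\operatorname{SU}(2)$ is the image of the compact connected group $\operatorname{SU}(2)\times\operatorname{SU}(2)$, so it is itself compact and connected, and it equals $\exp(\mathfrak{su}(2)\oplus\mathfrak{su}(2))$.
\end{itemize}
Therefore
\[
Q\operatorname{SO}(4)Q^{\dagger}=Q\exp(\mathfrak{so}(4))Q^{\dagger}=\exp\bigl(Q\,\mathfrak{so}(4)\,Q^{\dagger}\bigr)=\exp\bigl(\mathfrak{su}(2)\oplus\mathfrak{su}(2)\bigr)=\operatorname{SU}(2)\otimes\operatorname{SU}(2).
\]

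\textbf{Main obstacle.} The only real work is the realness check in Step 1, and the risk there is a bookkeeping error with the phases $i$ in the second and fourth Bell vectors. To guard against this, I would verify the identity $QQ^{T}\propto\sigma_y\otimes\sigma_y$ explicitly from the matrix $Q$. As a cross-check, I would compute one generator directly, for example $Q^{\dagger}(i\sigma_z\otimes I)Q$, and confirm that it is a real antisymmetric matrix.
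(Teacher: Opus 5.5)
Your proposal is correct, but it reaches the lemma by a different route from the paper.

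\textbf{Your key identity checks out.} From the given $Q$ one gets $QQ^{T}=-\sigma_y\otimes\sigma_y$, the antidiagonal matrix with entries $1,-1,-1,1$. So your intertwining condition $QQ^{T}\,\overline{X}=X\,QQ^{T}$ reduces, using $\overline{i}=-i$, to $\sigma_y\overline{\sigma_\alpha}\sigma_y=-\sigma_\alpha$ in the relevant tensor factor. As a spot check, $Q^{\dagger}(i\sigma_z\otimes I)Q$ is indeed real antisymmetric; it equals $-2B_3$ in the paper's notation.

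\textbf{How the paper argues instead.} At the Lie algebra level, the paper writes down an explicit basis $A_i,B_i$ of $\mathfrak{so}(4)$ and computes the six conjugates $QA_iQ^{\dagger}$, $QB_iQ^{\dagger}$ by hand. It matches each with some $\pm\frac{i}{2}\sigma_\alpha^{j}$. It then checks the $\mathfrak{su}(2)$ commutation relations, which is redundant since conjugation automatically preserves brackets. At the group level, it argues through $\operatorname{Spin}(4)\cong\operatorname{SU}(2)\times\operatorname{SU}(2)$ and the quotients by $\{\pm I\}$. As written, this only concludes an isomorphism $\operatorname{SO}(4)\cong\operatorname{SU}(2)\otimes\operatorname{SU}(2)$. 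It does not directly give the stated set equality $Q\operatorname{SO}(4)Q^{\dagger}=\operatorname{SU}(2)\otimes\operatorname{SU}(2)$.

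\textbf{Your Step 2 is cleaner.} You use $Qe^{M}Q^{\dagger}=e^{QMQ^{\dagger}}$ together with surjectivity of $\exp$ on the compact connected groups $\operatorname{SO}(4)$ and $\operatorname{SU}(2)\otimes\operatorname{SU}(2)$. For the latter you can see surjectivity directly from $e^{x\otimes I+I\otimes y}=e^{x}\otimes e^{y}$. This proves exactly the stated equality and avoids covering groups altogether.

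\textbf{What each route buys.} The paper's computation gives an explicit generator-by-generator dictionary between $\mathfrak{so}(4)$ and the local Pauli generators, which is convenient for concrete calculations. Yours gives a short structural explanation of why the magic basis works and a complete group-level conclusion.
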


\begin{proof}[Proof of Lemma~\ref{lemma_:Bell_basis_transformation}]

First we choose single-qubit Pauli matrices as a vector space basis for the Lie algebra $\mathfrak{su} (2)\oplus \mathfrak{su}(2)$ and show that it maps to a basis for \(\mathfrak{s o}(4)\). The Lie algebra of $\operatorname{SU}(2) \otimes \operatorname{SU}(2)$ is $\mathfrak{su} (2)\oplus \mathfrak{su}(2)$, a basis in one-qubit Pauli matrix is given by
\[
\frac{i}{2} \left\{ \sigma_x^1, \sigma_y^1, \sigma_z^1, \sigma_x^2, \sigma_y^2, \sigma_z^2 \right\}.
\]
Here \(\sigma_x, \sigma_y\), and \(\sigma_z\) are the Pauli matrices, and
\(\sigma_\alpha^1 \sigma_\beta^2=\sigma_\alpha^1 \otimes \sigma_\beta^2\).
In the matrix form, they are

$\frac{i}{2} \sigma_x^1 = \frac{i}{2} \begin{pmatrix} 0 & 0 & 1 & 0 \\ 0 & 0 & 0 & 1 \\ 1 & 0 & 0 & 0 \\ 0 & 1 & 0 & 0 \end{pmatrix}, $
$\frac{i}{2} \sigma_y^1
= \frac{i}{2} \begin{pmatrix} 0 & 0 & -i & 0 \\ 0 & 0 & 0 & -i \\ i & 0 & 0 & 0 \\ 0 & i & 0 & 0 \end{pmatrix},$
$\frac{i}{2} \sigma_z^1 = \frac{i}{2} \begin{pmatrix} 1 & 0 & 0 & 0 \\ 0 & 1 & 0 & 0 \\ 0 & 0 & -1 & 0 \\ 0 & 0 & 0 & -1 \end{pmatrix}, $

$
\frac{i}{2} \sigma_x^2
= \frac{i}{2} \begin{pmatrix} 0 & 1 & 0 & 0 \\ 1 & 0 & 0 & 0 \\ 0 & 0 & 0 & 1 \\ 0 & 0 & 1 & 0 \end{pmatrix}, $
$\frac{i}{2} \sigma_y^2  = \frac{i}{2} \begin{pmatrix} 0 & -i & 0 & 0 \\ i & 0 & 0 & 0 \\ 0 & 0 & 0 & -i \\ 0 & 0 & i & 0 \end{pmatrix}, $
$\frac{i}{2} \sigma_z^2  = \frac{i}{2} \begin{pmatrix} 1 & 0 & 0 & 0 \\ 0 & -1 & 0 & 0 \\ 0 & 0 & 1 & 0 \\ 0 & 0 & 0 & -1 \end{pmatrix}.$

Consider the following matrices
$$
A_1 =
\frac{1}{2} \begin{pmatrix}
0 & -1 & 0 & 0 \\
1 & 0 & 0 & 0 \\
0 & 0 & 0 & -1 \\
0 & 0 & 1 & 0
\end{pmatrix},
B_1  = \frac{1}{2} \begin{pmatrix}
0 & 1 & 0 & 0 \\
-1 & 0 & 0 & 0 \\
0 & 0 & 0 & -1 \\
0 & 0 & 1 & 0
\end{pmatrix}.
$$

$$
A_2  = \frac{1}{2} \begin{pmatrix}
0 & 0 & -1 & 0 \\
0 & 0 & 0 & 1 \\
1 & 0 & 0 & 0 \\
0 & -1 & 0 & 0
\end{pmatrix},
B_2 = \frac{1}{2} \begin{pmatrix}
0 & 0 & 1 & 0 \\
0 & 0 & 0 & 1 \\
-1 & 0 & 0 & 0 \\
0 & -1 & 0 & 0
\end{pmatrix}.
$$

$$
A_3  = \frac{1}{2} \begin{pmatrix}
0 & 0 & 0 & -1 \\
0 & 0 & -1 & 0 \\
0 & 1 & 0 & 0 \\
1 & 0 & 0 & 0
\end{pmatrix},
B_3 = \frac{1}{2} \begin{pmatrix}
0 & 0 & 0 & 1 \\
0 & 0 & -1 & 0 \\
0 & 1 & 0 & 0 \\
-1 & 0 & 0 & 0
\end{pmatrix}.
$$
One can verify that these $A_i$ and $B_i$ satisfy the $\mathfrak{su}(2)$ commutation relations, i.e.,
\begin{equation}\label{equation_04_04:AB commutation relation}
[A_i, A_j] = \epsilon_{ijk} A_k, \quad [B_i, B_j] = \epsilon_{ijk} B_k, \quad [A_i, B_j] = 0
\end{equation}
Applying $Q (\bullet) Q^{\dagger}$ to all of  $A_i$ and $B_i$  gives
\[
\begin{aligned}
Q A_1 Q^\dagger
&= \frac{i}{2} \begin{pmatrix}
0 & 1 & 0 & 0 \\
1 & 0 & 0 & 0 \\
0 & 0 & 0 & 1 \\
0 & 0 & 1 & 0
\end{pmatrix},
& \quad & &
Q B_1 Q^\dagger
&= \frac{i}{2} \begin{pmatrix}
0 & 0 & -1 & 0 \\
0 & 0 & 0 & -1 \\
-1 & 0 & 0 & 0 \\
0 & -1 & 0 & 0
\end{pmatrix},
\\
Q A_2 Q^\dagger
& = \frac{i}{2} \begin{pmatrix}
0 & i & 0 & 0 \\
-i & 0 & 0 & 0 \\
0 & 0 & 0 & i \\
0 & 0 & -i & 0
\end{pmatrix},
& \quad & &
Q B_2 Q^\dagger
& = \frac{i}{2} \begin{pmatrix}
0 & 0 & i & 0 \\
0 & 0 & 0 & i \\
-i & 0 & 0 & 0 \\
0 & -i & 0 & 0
\end{pmatrix},
\\
Q A_3 Q^\dagger
&= \frac{i}{2} \begin{pmatrix}
1 & 0 & 0 & 0 \\
0 & -1 & 0 & 0 \\
0 & 0 & 1 & 0 \\
0 & 0 & 0 & -1
\end{pmatrix},
& \quad & &
Q B_3 Q^\dagger
&= \frac{i}{2} \begin{pmatrix}
-1 & 0 & 0 & 0 \\
0 & -1 & 0 & 0 \\
0 & 0 & 1 & 0 \\
0 & 0 & 0 & 1
\end{pmatrix}.
\end{aligned}
\]
Then we have established the 1-1 correspondence between $Q (\mathfrak{so}(4)) Q^\dagger$ and $\mathfrak{su}(2) \oplus \mathfrak{su}(2)$:
\[
\begin{aligned}
Q A_1 Q^\dagger  \mapsto  & \frac{i}{2} \sigma_x^2 , & \quad & Q B_1 Q^\dagger  \mapsto  &   -\frac{i}{2} \sigma_x^1, \\
Q A_2 Q^\dagger  \mapsto  &  -  \frac{i}{2} \sigma_y^2, & \quad &  Q B_2 Q^\dagger  \mapsto  &   -\frac{i}{2} \sigma_y^1,   \\
Q A_3 Q^\dagger  \mapsto  &  \frac{i}{2} \sigma_z^2,  & \quad &   Q B_3 Q^\dagger  \mapsto  &  - \frac{i}{2} \sigma_z^1.  \\
\end{aligned}
\]

Moreover for the Lie algebra $\frac{i}{2} \left\{ \sigma_x^1, \sigma_y^1, \sigma_z^1, \sigma_x^2, \sigma_y^2, \sigma_z^2 \right\}$, we have
\[
\left[ \frac{i}{2} \sigma_\alpha^1, \frac{i}{2} \sigma_\beta^1 \right] = -\frac{1}{4} [\sigma_\alpha, \sigma_\beta] \otimes I_2 = -\frac{1}{4} (2i \epsilon_{\alpha \beta \gamma} \sigma_\gamma) \otimes I_2 = -\frac{i}{2} \epsilon_{\alpha \beta \gamma} \sigma_\gamma^1,
\]
and
\[
\left[ \frac{i}{2} \sigma_\alpha^2, \frac{i}{2} \sigma_\beta^2 \right] = -\frac{i}{2} \epsilon_{\alpha \beta \gamma} \sigma_\gamma^2,
\]
and
\[
\left[ \frac{i}{2} \sigma_\alpha^1, \frac{i}{2} \sigma_\beta^2 \right] = 0.
\]
So relating the above relations and the relations \eqref{equation_04_04:AB commutation relation} and by counting dimensions, we have showed that
\[
\begin{aligned}
Q (\bullet) Q^{\dagger}: \mathfrak{so}(4)  & \longrightarrow \mathfrak{su}(2)\oplus \mathfrak{su}(2) \\
P & \mapsto  Q P Q^{\dagger}
\end{aligned}
\]
is a Lie algebra isomorphism. This induces the isomorphism between two simply-connected Lie groups
\[
\operatorname{Spin}(4) \cong \operatorname{SU}(2)\times \operatorname{SU}(2)
\]
 which are images of $\mathfrak{so}(4)$ and $\mathfrak{su}(2)\oplus \mathfrak{su}(2)$ under the exponential map respectively.

Moreover, since
\[
\operatorname{SO}(4) \cong \operatorname{Spin}(4)/\{I, -I\}
\]
and
\[
\operatorname{SU}(2)\otimes \operatorname{SU}(2) \cong \operatorname{SU}(2)\times \operatorname{SU}(2) /\{(I,I), (-I, -I)\},
\]
and
\[
Q^{\dagger} (\exp(i \pi \sigma_x^{2} - i \pi \sigma_x^{1}) ) Q
= \exp (2\pi A_1 + 2\pi B_1)
= I_4,
\]
 which implies that  the element $(-I, -I) = (\exp(- i \pi \sigma_x^{1}), \exp (i \pi \sigma_x^{2} ) )$ in $\operatorname{SU}(2)\times \operatorname{SU}(2)$ corresponds to $I_4$ in $\operatorname{SO}(4)$ and $-I_4$ in $\operatorname{Spin}(4)$,
thus
\[
\operatorname{SO}(4)  \cong \operatorname{SU}(2)\otimes \operatorname{SU}(2).
\]
\qedhere
\end{proof}

\section{Proof of commutation relations of $\mathfrak{k}$ and $\mathfrak{p}$ in $\mathfrak{su}(4)$}\label{appendix_:Proof of Commutation relations}

Given a vector space decomposition $\mathfrak{su}(4) = \mathfrak{k} \oplus \mathfrak{p}$ where
\[
\begin{aligned}
\mathfrak{k}
= &
i \left\{ \sigma_x^1, \sigma_y^1, \sigma_z^1, \sigma_x^2, \sigma_y^2, \sigma_z^2 \right\}, \\
\mathfrak{p}
= &
i \left\{ \sigma_x^1 \sigma_x^2, \sigma_x^1 \sigma_y^2, \sigma_x^1 \sigma_z^2, \sigma_y^1 \sigma_x^2, \sigma_y^1 \sigma_y^2, \sigma_y^1 \sigma_z^2, \sigma_z^1 \sigma_x^2, \sigma_z^1 \sigma_y^2, \sigma_z^1 \sigma_z^2 \right\},
\end{aligned}
\]
we show that $\mathfrak{k}, \mathfrak{p}$ satisfy the commutation relations
\[ [\mathfrak{k}, \mathfrak{k}] \subset \mathfrak{k}, \quad[\mathfrak{p}, \mathfrak{k}] \subset \mathfrak{p}, \quad[\mathfrak{p}, \mathfrak{p}] \subset \mathfrak{k}.  \]

\begin{proof}
For Pauli matrices, we have
$$
[\sigma_\alpha, \sigma_\beta] = 2i \epsilon_{\alpha \beta \gamma} \sigma_\gamma,
$$
$$
\{\sigma_\alpha, \sigma_\beta\} = 2 \delta_{\alpha \beta} I,
$$
\[
\begin{aligned}
[\sigma_\alpha \otimes \sigma_\beta, \sigma_\gamma \otimes \sigma_\delta]
= &
\sigma_\alpha \sigma_\gamma  \otimes [\sigma_\beta, \sigma_\delta]
+
[\sigma_\alpha, \sigma_\gamma] \otimes \sigma_\delta \sigma_\beta
\\
= &
\sigma_\alpha \sigma_\gamma  \otimes \{\sigma_\beta, \sigma_\delta\}
-
\{\sigma_\alpha, \sigma_\gamma\} \otimes \sigma_\delta \sigma_\beta.
\end{aligned}
\]
So
\[
\left[ \frac{i}{2} \sigma_\alpha^1, \frac{i}{2} \sigma_\beta^1 \right] = -\frac{1}{4} [\sigma_\alpha, \sigma_\beta] \otimes I_2 = -\frac{1}{4} (2i \epsilon_{\alpha \beta \gamma} \sigma_\gamma) \otimes I_2 = -\frac{i}{2} \epsilon_{\alpha \beta \gamma} \sigma_\gamma^1 \in \mathfrak{k}.
\]
and
\[
\left[ \frac{i}{2} \sigma_\alpha^2, \frac{i}{2} \sigma_\beta^2 \right] = -\frac{i}{2} \epsilon_{\alpha \beta \gamma} \sigma_\gamma^2 \in \mathfrak{k}.
\]
And
\[
\left[ \frac{i}{2} \sigma_\alpha^1, \frac{i}{2} \sigma_\beta^2 \right] = 0.
\]
So
\[
[\mathfrak{k}, \mathfrak{k} ] \subset \mathfrak{k}.
\]

The generator of $\mathfrak{p}$ is $\frac{i}{2} \sigma_\alpha^1 \sigma_\beta^2$, so
$$
\left[ \frac{i}{2} \sigma_\gamma^1, \frac{i}{2} \sigma_\alpha^1 \sigma_\beta^2 \right] = -\frac{1}{4} [\sigma_\gamma, \sigma_\alpha] \otimes \sigma_\beta = -\frac{1}{4} (2i \epsilon_{\gamma \alpha \delta} \sigma_\delta) \otimes \sigma_\beta = -\frac{i}{2} \epsilon_{\gamma \alpha \delta} \sigma_\delta^1 \sigma_\beta^2 \in \mathfrak{p},
$$
and
$$
\left[ \frac{i}{2} \sigma_\gamma^2, \frac{i}{2} \sigma_\alpha^1 \sigma_\beta^2 \right] = -\frac{1}{4} \sigma_\alpha \otimes [\sigma_\gamma, \sigma_\beta] = -\frac{i}{2} \epsilon_{\gamma \beta \delta} \sigma_\alpha^1 \sigma_\delta^2 \in \mathfrak{p}.
$$
So
\[
[\mathfrak{k}, \mathfrak{p} ] \subset \mathfrak{p}.
\]

For two generators $\frac{i}{2} \sigma_\alpha^1 \sigma_\beta^2$ and  $\frac{i}{2} \sigma_\gamma^1 \sigma_\delta^2$ in $\mathfrak{p}$, we have
$$
\begin{aligned}
\left[
\frac{i}{2} \sigma_\alpha^1 \sigma_\beta^2,
\frac{i}{2} \sigma_\gamma^1 \sigma_\delta^2
\right]
= &
-\frac{1}{4} [\sigma_\alpha^1 \sigma_\beta^2, \sigma_\gamma^1 \sigma_\delta^2]
\\
= &
-\frac{1}{4} [\sigma_\alpha \otimes \sigma_\beta, \sigma_\gamma \otimes \sigma_\delta]
\\
=  &
-\frac{1}{4}
\left( \sigma_\alpha \sigma_\gamma  \otimes \{\sigma_\beta, \sigma_\delta\}
-
\{\sigma_\alpha, \sigma_\gamma\} \otimes \sigma_\delta \sigma_\beta \right)
\\
= &
-\frac{1}{4} \left( (\delta_{\alpha \gamma} I + i \epsilon_{\alpha \gamma \mu} \sigma_\mu) \otimes (2 \delta_{\beta \delta} I) - (2 \delta_{\alpha \gamma} I) \otimes (\delta_{\beta \delta} I + i \epsilon_{\beta \delta \nu} \sigma_\nu) \right)
\\
= &
-\frac{1}{4} \left( 2 \delta_{\alpha \gamma} \delta_{\beta \delta} I \otimes I + 2i \epsilon_{\alpha \gamma \mu} \delta_{\beta \delta} \sigma_\mu \otimes I - 2 \delta_{\alpha \gamma} \delta_{\beta \delta} I \otimes I - 2i \delta_{\alpha \gamma} \epsilon_{\beta \delta \nu} I \otimes \sigma_\nu \right)
\\
= &
-\frac{i}{2} \left( \epsilon_{\alpha \gamma \mu} \delta_{\beta \delta} \sigma_\mu^1 - \delta_{\alpha \gamma} \epsilon_{\beta \delta \nu} \sigma_\nu^2 \right) \in \mathfrak{k}.
\end{aligned}
$$
Hence,
\[
[\mathfrak{p}, \mathfrak{p} ]\subset \mathfrak{k}.
\]
\qedhere
\end{proof}

\section{Calculation of the root system}\label{appendix_:Calculation of the root system}
Here we give a detailed proof of  \eqref{equation_:root system of su_4}.
Recall that $X_{i}$ are elements from $\mathfrak{k}$ and $\mathfrak{p}$,
\[
\begin{aligned}
&X_1 = i \sigma_x^1 , & & X_2 = i \sigma_y^1 , & & X_3 = i \sigma_z^1 , \\
&X_4 = i \sigma_x^2 , & & X_5 = i \sigma_y^2 , &&X_6 = i \sigma_z^2 , \\
&X_7 = i \sigma_x^1 \sigma_x^2 ,& &X_8 = i \sigma_x^1 \sigma_y^2 , &&X_9 = i \sigma_x^1 \sigma_z^2 , \\
&X_{10} = i \sigma_y^1 \sigma_x^2 ,& &X_{11} = i \sigma_y^1 \sigma_y^2 , & &X_{12} = i \sigma_y^1 \sigma_z^2 , \\
&X_{13} = i \sigma_z^1 \sigma_x^2 , & &X_{14} = i \sigma_z^1 \sigma_y^2 , & &X_{15} = i \sigma_z^1 \sigma_z^2 .\\
\end{aligned}
\]
We already know that $\mathfrak{a}^{\mathbb{C}} = \langle X_7, X_{11}, X_{15}\rangle_{\mathbb{C}}$ forms a Cartan subalgebra for $\mathfrak{g}^{\mathbb{C}}$. Then we can get  the following table of the Lie bracket calculations, where $X_j$ are the first column elements and $X_k$ are the first row elements.

\begin{center}
\begin{tabular}{l|lllllll|lllllll}
\(\left[X_j, X_k\right]/2\) & \(X_1\) & \(X_2\) & \(X_3\) & \(X_4\) & \(X_5\) & \(X_6\) &  & \(X_8\) & \(X_9\) & \(X_{10}\) &  & \(X_{12}\) & \(X_{13}\) & \(X_{14}\)
\\
\hline
\(X_7\) & 0 & \(-X_{13}\) & \(X_{10}\) & 0 & \(-X_9\) & \(X_8\) &  & \(-X_6\) & \(X_5\) & \(-X_3\) &  & 0 & \(X_2\) & 0
\\
\(X_{11}\) & \(X_{14}\) & 0 & \(-X_8\) & \(X_{12}\) & 0 & \(-X_{10}\) &  & \(X_3\) & 0 & \(X_6\) &  & \(-X_4\) & 0 & \(-X_1\)
\\
\(X_{15}\) & \(-X_{12}\) & \(X_9\) & 0 & \(-X_{14}\) & \(X_{13}\) & 0 &  & 0 & \(-X_2\) & 0 &  & \(X_1\) & \(-X_5\) & \(X_4\)
\end{tabular}
\end{center}

Let $E_{(a,b,c)}$ ($a,b,c\in \mathbb{C}$) be the vector in $\mathfrak{g}^{\mathbb{C}}$ such that
\[
\left\{
\begin{aligned}
[X_7, E_{(a,b,c)}] = a\cdot E_{(a,b,c)}, \\
[X_{11}, E_{(a,b,c)}] = b\cdot E_{(a,b,c)}, \\
[X_{15}, E_{(a,b,c)}] = c\cdot E_{(a,b,c)}. \\
\end{aligned}
\right.
\]
Then we have
\[
\begin{aligned}
E_{(0,2i,2i)} &= X_1 - i X_{14} - X_4 + i X_{12}, \\
E_{(0,2i,-2i)} &= X_1 - i X_{14} + X_4 - i X_{12}, \\
E_{(0,-2i,2i)} &= X_1 + i X_{14} + X_4 + i X_{12}, \\
E_{(0,-2i,-2i)} &= X_1 + i X_{14} - X_4 - i X_{12}, \\
E_{(2i,0,2i)} &= X_2 + i X_{13} - X_5 - i X_{9}, \\
E_{(2i,0,-2i)} &= X_2 + i X_{13} + X_5 + i X_{9}, \\
E_{(-2i,0,2i)} &= X_2 - i X_{13} + X_5 - i X_{9}, \\
E_{(-2i,0,-2i)} &= X_2 - i X_{13} - X_5 + i X_{9}, \\
E_{(2i,2i,0)} &= X_3 - i X_{10} - X_6 + i X_{8}, \\
E_{(2i,-2i,0)} &= X_3 - i X_{10} + X_6 - i X_{8}, \\
E_{(-2i,2i,0)} &= X_3 + i X_{10} + X_6 + i X_{8}, \\
E_{(-2i,-2i,0)} &= X_3 + i X_{10} - X_6 - i X_{8}. \\
\end{aligned}
\]
This gives the root system \eqref{equation_:root system of su_4} of the complexification of $\mathfrak{su}(4)$ with respect to the Cartan subalgebra $\mathfrak{a}^{\mathbb{C}}$. For example, $E_{(0,2i,2i)} = X_1 - i X_{14} - X_4 + i X_{12}$ spans the space $\mathfrak{g}^{\mathbb{C}}_{e_{y}+ e_{z}}$, $E_{(0,2i,-2i)} = X_1 - i X_{14} + X_4 - i X_{12}, $ spans the space $\mathfrak{g}^{\mathbb{C}}_{e_{y}- e_{z}}$, etc.

\section{A more direct proof of uniqueness}\label{appendix_:more_direct_proof_of_uniqueness}
In this appendix we will give a more direct proof of uniqueness in Proposition~\ref{proposition_:criterion_equivalence_classes_up_to_global_phases} only using the criterion in Proposition~\ref{proposition_:equivalence_relation_global_phases}.
\begin{proof}
As for uniqueness, suppose
\[
\vec{c} = [c_1, c_2, c_3], \quad \vec{c'} = [c_1^{'}, c_2^{'}, c_3^{'}],
\]
are two elements in the set $P$ and represent the same equivalence class. Then by definition we have
\begin{equation}\label{equation_:equivalence class_c}
\pi/2 > c_1 \geq c_2 \geq c_3 \geq 0,
c_1+c_2 \leq \pi/2,
\text { if } c_3=0, \text { then } c_1 \leq \pi/4,
\end{equation}
and
\begin{equation}\label{equation_:equivalence class_c_prime}
\pi/2> c_1^{'} \geq c_2^{'} \geq c_3^{'} \geq 0,
c_1^{'}+c_2^{'} \leq \pi/2,
\text { if } c_3^{'}=0, \text { then } c_1^{'} \leq \pi/4.
\end{equation}
Since two elements $\vec{c} = [c_1, c_2, c_3], \vec{c'} = [c_1^{'}, c_2^{'}, c_3^{'}]$ are in the same orbit under the group action, then there exists $g\in \Gamma_{p}$ such that $ \vec{c'}  = g \cdot \vec{c}$.
By the semi-direct product structure of the projective affine Weyl group
\[
\Gamma_p = \mathfrak{a}_p  \rtimes W(U,K),
\]
since $\mathfrak{a}_p$ is a normal subgroup of $\Gamma_{p}$, so any element $g\in \Gamma_p$ can be represented as a product $g = r \cdot s $ for some $r \in \mathfrak{a}_p$ and $s\in W(U,K)$.
Therefore, $ \vec{c'}  = r \cdot s  \cdot \vec{c} $. By the form of elements in the Weyl group (cf. Corollary~\ref{corollary_:Weyl group structure}), we only need to consider the following two cases (Case A and Case B).

\textbf{Case A: } If $s$ is a permutation of entries of $\vec{c}$, let $s\cdot \vec{c}$ be the vector after the action of $s$, then
\[
\frac{\pi}{2} > (s\cdot \vec{c})_1, (s\cdot \vec{c})_2, (s\cdot \vec{c})_3 \geq 0.
\]
Thus the only possible translation $r$ is the identity $0$. Thus $\vec{c'} = s\cdot \vec{c}$. Since $ [c_1, c_2, c_3], [c_1^{'}, c_2^{'}, c_3^{'}]$ satisfy the descending order, then $s$ is the identity, so $\vec{c'}  = \vec{c}$.

\textbf{Case B: } If $s$ is a permutation with sign flips of two entries, we consider the following four cases (Case 1-4):

\textbf{Case 1: } if $c_1 = c_2 = c_3 = 0 $ , then for any $s\in W(U,K)$, $s\cdot \vec{c} = \vec{c}$, and the only translation to make the entries of $s \cdot \vec{c}$ in the interval $[0,\pi/2)$ is the identity $0$. Hence, in this case $\vec{c} = \vec{c'}$.

\textbf{Case 2: } if $c_1>0, c_2=c_3 = 0 $, since via a shift the last two entries $[0,0]$ should lie in the interval $[0,\pi/2)$, then $s\cdot \vec{c} $ can only be in the following form
\[
s\cdot \vec{c} = [c_1, 0, 0]  \quad \text{ or } \quad s\cdot \vec{c} = [-c_1, 0 ,0 ]
\]
to make the entries coincide.
If  $s\cdot \vec{c} = [c_1, 0, 0]$, then the translation $r$ must be $0$, thus $ \vec{c'}  = \vec{c} $. If $s\cdot \vec{c} = [-c_1, 0 ,0 ]$, then the translation $r$ must be $r\cdot s \cdot \vec{c} = [\pi/2-c_1, 0 ,0 ]$, which leads to $c_1^{'} = \pi/2 - c_1$. Since $c_1 \leq \pi/4, c_1^{'} \leq \pi/4$ by \eqref{equation_:equivalence class_c} and \eqref{equation_:equivalence class_c_prime}, so $c_1 = c_1^{'} = \pi/4$. This case leads to $\vec{c} = \vec{c'}$.

\textbf{Case 3: } if $c_1, c_2 > 0 , c_3 = 0$, then $s\cdot \vec{c} $ can only be in the following forms since via a shift the last  entry lies in the interval $[0,\pi/2)$,
\[
\begin{aligned}
[c_1, c_2, 0], [-c_1, c_2, 0], [c_1, -c_2, 0], [-c_1, -c_2, 0], \\
 [c_2, c_1, 0], [-c_2, c_1, 0], [c_2, -c_1, 0], [-c_2, -c_1, 0].
\end{aligned}
\]
To make the entries in the interval $[0,\pi/2)$, after taking a shift $r$ and considering the descending order in $\vec{c'} = [c_1^{'}, c_2^{'}, c_3^{'}]$ and $c_1 \geq c_2$ and $c_1+ c_2 \leq \pi/2, c_1^{'}+ c_2^{'} \leq \pi/2$, we can only get the following possible forms
\begin{enumerate}
\item
$ \ [c_1^{'}, c_2^{'}, c_3^{'}]  = [c_1, c_2, 0]$,
\item
$ [c_1^{'}, c_2^{'}, c_3^{'}] = [\pi/2-c_1, c_2, 0]$,
\item
$  [c_1^{'}, c_2^{'}, c_3^{'}]  = [\pi/2-c_2, c_1, 0]$,
\item
$ [c_1^{'}, c_2^{'}, c_3^{'}]  = [\pi/2-c_2, \pi/2-c_1, 0]$,
\end{enumerate}
\noindent
\textbf{3-1} gives $c_1 = c_2 = c_1^{'} = c_2^{'}$. \\
\noindent
\textbf{3-2} gives $\pi/2-c_1 = c_1^{'}$, and by \eqref{equation_:equivalence class_c} and \eqref{equation_:equivalence class_c_prime} this leads to  $c_1 =  c_1^{'} = \pi/4$. \\
\noindent
\textbf{3-3} gives $c_1^{'} + c_2 = \pi/2, c_2^{'} = c_1$, thus $c_2\geq c_2^{'} = c_1\geq c_2$, leading to $c_2=c_1 = c_2^{'}$. So $c_1= c_2 =c_1^{'}  = c_2^{'} =  \pi/4$. \\
\noindent
\textbf{3-4} gives $ c_1^{'}+ c_2 = \pi/2, c_2^{'}+c_1 =\pi/2 $, this gives $\pi \geq c_1^{'}+ c_2+ c_2^{'}+c_1=\pi $. By \eqref{equation_:equivalence class_c} and \eqref{equation_:equivalence class_c_prime}, we have $c_1+c_2 = \pi/2$ and $c_1^{'}+c_2^{'} = \pi/2$,  so $c_1 = c_1^{'}$. Since $c_2 , c_2^{'} \leq c_1 = c_1^{'} \leq \pi/4$, so it can only lead to $c_1= c_2 =c_1^{'}  = c_2^{'} =  \pi/4 $.

\textbf{Case 4: } if $ c_3 > 0 $, then the only possible cases after the action of $r\cdot s$ are
\begin{enumerate}
\item
$\{c_1^{'}, c_2^{'}, c_3^{'} \}= \{c_1, \pi /2 - c_3, \pi/2  - c_2\},  $
\item
$\{c_1^{'}, c_2^{'}, c_3^{'}\} = \{\pi /2 -  c_3, c_2,  \pi /2 -  c_1\}, $
\item
$\{c_1^{'}, c_2^{'}, c_3^{'}\} = \{\pi /2 -  c_2,  \pi /2 - c_1, c_3\}. $
\end{enumerate}
where the sets on both side are identical.
Considering the descending order and $c_1+c_2 \leq \pi/2, c_1^{'}+c_2^{'} \leq \pi/2$, the above possible cases can only be the following three types.
\begin{enumerate}
\item
$\vec{c'} = [c_1^{'}, c_2^{'}, c_3^{'}] = [\pi /2 - c_3, \pi/2  - c_2, c_1]$,
\item
$\vec{c'}= [c_1^{'}, c_2^{'}, c_3^{'}]  = [\pi /2 -  c_3,  \pi /2 -  c_1, c_2], $
\item
$\vec{c'} = [c_1^{'}, c_2^{'}, c_3^{'}] = [\pi /2 -  c_2,  \pi /2 - c_1, c_3]. $
\end{enumerate}
\noindent
\textbf{4-1} gives
\[
c_1^{'}+ c_3 = \pi/2, c_2^{'}+ c_2 = \pi/2, c_1 = c_3^{'}.
\]
So
\[
\pi = c_1^{'}+ c_3+ c_2^{'}+ c_2\leq  c_1^{'}+ c_1+ c_2^{'}+ c_2 \leq \pi,
\]
and the identity holds only if $c_1 = c_2 = c_1^{'} = c_2^{'}= \pi/4$. This leads to $c_3 = c_3^{'}= \pi/4$.
\\
\noindent
\textbf{4-2} gives
\[
c_1^{'}+ c_3 =\pi/2, c_2^{'}+ c_1 = \pi/2, c_3^{'}= c_2.
\]
So
\[
\pi = c_1^{'}+ c_3 + c_2^{'}+ c_1 \leq c_1^{'}+ c_2 + c_2^{'}+ c_1 \leq \pi,
\]
and the identity holds only if $c_1 = c_2 = c_1^{'} = c_2^{'}= \pi/4$. This leads to $c_3 = c_3^{'}= \pi/4$.
\\
\noindent
\textbf{4-3} gives
\[
c_1^{'} + c_2 =\pi/2, c_2^{'}+ c_1 =\pi/2, c_3^{'} = c_3.
\]
So
\[
\pi = c_1^{'} + c_2 +  c_2^{'}+ c_1 \leq \pi,
\]
and the identity holds only if $c_1 = c_2 = c_1^{'} = c_2^{'}= \pi/4$.
\\
\noindent
Thus all three cases  can only lead to
\[
[c_1^{'}, c_2^{'}, c_3^{'}] = [c_1, c_2, c_3].
\]

In sum, we have concluded that $ \vec{c'}  = \vec{c} $, which proves the uniqueness of each element  in the projective tetrahedral cell $P$.
\qedhere
\end{proof}

\end{document}